\documentclass[hidelinks,runningheads]{llncs}

\usepackage[english]{babel}
\usepackage[T1]{fontenc}
\usepackage[utf8]{luainputenc}

\usepackage{algorithmicx}
\usepackage{algpseudocode}

\usepackage{amsmath}
\usepackage{amssymb}
\usepackage{mathtools}
\usepackage{thm-restate}

\usepackage[english]{varioref}

\usepackage{pifont}
\usepackage{todonotes}

\usepackage{multirow}
\usepackage{makecell}
\usepackage{colortbl} 
\usepackage{caption} 
\usepackage{tikz}
\usetikzlibrary{automata,positioning,arrows.meta,patterns}
\tikzset{every state/.style={minimum size=0pt}}

\usetikzlibrary{calc,intersections,through,backgrounds,matrix,patterns}
\usepackage{pgfplots}
\usepgfplotslibrary{fillbetween}
\pgfdeclarelayer{bg}
\pgfsetlayers{bg,main}

\tikzset{
    hatch distance/.store in=\hatchdistance,
    hatch distance=10pt,
    hatch thickness/.store in=\hatchthickness,
    hatch thickness=0.3pt
}

\makeatletter
\pgfdeclarepatternformonly{mydots}
{\pgfqpoint{-1pt}{-1pt}}
{\pgfqpoint{1pt}{1pt}}
{\pgfqpoint{10pt}{10pt}}{\pgfpathcircle{\pgfqpoint{0pt}{0pt}}{.5pt}\pgfusepath{fill}}

\pgfdeclarepatternformonly[\hatchdistance,\hatchthickness]{northeast}
{\pgfqpoint{0pt}{0pt}}
{\pgfqpoint{\hatchdistance}{\hatchdistance}}
{\pgfpoint{\hatchdistance-1pt}{\hatchdistance-1pt}}{
    \pgfsetcolor{\tikz@pattern@color}
    \pgfsetlinewidth{\hatchthickness}
    \pgfpathmoveto{\pgfqpoint{0pt}{0pt}}
    \pgfpathlineto{\pgfqpoint{\hatchdistance}{\hatchdistance}}
    \pgfusepath{stroke}
}

\pgfdeclarepatternformonly[\hatchdistance,\hatchthickness]{northwest}
{\pgfqpoint{0pt}{0pt}}
{\pgfqpoint{\hatchdistance}{\hatchdistance}}
{\pgfpoint{\hatchdistance-1pt}{\hatchdistance-1pt}}{
    \pgfsetcolor{\tikz@pattern@color}
    \pgfsetlinewidth{\hatchthickness}
    \pgfpathmoveto{\pgfqpoint{\hatchdistance}{0pt}}
    \pgfpathlineto{\pgfqpoint{0pt}{\hatchdistance}}
    \pgfusepath{stroke}
}

\pgfdeclarepatternformonly[\hatchdistance,\hatchthickness]{horizontal}
{\pgfqpoint{0pt}{0pt}}
{\pgfqpoint{\hatchdistance}{\hatchdistance}}
{\pgfpoint{\hatchdistance-1pt}{\hatchdistance-1pt}}{
    \pgfsetcolor{\tikz@pattern@color}
    \pgfsetlinewidth{\hatchthickness}
    \pgfpathmoveto{\pgfqpoint{0pt}{0pt}}
    \pgfpathlineto{\pgfqpoint{\hatchdistance}{0pt}}
    \pgfusepath{stroke}
}

\pgfdeclarepatternformonly[\hatchdistance,\hatchthickness]{vertical}
{\pgfqpoint{0pt}{0pt}}
{\pgfqpoint{\hatchdistance}{\hatchdistance}}
{\pgfpoint{\hatchdistance-1pt}{\hatchdistance-1pt}}{
    \pgfsetcolor{\tikz@pattern@color}
    \pgfsetlinewidth{\hatchthickness}
    \pgfpathmoveto{\pgfqpoint{0pt}{0pt}}
    \pgfpathlineto{\pgfqpoint{0pt}{\hatchdistance}}
    \pgfusepath{stroke}
}
\makeatother

\makeatletter
\@ifclassloaded{llncs}{\newcommand{\lncsqed}{\qed}
}{\newcommand{\lncsqed}{}
}
\makeatother

\usepackage{xargs}
\newcommandx{\mytodo}[2][1=]{\todo[linecolor=red,backgroundcolor=red!25,bordercolor=red,#1]{#2}}

\usepackage{xspace} \newcommand{\cmark}{\ding{51}}\newcommand{\xmark}{\ding{55}}

\newcommand{\op}[1]{\ensuremath{\mathsf{#1}}\xspace}

\newcommand{\mc}[1]{\mathcal{#1}}

\newcommand{\complclass}[1]{\op{#1}}

\newcommand{\mrm}[1]{\ensuremath{\mathrm{#1}}}
\newcommand{\mbb}[1]{\mathbb{#1}}
\newcommand{\EDA}{\mrm{EDA}\xspace}
\newcommand{\EDAF}{\mrm{EDA}_F\xspace}
\newcommand{\IDA}{\mrm{IDA}\xspace}
\newcommand{\IDAF}{\mrm{IDA}_F\xspace}

\newcommand{\DBA}{\op{DBA}}
\newcommand{\NBA}{\op{NBA}}

\newcommand{\DWA}{\op{DWA}}

\newcommand{\PBA}{\op{PBA}}
\newcommand{\PCoBA}{\op{PCA}}
\newcommand{\PWA}{\op{PWA}}

\newcommand{\HPBA}{\op{HPBA}}
\newcommand{\SPBA}{\op{SPBA}}

\newcommand{\scc}{\op{scc}}

\newcommand{\cupdot}{\mathbin{\mathaccent\cdot\cup}}

\newcommand{\underlyingnba}[1]{#1^{\lhd}}

\newcommand{\Nat}{\mathbb{N}}
 
\usepackage{graphicx}

\makeatletter
\@ifpackageloaded{hyperref}{}{
\RequirePackage[bookmarks,unicode,colorlinks=true]{hyperref}
}
\def\@citecolor{blue}\def\@urlcolor{blue}\def\@linkcolor{blue}
\def\orcidID#1{\smash{\href{http://orcid.org/#1}{\protect\raisebox{-1.25pt}{\protect\includegraphics{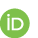}}}}}
\makeatother

\usepackage{marvosym} 
\begin{document}
\title{Ambiguity, Weakness, and Regularity\\ in Probabilistic Büchi Automata}
\author{Christof Löding\textsuperscript{(\Letter)} \and
  Anton Pirogov\textsuperscript{(\Letter)} \orcidID{0000-0002-5077-7497}
\thanks{This work is supported by the German research council (DFG) Research Training Group 2236 UnRAVeL}
\thanks{The final authenticated publication is available online at \texttt{https://doi.org/10.1007/978-3-030-45231-5\_27}}
}
\authorrunning{C. Löding and A. Pirogov}
\institute{RWTH Aachen University, Templergraben 55, 52062 Aachen, Germany \\
\email{\{loeding,pirogov\}@cs.rwth-aachen.de} }
\maketitle              \begin{abstract}
Probabilistic Büchi automata are a natural generalization of PFA to infinite words, but
have been studied in-depth only rather recently and many interesting questions are still
open. PBA are known to accept, in general, a class of languages that goes beyond the
regular languages. In this
work we extend the known classes of restricted PBA which are still regular,
strongly relying on notions concerning ambiguity in classical $\omega$-automata.
Furthermore, we investigate the expressivity of the not yet considered but natural class
of weak PBA, and we also show that the regularity problem for weak PBA is undecidable.
 \keywords{probabilistic \and Büchi \and automata \and ambiguity \and weak}
\end{abstract}

\section{Introduction}
\label{sec:intro}

Probabilistic finite automata (PFA) are defined similarly to
nondeterministic finite automata (NFA) with the difference that each
transition is equipped with a probability (a value between 0 and 1),
such that for each pair of state and letter, the probabilities of the
corresponding outgoing transitions sum up to 1.  PFA have been
investigated already in the 1960ies in the seminal paper of Rabin
\cite{rabin1963probabilistic}. But while the development of the theory
of automata on infinite words also started around the same time
\cite{buchi1966symposium}, the model of probabilistic automata on
infinite words has first been studied systematically in
\cite{baier2005recognizing}. The central model in this theory is the
one of probabilistic Büchi automata (PBA), which are syntactically
the same as PFA. The acceptance condition for runs is defined as for
standard nondeterministic Büchi automata (NBA): a run on an infinite
word is accepting if it visits an accepting state infinitely often
(see \cite{Thomas90,Thomas97} for an introduction to the theory of
automata on infinite words). In general, for probabilistic automata
one distinguishes different criteria of when a word is accepted. In
the positive semantics, it is required that the probability of the set
of accepting runs is greater than 0, in the almost-sure semantics it
has to be 1, and in the threshold semantics it has to be greater than
a given value $\lambda$ between 0 and 1.
It is easy to see that PFA with positive or almost-sure semantics can
only accept regular languages, because these conditions correspond to
the fact that there is an accepting run or that all runs are
accepting. For infinite words the situation is different, because
single runs on infinite words can have probability 0. Therefore, the
existence of an accepting run is not the same as the set of accepting
runs having probability greater than 0 (similarly, almost-sure
semantics is not equivalent to all runs being accepting). And in fact,
it turns out that PBA with positive (or almost-sure) semantics can
accept non-regular languages \cite{baier2005recognizing}. This
naturally raises the question under which conditions a PBA accepts a
regular language.

In \cite{baier2005recognizing} a subclass of PBA that accept only
regular languages (under positive semantics) is introduced, called
uniform PBA. The definition uses a semantic condition on the
acceptance probabilities in end components of the PBA. A syntactic
class of PBA that accepts only regular languages (under positive and
almost-sure semantics) are the hierarchical PBA (HPBA) introduced in
\cite{chadha2011randomization}. The state space of HPBA is
partitioned into a sequence of layers such that for each pair of state
and letter there is at most one transition that does not increase the
layer. Decidability and expressiveness questions for HPBA have been
studied in more detail in \cite{ChadhaSVB15,ChadhaS017}.
While HPBA accept only regular languages for positive and almost-sure
semantics, it is not very hard to come up with HPBA that accept
non-regular languages under the threshold semantics
\cite{chadha2011randomization,ChadhaSVB15} (see also the example in
Figure~\ref{fig:pwa_fig}(a) on
page~\pageref{fig:pwa_fig}). Restricting HPBA further such that there
are only two layers and all accepting states are on the first layer
leads to a class of PBA (called simple PBA, SPBA) that accept only
regular languages even under threshold semantics
\cite{chadha2011threshold}.

In this paper, we are also interested in the question under which
conditions PBA accept only regular languages. We identify syntactical
patterns in the transition structure of PBA whose absence guarantees
regularity of the accepted language. These patterns have been used
before for the classification of the degree of ambiguity of NFA and
NBA \cite{weber1991degree,rabinovich-faba,faba-dlt}. The degree of
ambiguity of a nondeterministic automaton corresponds to the maximal
number of accepting runs that a single input word can have. For NBA,
the ambiguity can (roughly) be uncountable, countable, or finite.  For
positive semantics, we show that PBA whose transition structure
corresponds to at most countably ambiguous NBA, accept only regular
languages. For almost-sure semantics, we need a slightly stronger
condition for ensuring regularity. But both classes that we identify
are easily seen to strictly subsume the class of HPBA. For the
emptiness and universality problems for these classes we obtain the
same complexities as the ones for HPBA.  In the case of threshold
semantics, we show that finite ambiguity is a sufficient condition for
regularity of the accepted language, generalizing a corresponding
result for PFA from \cite{fijalkow2017probabilistic}. The class of
finitely ambiguous PBA strictly subsumes the class of SPBA.

Besides the relation between regularity and ambiguity in PBA, we also
investigate the class of weak PBA (abbreviated PWA). In weak Büchi
automata, the set of accepting states is a union of strongly connected
components of the automaton. We show that PWA with almost-sure
semantics define the same class of languages as PBA with almost-sure
semantics (which implies that with positive semantics PWA define the
same class as probabilistic co-Büchi automata). This is in correspondence
to results for non-probabilistic automata: weak automata with
universal semantics (a word is accepted if all runs are accepting)
define the same class as Büchi automata with universal semantics, and
nondeterministic weak automata correspond to nondeterministic co-Büchi
automata (see, e.g., \cite{LodingT00}, where weak automata are called
weak parity automata). Furthermore, it is known that universal Büchi
automata, respectively nondeterministic co-Büchi automata, can be
transformed into equivalent deterministic automata (with the same
acceptance condition). An analogue of deterministic automata in the
probabilistic setting are the so-called 0/1 automata, in which each
word is either accepted with probability 0 or with probability 1. It
is known that almost-sure PBA can be transformed into equivalent 0/1
PBA (see the proof of Theorem 4.13 in
\cite{baier2012probabilistic}). Concerning weak automata, a language
can be accepted by a deterministic weak automaton (DWA) if, and only
if, it can be accepted by a deterministic Büchi and by a deterministic
co-Büchi automaton (this follows from results in
\cite{Landweber69}, see \cite{BoigelotJW05} for a more direct construction). We show an analogous
result in the probabilistic setting: The class of languages defined by
0/1 PWA corresponds to the intersection of the two classes defined by
PWA with almost-sure semantics and with positive semantics, respectively. It
turns out that this class contains only regular languages, that is,
0/1 PWA define the same class as DWA.

We also show that the regularity problem for PBA is
undecidable (the problem of deciding for a given PBA whether its
language is regular). For PBA with positive semantics this is not
surprising, as for those already the emptiness problem is undecidable
\cite{baier2012probabilistic}. However, for PBA with almost-sure
semantics the emptiness and universality problems are decidable \cite{BaierBG08,BAierBG09,chadha2011randomization}. We
show that regularity is undecidable already for PWA with almost-sure
or with positive semantics. The proof also yields that it is
undecidable for a fixed regular language whether a given PWA accepts
this language.

This work is organized as follows. After introducing basic notations in
Section~\ref{sec:prelim} we first
characterize various regular subclasses of PBA that we derive
from ambiguity patterns in Section~\ref{sec:amb}
and then we derive some related complexity results in Section~\ref{sec:compl}.
In Section~\ref{sec:weak} we present our results concerning weak probabilistic
automata and in Section~\ref{sec:conclusion} we conclude.
 \vspace{-2mm}
\section{Preliminaries}
\vspace{-2mm}
\label{sec:prelim}

First we briefly review some basic definitions. 

If $\Sigma$ is a finite alphabet, then $\Sigma^*$ is the set of all finite and
$\Sigma^\omega$ is the set of all infinite \emph{words} $w=w_0w_1\ldots$ with
$w_i\in\Sigma$. For a word $w$ we denote by $w(i)$ the $i$-th symbol $w_i$.

Classical automata used in this work have usually the shape
$(Q, \Sigma, \Delta, Q_0, F)$, where $Q$ is a finite set of states, $\Sigma$ a finite alphabet,
$\Delta \subseteq Q\times \Sigma \times Q$ is the transition relation and
$Q_0, F \subseteq Q$ are the sets of initial and final states, respectively.

We write $\Delta(p,a) := \{q \in Q \mid (p,a,q) \in \Delta \}$ to denote the set of
\emph{successors} of $p \in Q$ on symbol $a \in \Sigma$, and $\Delta(P,w)$ for
$P\subseteq Q, w\in \Sigma^*$ with the usual meaning, i.e., states reachable on word $w$
from any state in $P$.

A \emph{run} of an automaton on a word $w\in \Sigma^\omega$ is an infinite sequence of
states $q_0, q_1, \ldots$ starting in some $q_0 \in Q_0$ such that $(q_i, w(i), q_{i+1})
\in \Delta$ for all $i\geq 0$. We say that a set of runs is \emph{separated (at time $i$)}
when the prefixes of length $i$ of those runs are pairwise different.

As usual, an automaton is \emph{deterministic} if $|Q_0|=1$ and $|\Delta(p,a)|\leq 1$ for
all $p\in Q, a\in\Sigma$, and \emph{nondeterministic} otherwise. For deterministic
automata we may use a transition function $\delta:Q\times \Sigma \to Q$ instead of a
relation.

Probabilistic automata we consider have the shape $(Q,\Sigma,\delta,\mu_0,F)$, i.e., the
transition relation is replaced by a function $\delta:Q\times\Sigma\times Q\to[0,1]$ which
for each state and symbol assigns a probability distribution on successor states (i.e.
$\sum_{q\in Q}\delta(p,a,q) = 1$ for all $p\in Q, a\in\Sigma$), and $\mu_0:Q\to [0,1]$ with
$\sum_{q\in Q}\mu_0(q) = 1$ is the initial probability distribution on states.
The \emph{support} of a distribution $\mu$ is the set $\op{supp}(\mu):=\{ x \mid \mu(x)>0 \}$.
Similarly as above, we may write $\delta(\mu,w)$ and mean the resulting probability distribution
after reading $w\in \Sigma^*$, when starting with probability distribution $\mu$.

For a probabilistic automaton $\mc{A}$ the \emph{underlying automaton}
$\underlyingnba{\mc{A}}$ is given by recovering the transition relation
$\Delta := \{(p,x,q) \mid \delta(p,x,q)>0 \}$ of positively reachable states and the
initial state set $Q_0 := \op{supp}(\mu_0)$.

As usual, a run of an automaton for finite words is \emph{accepting} if it ends in a final
state. For automata on infinite words, run acceptance is determined by the Büchi (run visits
infinitely many final states) or Co-Büchi (run visits finitely many final states) conditions.

We write $p \overset{x}{\to} q$ if there exists a path from $p$ to $q$ labelled
by $x\in \Sigma^{+}$ and $p \to q$ if there exists some $x$ such that $p\overset{x}{\to} q$.
The \emph{strongly connected component (SCC)} of $p\in Q$ is $\scc(p) := \{q\in Q \mid p=q
\text{\ or\ } p \to q \text{\ and\ } q \to p \}$.
The set $\op{SCCs}(\mc{A}) := \{\scc(q) \mid q\in Q\}$ is the set of all SCCs and partitions $Q$.
An SCC is \emph{accepting} (\emph{rejecting}) if all (no) runs that stay there forever are
accepting. An SCC is \emph{useless} if no accepting run can continue from there.
An automaton is \emph{weak}, if the set of final states is a union of its SCCs. In this
case, Büchi and Co-Büchi acceptance are equivalent and we treat weak automata as Büchi
automata.

A classical automaton is \emph{trim} if it has no useless SCCs, whereas a probabilistic
automaton is trim if it has at most one useless SCC, which is a rejecting sink that we
canonically call $q_{rej}$.
We assume w.l.o.g. that all considered automata are trim, which also means that in an
underlying automaton the sink $q_{rej}$ is removed.

We call transitions of probabilistic automata that have probability 1 \emph{deterministic} and
otherwise \emph{branching}.
If there are transitions $p \overset{a}{\to} q$ and $p \overset{a}{\to} q'$ with $q\neq
q'$, we call this pattern a \emph{fork}. Every branching transition clearly has at least
one fork. We call a $(p,q,q')$ fork \emph{intra-SCC}, if $p,q,q'$ are all in the same SCC,
otherwise it is an \emph{inter-SCC} fork. A run of an automaton is \emph{deterministic}
if it never goes through forks, and \emph{limit-deterministic} if it goes only through
finitely many forks. We say that two deterministic runs \emph{merge} when they reach the
same state simultaneously.
For a finite run prefix $\rho$, we
call all valid runs with this prefix \emph{continuations} of $\rho$.

A classical automaton $\mc{A}$ \emph{accepts} $w\in \Sigma^\omega$ if there
exists an accepting run on $w$, and the language $L(\mc{A})$
\emph{recognized} by $\mc{A}$ is the set of all
accepted words. If $P$ is a set of states of an automaton, we write
$L(P)$ for the language accepted by this automaton with initial state
set $P$. For sets consisting of one state $q$, we write $L(q)$ instead
of $L(\{q\})$.

For a probabilistic automaton $\mc{A}$ and an input word $w$ (finite
or infinite), the transition structure of $\mc{A}$ induces a
probability space on the set of runs of $\mc{A}$ on $w$ in the usual
way. We do not provide the details here but rather refer the reader
not familiar with these concepts to \cite{baier2012probabilistic}. In
general, we write $\op{Pr}(E)$ for the probability of a measurable
event $E$ in a probability space.
For probabilistic automata, we consider \emph{positive}, \emph{almost-sure} and
\emph{threshold} semantics, i.e., an automaton accepts $w$ if the probability of
the set of accepting runs on $w$ is $>0$, ${=}1$ or ${>}\lambda$ (for
some fixed $\lambda\in ]0,1[$),
respectively. For an automaton $\mc{A}$ these languages are denoted by $L^{>0}(\mc{A}),
L^{=1}(\mc{A})$ and $L^{>\lambda}(\mc{A})$, respectively, whereas $L(\mc{A}) := L(\underlyingnba{\mc{A}})$
is the language of the underlying automaton. A probabilistic automaton is 0/1
if all words are accepted with either probability 0 or 1 (in this case the languages with
the different probabilistic semantics coincide).

To denote the type of an automaton, we use abbreviations of the form
XYA$^{(\gamma)}$ where the type of transition structure is denoted by
X $\in\{$ D (det.), N (nondet.), P (prob.)  $\}$, the acceptance
condition is specified by Y $\in\{$ F (finite word), B (Büchi), C
(Co-Büchi), W (Weak) $\}$, and for probabilistic transitions the
semantics for acceptance is given by $\gamma\in\{$>0,\!\!  =1,\!\!
>$\lambda, 0/1 \}$.

By $\mbb{L}^{(\gamma)}(\mathsf{XYA})$ we denote the whole class of
languages accepted by the corresponding type of automaton. If
$\mbb{L}$ is a set of languages, then $\overline{\mbb{L}}$ denotes the
set of all complement languages (similarly, for a language $L$, we
denote by $\overline{L}$ its complement), and $\op{BCl}(\mbb{L})$ the
set of all finite boolean combinations of languages in $\mbb{L}$.
We use the notion of \emph{regular language} for finite words and for infinite words (the type of words is always clear from the context).

 \section{Ambiguity of PBA}
\label{sec:amb}

Ambiguity of automata refers to the number of different accepting runs on a word or on all words.
An automaton is \emph{finitely ambiguous} (on $w$) if there are at most $k$ different
accepting runs (on $w$) for some fixed $k \in \Nat$, and in case of at most one accepting run
it is called \emph{unambiguous}. If on each word there are only finitely
many accepting runs, but no constant upper bound over all words, then it is
\emph{polynomially ambiguous} if the number of different run prefixes that are possible for any
word prefix of length $n$ can be bounded by a polynomial in $n$, and otherwise
\emph{exponentially ambiguous}. Finally, if if there exist words that have infinitely many
runs, but no word on which there are uncountably many accepting runs, then it is
\emph{countably ambiguous}, and otherwise it is \emph{uncountably ambiguous}.

In \cite{faba-dlt} (see also \cite{rabinovich-faba}), a syntactic characterization of those classes is presented for NBA
by simple patterns of states and transitions. We define those patterns
here and refer to \cite{faba-dlt} for further details. An automaton $\mc{A}$ has an
\emph{IDA pattern} if there exist two states $p\neq q$ and a word $v\in\Sigma^*$ such that
$p\overset{v}{\to}p$, $p\overset{v}{\to}q$ and $q\overset{v}{\to}q$. If additionally $q\in F$, then
this is also an $\IDAF$ pattern. Finally, $\mc{A}$ has an \emph{EDA pattern} if there exists
a state $p$ and $v \in \Sigma^*$ such that there are two different paths
$p\overset{v}{\to} p$, and if additionally $p\in F$, this is also an $\EDAF$ pattern.
If a PBA has no $\EDA$ pattern, we call it \emph{flat}, reflecting the naming of a similar
concept in other kinds of transition systems (e.g. \cite{leroux2004flatness}).
The names IDA and EDA abbreviate ``infinite/exponential degree of ambiguity'', which they
indicated in the original NFA setting, and we keep those names for consistency.

By $k$-\NBA, $n^k$-\NBA, $2^n$-\NBA, $\aleph_0$-\NBA we denote the subsets of at most finitely,
polynomially, exponentially and countably ambiguous NBA (and similarly for other types of automata).
When speaking about ambiguity of some PBA $\mc{A}$, we mean the ambiguity of the trimmed underlying NBA
$\underlyingnba{\mc{A}}$.

In \cite{chadha2011randomization}, hierarchical PBA (HPBA) were identified as a
syntactic restriction on PBA which ensures regularity under positive and almost-sure semantics. A PBA
with a unique initial state is hierarchical, if it admits a ranking on the states such that at
most one successor on a symbol has the same rank, and no successor has a smaller rank. A
HPBA has $k$ levels if it can be ranked with only $k$ different values. Simple PBA (SPBA)
were introduced in \cite{chadha2011threshold} and are restricted HPBA with two levels
such that all accepting states are on level 0.

\begin{figure}
  \centering
  
\begin{tikzpicture}
\pgftransformscale{.65}

\draw[very thick] (-4,0) parabola bend (1.5,5.8) (8,0);
\draw[very thick] (-8,0) parabola bend (-1.5,5.8) (4,0);
\fill[color=white] (-4,0) parabola bend (1.5,5.75) (8,0);
\fill[color=white] (-8,0) parabola bend (-1.5,5.75) (4,0);

\draw[very thick] (9.2,0) -- (-9.2,0);

\draw[dotted] (-1.1,0) parabola bend (0,0.7) (1.1,0) ;
\node at (0,0.3) {
    \begin{tabular}{c} $\SPBA$ \end{tabular}
  };

\draw[dashed] (-2,0) parabola bend (0,1.5) (2,0);
\node at (0,1.1) {
    \begin{tabular}{c} unamb. \end{tabular}
  };

\draw[very thick] (-2.5,0) parabola bend (0,3) (2.5,0);
  \node at (0,2.2) {
    \begin{tabular}{c}
      $\lnot \IDA$ \\
      fin. amb.
    \end{tabular}
  };

\node at (0,3.6) {
    \begin{tabular}{c}
      $\lnot \EDA, \lnot \IDAF$ \\
      poly. amb.
    \end{tabular}
  };

\draw (-4,0) parabola bend (1.5,5.8) (8,0);
  \node at (3.8,3.6) {
    \begin{tabular}{c}
      $\lnot \IDAF$ \\
      exp. amb.
    \end{tabular}
  };

\draw (-8,0) parabola bend (-1.5,5.8) (4,0);
  \node at (-3.8,3.6) {
    \begin{tabular}{c}
      $\lnot \EDA$ \\
      flat
    \end{tabular}
  };

\draw[very thick] (-9,0) parabola bend (0,7.2) (9,0);
  \node at (0,6.5) {
\begin{tabular}{c}
      $\lnot \EDAF$ \\
      countably amb.
    \end{tabular}
  };

\draw[dotted] (-8,0) parabola bend (-5,2) (3,0);
  \node at (-5,1.5) {
    \begin{tabular}{c}
      $\HPBA$ \\
\end{tabular}
  };

\node[align=center] at (-5.8,6.8) { $\mbb{L}^{>0}(\aleph_{0}\text{-}\PBA)$ \\ regular };
\node[align=center]  at (6.3,6.8) { $\mbb{L}^{=1}(flat\ \PBA \cup 2^k\text{-}\PBA)$ \\ regular };
\node[align=center]  at (-8,4) { $\mbb{L}^{>\lambda}(k\text{-}\PBA)$ \\ regular };

\draw (-5.8,6.1) edge[->,decorate,decoration={snake,amplitude=.4mm,segment length=2mm,post length=1mm}] (-4.6,5.4);
\draw (6.3,6.1) edge[->,decorate,decoration={snake,amplitude=.4mm,segment length=2mm,post length=1mm}] (4.6,4.6);
\draw (-8,3.3) edge[->,decorate,decoration={snake,amplitude=.4mm,segment length=2mm,post length=1mm}] (-1.5,2);
\end{tikzpicture}
 \caption{Illustration of the automata classes with restricted ambiguity as presented for
NBA in \cite{faba-dlt}, which are characterized by
the absence of the state patterns $\IDA, \IDAF, \EDA,$ and $\EDAF$ and their relation to the
restricted classes called ``Hierarchical PBA'' ($\HPBA$) \cite{chadha2011randomization} and ``Simple
PBA'' ($\SPBA$) \cite{chadha2011threshold}. We identify classes in this hierarchy which
can be seen as extensions ``in spirit'' of respectively
SPBA and HPBA, subsuming them while also preserving their good properties, as
e.g. definition by syntactic means, regularity under different semantics and several
complexity results.
}
\label{fig:hpba_and_amb}
\end{figure}
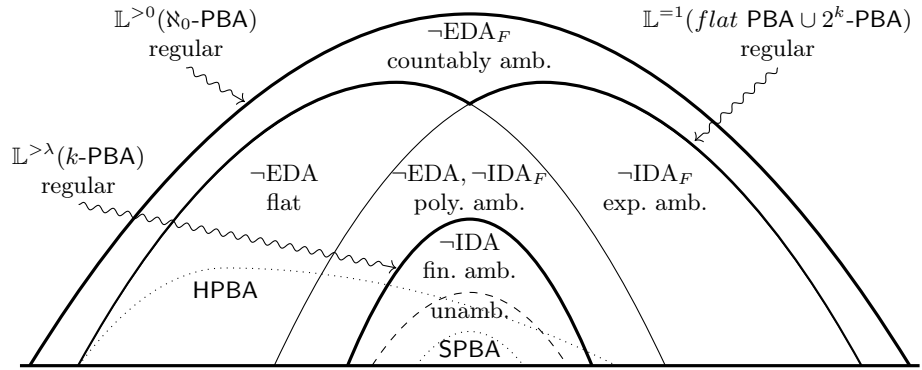

First, we show how HPBA relate to the ambiguity hierarchy,
which can easily be derived by inspection of the definitions.
A visual illustration is given in Figure~\ref{fig:hpba_and_amb}.

\begin{restatable}[Relation of HPBA and the ambiguity hierarchy]{proposition}{prphpbaambiguity}
  \label{prp:hpba_ambiguity}
  \ \\[-4mm]
  \begin{enumerate}
    \item $\HPBA \subset $ flat PBA $\subset \aleph_0\text{-}\PBA$.
    \item $k\text{-}\PBA \not\subseteq \HPBA$ and $\HPBA \not\subseteq k\text{-}\PBA$.
    \item $\SPBA \subset$ unambiguous PBA $\subset k\text{-}\PBA$.
  \end{enumerate}
\end{restatable}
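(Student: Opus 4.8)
\emph{Overview.} The plan is to read off each chain of inclusions directly from the defining conditions and to witness strictness with small automata; the only two implications that require a short argument are $\HPBA \subseteq$ flat and $\SPBA \subseteq$ unambiguous, and everything else is definitional.

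\emph{The inclusions.} For (1), a flat PBA is by definition one without an $\EDA$ pattern, and by the characterization of \cite{faba-dlt} the class $\aleph_0\text{-}\PBA$ is exactly the class without an $\EDAF$ pattern; since every $\EDAF$ pattern is in particular an $\EDA$ pattern, flat PBA $\subseteq \aleph_0\text{-}\PBA$ is immediate. For $\HPBA \subseteq$ flat I would argue via the ranking: along any path the rank is non-decreasing, so a path $p \overset{v}{\to} p$ has constant rank equal to that of $p$; as the hierarchical condition allows at most one same-rank successor per symbol, the same-rank path spelled by $v$ from $p$ is uniquely determined, ruling out two distinct paths $p \overset{v}{\to} p$ and hence any $\EDA$ pattern. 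For (3), unambiguous PBA $\subseteq k\text{-}\PBA$ holds by definition (unambiguity is finite ambiguity with $k=1$). For $\SPBA \subseteq$ unambiguous, note that in a two-level hierarchical automaton a run that ever enters level $1$ can never return to level $0$, so it visits only finitely many accepting states (all of which lie on level $0$); thus every accepting run stays on level $0$ forever, where the transitions are deterministic, so starting from the unique initial state there is at most one accepting run.

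\emph{Strictness.} A single three-state automaton handles most of the separations: take states $p,q,r$ over $\{a,b,c\}$ with $p \overset{a}{\to} q$, $p \overset{a}{\to} r$, $q \overset{b}{\to} p$, $r \overset{c}{\to} p$ (all remaining letters leading to the rejecting sink $q_{rej}$) and $F=\{p\}$. This automaton is unambiguous but not hierarchical. It is not hierarchical because $p,q,r$ form one SCC, forcing all three to share a rank, yet the intra-SCC fork $p \overset{a}{\to} \{q,r\}$ violates the ``at most one same-rank successor'' condition. It is flat — hence unambiguous and countably ambiguous — because any surviving return path to $p$ must begin with $a$, after which the next letter ($b$ or $c$) forces exactly one of the two branches to survive, so each word admits at most one surviving run, excluding an $\EDA$ pattern. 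This single example simultaneously witnesses $\HPBA \subsetneq$ flat, $k\text{-}\PBA \not\subseteq \HPBA$, and $\SPBA \subsetneq$ unambiguous. For flat $\subsetneq \aleph_0\text{-}\PBA$ I would place a doubled return $p \overset{a}{\to} \{q,q'\}$, $q \overset{a}{\to} p$, $q' \overset{a}{\to} p$ at a non-accepting state $p$ (acceptance handled by a separate deterministic gadget), giving an $\EDA$ but no $\EDAF$ pattern. For $\HPBA \not\subseteq k\text{-}\PBA$ I would take the two-level $\HPBA$ with $s \overset{a}{\to} s$, $s \overset{a}{\to} t$, $t \overset{a}{\to} t$, which exhibits the $\IDA$ pattern $(s,t,a)$ and is therefore not finitely ambiguous. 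Finally, unambiguous $\subsetneq k\text{-}\PBA$ is witnessed by two identical deterministic accepting copies glued at an initial fork, which yields exactly two accepting runs on every accepted word.

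\emph{Main obstacle.} The one genuinely non-obvious point is that flatness does \emph{not} force the SCCs to be deterministic: the three-state example above has a nondeterministic SCC yet no $\EDA$ pattern, because the two branches of the fork are desynchronized by requiring different continuation letters to survive. Verifying that no choice of word yields two distinct surviving return paths there (and, analogously, at $q$ and $r$) is the only place where a short case analysis is needed; all the positive inclusions then follow directly from the definitions and the pattern characterization of \cite{faba-dlt}.
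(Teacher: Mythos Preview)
Your proof is correct and follows essentially the same approach as the paper: the paper argues $\HPBA \subseteq$ flat via ``an $\EDA$ pattern implies an intra-SCC fork, which forces two same-rank successors'' and $\SPBA \subseteq$ unambiguous via ``level~1 is effectively a rejecting sink, so the useful part is deterministic,'' both logically equivalent to your formulations. Your strictness witnesses are more explicit than the paper's (which only asserts that such examples exist), but conceptually the same --- the paper likewise points to an unambiguous PBA with an intra-SCC fork for $k\text{-}\PBA \not\subseteq \HPBA$ and to an $\HPBA$ carrying an $\IDAF$ pattern for the converse.
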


Starting from these observations, this work was motivated by the question whether the
ambiguity restrictions, which were only implicit in HPBA and SPBA, can be used explicitly
to get larger classes with good properties. In the following we will positively answer this
question.

\subsection{From classical to probabilistic automata}

First, we observe that probabilistic automata can recognize regular languages
even under severe ambiguity restrictions.

\begin{proposition}
  \label{prp:dba_to_kpba}
  Let $\mc{A}$ be a DBA. Then there exists an unambiguous PBA $\mc{B}$ such that
  $L^{>0}(\mc{B}) = L^{=1}(\mc{B}) = L(\mc{A})$.
\end{proposition}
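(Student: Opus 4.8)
The plan is to observe that a deterministic automaton is just a degenerate probabilistic one, so essentially no probabilistic machinery is needed: I reuse the transition structure of $\mc{A}$ verbatim and only have to respect the trim/completeness conventions from the preliminaries. Write $\mc{A} = (Q,\Sigma,\delta,q_0,F)$. If $\mc{A}$ is not complete, first complete it by adding the canonical rejecting sink $q_{rej}$ and routing every missing transition to it. I then reinterpret the (now total) transition function as a probabilistic one by assigning probability $1$ to each deterministic transition, i.e. setting $\delta(p,a,q) = 1$ whenever $\delta(p,a) = q$, and take $\mu_0$ to be the point mass on $q_0$. This yields a PBA $\mc{B}$.

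Next I would argue that $\mc{B}$ is unambiguous. Since every state--letter pair has a single successor of probability $1$, $\mc{B}$ is deterministic, and in the underlying automaton $\underlyingnba{\mc{B}}$ (from which $q_{rej}$ is removed) every word has at most one run. In particular there is at most one accepting run per word, so $\mc{B}$ is unambiguous in the sense used for $\underlyingnba{\mc{B}}$.

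It remains to verify the two language identities. On any $w\in\Sigma^\omega$ the probability space induced by $\mc{B}$ consists of the single run $\rho_w$ of $\mc{A}$ on $w$, which carries probability $1$, and by construction $\rho_w$ is Büchi-accepting iff $w\in L(\mc{A})$. Hence the probability of the set of accepting runs on $w$ equals $1$ when $w\in L(\mc{A})$ and $0$ otherwise. Plugging this into the two semantics gives $w\in L^{>0}(\mc{B}) \iff \op{Pr}(\text{acc})>0 \iff w\in L(\mc{A})$ and $w\in L^{=1}(\mc{B}) \iff \op{Pr}(\text{acc})=1 \iff w\in L(\mc{A})$, so both languages equal $L(\mc{A})$.

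Because $\mc{A}$ is already deterministic there is no genuine difficulty here; the only steps requiring care are bookkeeping. The main (mild) obstacle is making $\mc{B}$ total so that $\delta$ really is a family of distributions summing to $1$, and checking that after removing $q_{rej}$ the underlying automaton is honestly unambiguous rather than merely that $\mc{B}$ is $0/1$. The proposition is best read as the observation that deterministic PBA are $0/1$ and that a DBA is a special case, which is precisely what populates the class of unambiguous PBA from below and motivates the ambiguity hierarchy of Figure~\ref{fig:hpba_and_amb}.
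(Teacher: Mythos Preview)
Your proof is correct and follows exactly the same approach as the paper: reinterpret the (completed) DBA as a PBA with all transition probabilities equal to $1$ and a point-mass initial distribution, observe that the unique run has probability $1$, and conclude that the resulting automaton is a $0/1$ PBA that is trivially unambiguous. The paper's proof is essentially a one-sentence version of what you wrote, with the completeness issue handled by a ``w.l.o.g.''\ rather than an explicit sink construction.
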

\begin{proof}
  As $\mc{A}$ is a (w.l.o.g. complete) DBA, there exists exactly one run on each word and all
  transitions when seen as PBA must have probability 1. Clearly this unique natural 0/1 PBA obtained from $\mc{A}$
  accepts the same language under both probable and almost-sure semantics and it is
  trivially unambiguous.
  \lncsqed
\end{proof}

Limit-deterministic NBA (LDBA) are NBA which are deterministic in all non-rejecting SCCs.
The natural mapping of LDBA into PBA \cite[Lemma 4.2]{baier2012probabilistic} already
trivially yields countably ambiguous automata (because the deterministic part of the LDBA
cannot contain an EDA$_F$ pattern, which implies uncountable ambiguity \cite{faba-dlt}).
The following result shows that already unambiguous
PBA under positive semantics suffice for all regular languages.

\begin{restatable}{theorem}{thmnbatokpba}
  \label{thm:nba_to_kpba}
Let $L\subseteq \Sigma^\omega$ be a regular language.\\
  Then there exists an unambiguous PBA $\mc{B}$ such that $L^{>0}(\mc{B}) = L$.
\end{restatable}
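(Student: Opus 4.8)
The plan is to start from a deterministic parity automaton (DPA) $\mc{D}$ for $L$, which exists since $L$ is regular, using transition priorities and the min-parity convention, so that $w\in L$ iff the least priority occurring infinitely often along the unique run of $\mc{D}$ on $w$ is even. The naive route from here is the limit-deterministic one mentioned just above: run $\mc{D}$ deterministically, and at some nondeterministically chosen time guess the critical even priority $c$ and jump into a deterministic ``verification tail'' that dies on any priority ${<}c$ and is accepting whenever priority $c$ is seen. As noted, the PBA version of this is only countably ambiguous, because for $w\in L$ the jump may be taken at any sufficiently late time. The whole point is therefore to \emph{canonicalize} the jump so that each accepted word has exactly one accepting run, while keeping that run's probability positive.

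The key idea I would use is a \emph{guarded} jump. Build $\mc{B}$ from a deterministic main copy of $\mc{D}$ together with, for each even priority $c$, a deterministic verification tail $V_c$ that moves to $q_{rej}$ as soon as it reads a transition of priority ${<}c$ and whose final states are exactly those entered on a priority-$c$ transition. The guard is: a jump from the main copy into $V_c$ is \emph{only} permitted immediately after a transition of priority ${<}c$ (and, as a boundary case, at the very start); this requires the main copy to remember the priority of the last transition taken, which is finite information. All permitted jumps are assigned positive probability, with the remaining mass on staying in the main copy, and all forced moves (deterministic tracking in $\mc{D}$, the deterministic steps of $V_c$, and the transitions to $q_{rej}$) get probability $1$. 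Because the verification tails are deterministic and the main copy is deterministic, the only forks are the permitted jumps.

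For correctness I would argue as follows. Fix $w\in L$ with critical even priority $c^*$ and let $t_0$ be the last time a transition of priority ${<}c^*$ occurs (undefined if none, in which case the start jump is used). The guard allows a jump into $V_{c^*}$ only right after such a transition; any jump at a time before $t_0{+}1$ later reads the priority-${<}c^*$ transition at $t_0$ and is sent to $q_{rej}$, and after $t_0$ no further guarded jump time into $V_{c^*}$ exists, so $t_0{+}1$ is the \emph{unique} accepting jump time. Jumps into $V_c$ for even $c\neq c^*$ never accept (for $c>c^*$ they eventually die on priority $c^*<c$; for $c<c^*$ the priority $c$ is seen only finitely often, so the B\"uchi condition in $V_c$ fails), and runs staying forever in the accepting-state-free main copy are rejecting. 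Hence $\underlyingnba{\mc{B}}$ has exactly one accepting run on $w$, giving unambiguity. That run takes only finitely many forks (one guarded opportunity per priority-${<}c^*$ transition, of which there are finitely many, and determinism after the jump), so it is limit-deterministic and its probability is a finite product of positive transition probabilities, hence positive; thus $L\subseteq L^{>0}(\mc{B})$. Conversely, for $w\notin L$ the critical priority is odd and the same case analysis shows no accepting run exists, so $\Pr$ of the accepting set is $0$ and $L^{>0}(\mc{B})\subseteq L$. The main obstacle is precisely the tension between unambiguity (at most one accepting run) and positive semantics (that single run must not have probability $0$): the guarded jump resolves the first by pinning the accepting jump to the unique time $t_0{+}1$, and eventual determinism after the jump resolves the second.
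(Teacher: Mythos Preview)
Your proposal is correct and follows essentially the same approach as the paper: both build an unambiguous LDBA from a deterministic parity automaton by pairing a non-accepting main copy with deterministic verification copies, and both achieve unambiguity by \emph{guarding} the jump into the $c$-copy so that it is only permitted immediately after seeing a priority ${<}c$ (the paper phrases this as the condition $c(q)\geq j > c(p)$ on the transition $\tilde p \to q^j$, using state priorities rather than your transition priorities). The differences are cosmetic---you keep only even-indexed verification copies and must store the last transition priority in the main copy, while the paper keeps all $m$ copies and reads the guard directly off the source state---and your parenthetical that the accepting run meets ``one guarded opportunity per priority-${<}c^*$ transition'' slightly undercounts the forks (jumps to other $V_c$ may also be enabled along the prefix), but the conclusion that the prefix before the jump is finite, hence the run is limit-deterministic with positive probability, is unaffected.
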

\begin{proof}[sketch] Let $\mc{A}=(Q,\Sigma,\delta,q_0,c)$ be a deterministic parity automaton accepting $L$,
  i.e., a finite automaton with priority function $c:Q\to\{1,\ldots,m\}$ such that $w \in
  L(\mc{A})$ iff the smallest priority assigned to a state on the unique run of $\mc{A}$
  on $w$ which is seen infinitely often is even.

  We construct an unambiguous LDBA for $L$, which then easily yields a PBA$^{>0}$
  by assigning arbitrary probabilities (\cite[Lemma 4.2]{baier2012probabilistic}) without
  influencing the ambiguity.
If the parity automaton $\mc{A}$ has $m$ priorities, the LDBA $\mc{B}$ can be obtained
  by taking $m+1$ copies, where $m$ of them are responsible for one priority each, and one is
  modified to guess which priority $i$ on the input word is the most important one appearing infinitely often along the run of $\mc{A}$, and correspondingly switch
  into the correct copy. This switching is done unambiguously for the first position after which no priority more important than $i$ appears.
  \lncsqed
\end{proof}

\subsection{From probabilistic to classical automata}

First we establish a result for flat PBA, i.e. PBA that have no $\EDA$ pattern. In automata without
$\EDA$ pattern there are no states which are part of two different cycles labeled by the
same finite word.
Even though we defined flat PBA by using an ambiguity pattern, the
set of flat PBA does not correspond to an ambiguity class, but it is
useful for our purposes due to the following property:

\begin{lemma}
  \label{lem:flat_pba_limdet}
  If $\mc{A}$ is a flat PBA and $w\in \Sigma^\omega$, then the probability of
  a run of $\mc{A}$ on $w$ to be limit-deterministic is 1.

\end{lemma}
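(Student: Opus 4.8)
The plan is to prove that in a flat PBA (no $\EDA$ pattern), a random run on any fixed word $w$ is limit-deterministic with probability $1$. The key observation is that flatness forbids any state $p$ from having two distinct paths $p \overset{v}{\to} p$ on the same word $v$, so in particular no state can lie on two different cycles with the same label. This means that forks (intra-SCC branchings) cannot be ``re-entered'' indefinitely along the same labeled loop. I would formalize ``limit-deterministic'' as the event that the run passes through only finitely many forks, and argue that the complementary event --- infinitely many forks --- has probability $0$.

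**First I would** set up the Markov chain induced by $\mc{A}$ on the fixed word $w$, whose states are (state, position) pairs, and identify the branching steps, i.e.\ the positions $i$ where the current state $p$ has $|\op{supp}(\delta(p,w(i),\cdot))| > 1$ (a fork is taken). Define $X_i$ to be the indicator that the run takes a branching transition at step $i$; limit-determinism is exactly the event $\sum_i X_i < \infty$. I would then exploit flatness to bound how often branchings can recur. The crucial structural point is that an intra-SCC fork requires a state $p$ that can both branch and return to itself on a repeated label; flatness rules out the configuration $p \overset{v}{\to} p$ via two distinct paths, which is precisely what an unbounded sequence of intra-SCC forks on a periodic-enough factor of $w$ would force. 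Inter-SCC forks, by contrast, can only be crossed finitely often by any single run, since SCCs are visited in a fixed topological order and each is left at most once.

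**The main step** is therefore to show that intra-SCC forks occur only finitely often almost surely. The plan is to fix an SCC $S$ and consider the (sub)run while it remains in $S$; the probability that a run residing in $S$ ever takes an intra-SCC fork is uniformly bounded below by some $\varepsilon > 0$ away from $1$ in a way that would let the run escape $S$ whenever it branches --- more precisely, because there is no $\EDA$ pattern, a branching inside $S$ cannot be followed (on the relevant continuation of $w$) by a return to the branching configuration, so after each intra-SCC fork the run makes definite ``progress'' that cannot be undone within $S$. Since $S$ is finite and the number of distinct branching configurations reachable while staying in $S$ is bounded, only finitely many intra-SCC forks are possible along any run confined to $S$. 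Combined with the finitely many inter-SCC forks and the fact that $\op{SCCs}(\mc{A})$ is finite, every run passes through at most finitely many forks, so it is limit-deterministic with probability $1$.

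**The hard part will be** making the structural ``progress'' argument rigorous: translating the purely syntactic absence of an $\EDA$ pattern into a quantitative statement that a random run almost surely stops branching. The subtlety is that $w$ is an arbitrary infinite word, so the labels driving the run need not be periodic, and I cannot directly invoke a pumping-style repetition of a single $v$. The right way to handle this is to apply a Ramsey/pigeonhole argument to the sequence of reachable supports (there are only finitely many subsets of $Q$), extracting a repeated support-transition profile that, together with a repeated branching state, would reconstruct an $\EDA$ witness --- contradicting flatness. This lets me conclude that the branching steps are \emph{sparse} in a way that forces $\sum_i X_i < \infty$ almost surely, which is exactly the claim. The probabilistic bookkeeping (a Borel--Cantelli or martingale-convergence style argument over the branching indicators) is then routine once the combinatorial bound on reachable branching configurations is in place.
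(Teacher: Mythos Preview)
Your approach is substantially more complicated than needed and contains a genuine gap. The paper's proof is a short counting argument: since $\mc{A}$ has no $\EDA$ pattern, the total number of runs (accepting and rejecting) on any fixed $w$ is \emph{countable}; this is a mild extension of the fact that absence of $\EDAF$ gives countably many accepting runs. Each non-limit-deterministic run passes through infinitely many forks and hence has probability $0$; a countable union of null sets is null, so the set of non-limit-deterministic runs has probability $0$ and we are done. No Borel--Cantelli, no Ramsey argument, no SCC decomposition.

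Your structural argument breaks at the treatment of inter-SCC forks. You write that ``inter-SCC forks can only be crossed finitely often by any single run, since SCCs are visited in a fixed topological order and each is left at most once.'' This is false. Take a state $p$ with $\delta(p,a,p)=\delta(p,a,q)=\tfrac{1}{2}$ where $q$ lies in a different SCC (say $q$ is a sink); this automaton is flat, and the fork at $p$ is inter-SCC by the paper's definition. On $a^\omega$ the run $p,p,p,\ldots$ passes through this fork at every step without ever leaving $\scc(p)$. So a single run can cross an inter-SCC fork infinitely often. The parallel claim for intra-SCC forks (``only finitely many intra-SCC forks are possible along any run confined to $S$'') is likewise an attempt at a deterministic bound that does not hold. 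The correct statement is not that \emph{every} run is limit-deterministic --- that is false --- but that the exceptional runs form a countable, hence null, set. Once you see that flatness forces countability of the run set, the probabilistic conclusion is immediate and the ``progress'' machinery you sketch becomes unnecessary.
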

\begin{proof}

  Let $\op{Runs}(\mc{A},w)$ denote the set of all runs of $\mc{A}$ on $w$ and $\op{nldRuns}(\mc{A},w)$
  denote the subset containing all such runs that are not limit-deterministic.
  As $\mc{A}$ is flat, it has no $\EDA$ and thus also no $\EDAF$ pattern, hence
  $\mc{A}$ is at most countably ambiguous (by \cite{faba-dlt}).
  Moreover, there are not only at most countably many accepting runs on any word,
  but also countably many rejecting runs (which can be seen by a simple
  generalization of \cite[Lemma~4]{faba-dlt}). But as all runs are disjoint events, each
  run $\rho$ that uses infinitely many forks has probability 0, and the
  total number of runs is countable, we can see that
\[ \op{Pr}(\op{Runs}(\mc{A},w) \setminus \op{nldRuns}(\mc{A},w)) =
  \smashoperator{\sum_{\rho \in \op{Runs}(\mc{A},w)}} \op{Pr}(\rho)\quad -\quad
  \smashoperator{\sum_{\rho \in \op{nldRuns}(\mc{A},w)}} \op{Pr}(\rho) = 1 - 0 = 1.
  \quad\quad
  \lncsqed
  \]
\end{proof}

The following lemma characterizes acceptance of PBA under extremal semantics with
restricted ambiguity and is crucial for the constructions in the following sections:
\begin{lemma}[Characterizations for extremal semantics]
  \label{lem:cntamb_pba_runs}\quad \\
  Let $\mc{A}$ be a PBA.
  \begin{enumerate}
    \item If $\mc{A}$ is at most countably ambiguous, then \\
      $w\in L^{>0}(\mc{A}) \Leftrightarrow$ there exists an accepting run on $w$ that is limit-deterministic.
    \item If there are finitely many accepting runs of $\mc{A}$ on $w$, then \\ $w\in L^{=1}(\mc{A}) \Leftrightarrow$ all runs on $w$ are accepting and limit-deterministic.
\item  If $\mc{A}$ is flat, then \\
      $w\in L^{=1}(\mc{A}) \Leftrightarrow$ there is no limit-deterministic rejecting run on $w$.
  \end{enumerate}
\end{lemma}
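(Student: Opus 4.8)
The plan is to prove the three characterizations by leveraging Lemma~\ref{lem:flat_pba_limdet} (flatness forces limit-determinism almost surely) together with the basic fact that a single run using infinitely many forks has probability $0$, while a limit-deterministic run that is ultimately isolated carries positive probability. Throughout I would use that distinct runs are disjoint measurable events, so probabilities of run sets are obtained by summing (as in the proof of the preceding lemma).

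For part (1), I would argue both directions. The direction $(\Leftarrow)$ is immediate: a limit-deterministic accepting run $\rho$ eventually stops forking, so from some point on it is the unique continuation of its prefix; hence $\op{Pr}(\rho)>0$ and $w\in L^{>0}(\mc{A})$. For $(\Rightarrow)$, suppose $w\in L^{>0}(\mc{A})$, so the set of accepting runs has positive probability. Since $\mc{A}$ is at most countably ambiguous, there are only countably many accepting runs, and their probabilities sum to $\op{Pr}(\text{accepting runs})>0$; therefore at least one accepting run $\rho$ must have $\op{Pr}(\rho)>0$. A run with infinitely many forks has probability $0$ (each fork strictly multiplies in a factor $<1$), so $\rho$ must use only finitely many forks, i.e.\ it is limit-deterministic.

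For part (2), the hypothesis is that there are finitely many accepting runs on $w$. The direction $(\Leftarrow)$ is clear, since if every run is accepting then the accepting runs have total probability $1$. For $(\Rightarrow)$, assume $w\in L^{=1}(\mc{A})$. Finitely many accepting runs each contribute only finitely much probability; since the total accepting probability is $1$ and rejecting runs would each have to carry probability $0$, I would argue that in fact \emph{every} run must be accepting: if some rejecting run had positive probability, the accepting probability would be strictly below $1$, and a rejecting run of probability $0$ combined with positive total rejecting mass is impossible when the accepting mass is already $1$ unless the rejecting set is null; one then shows the null rejecting set must actually be empty because every run that forks only finitely often is isolated and hence has positive probability. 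Consequently all runs are accepting, and since an accepting run with infinitely many forks has probability $0$ while the total accepting mass is $1$ spread over finitely many runs, each accepting run must be limit-deterministic.

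For part (3), flatness is the key hypothesis, and here Lemma~\ref{lem:flat_pba_limdet} does most of the work. The direction $(\Leftarrow)$: if there is no limit-deterministic rejecting run, then since by Lemma~\ref{lem:flat_pba_limdet} the limit-deterministic runs carry probability $1$, all the probability mass sits on runs that are both limit-deterministic and (by assumption) accepting, so $w\in L^{=1}(\mc{A})$. For $(\Rightarrow)$: contrapositively, if there is a limit-deterministic rejecting run $\rho$, then as in part (1) such a run eventually becomes the unique continuation of its prefix and hence has $\op{Pr}(\rho)>0$, so the rejecting mass is positive and $w\notin L^{=1}(\mc{A})$. The main obstacle I anticipate is part (2)'s $(\Rightarrow)$ direction, specifically the careful measure-theoretic argument that a probability-$1$ accepting event together with only finitely many accepting runs forces the rejecting runs to form a genuinely empty (not merely null) set; this requires pinning down that every limit-deterministic run is isolated and therefore has strictly positive probability, which rules out any rejecting run being limit-deterministic, and then handling the non-limit-deterministic rejecting runs separately via their probability-$0$ contribution.
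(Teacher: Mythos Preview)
Your arguments for parts (1) and (3) are correct and essentially match the paper's proof. The gap is in part (2)'s $(\Rightarrow)$ direction, and while you have correctly located the difficulty, your proposed resolution does not close it.

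You need to show that \emph{no} rejecting run exists, not merely that the set of rejecting runs has measure zero. Your argument rules out limit-deterministic rejecting runs (they would carry positive mass), but ``handling the non-limit-deterministic rejecting runs via their probability-$0$ contribution'' does nothing to exclude their existence. The missing idea, which the paper uses, is a \emph{separation} argument: since there are only finitely many accepting runs, any rejecting run $\rho$ differs from each of them by some finite time, so there is a time $t$ at which $\rho$ has separated from \emph{all} accepting runs. The cylinder of runs sharing $\rho$'s length-$t$ prefix then has positive probability but contains no accepting run at all, contradicting $w\in L^{=1}(\mc{A})$. This disposes of all rejecting runs uniformly, whether limit-deterministic or not.

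The paper uses the same separation idea to show that every accepting run is limit-deterministic: once the finitely many accepting runs have all pairwise separated (say by time $t$), any fork that an accepting run $\rho$ traverses after time $t$ has its other branch leading to a state through which no accepting run passes, hence from which the acceptance probability is $0$. If $\rho$ traverses infinitely many such forks, the cylinder determined by $\rho$'s prefix at time $t$ has positive measure yet accepts with probability $0$, again contradicting $w\in L^{=1}(\mc{A})$. Your mass argument (``finitely many accepting runs summing to $1$, so none can have probability $0$'') is not valid as stated: finitely many nonnegative reals summing to $1$ may certainly include a zero.
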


\begin{proof}
  $(1.):$
  For contradiction, assume that every accepting run on $w$ goes through forks infinitely often.
  But then the probability of every individual accepting run on $w$ is 0. Each run is a
  measurable event (it is a countable intersection of finite prefixes) and clearly
  disjoint from other runs, as two different runs must eventually differ after a finite
  prefix. But as the number of accepting runs is countable by assumption, by
  $\sigma$-additivity it follows that the probability of all accepting runs is also 0,
  contradicting the fact that $w\in L^{>0}(\mc{A})$.

  For the other direction, pick a limit-deterministic accepting run $\rho$ of $\mc{A}$ on
  $w$ and let $uv=w$ and $q\in Q$ such that the state of $\rho$ after reading $u$ is $q$ and
  there are no forks visited on $v$. Clearly, the probability to be in $q$
  after $u$ in a run of $\mc{A}$ is positive (because $u$ is finite), and the probability
  that $\mc{A}$ continues like $\rho$ from $q$ on $v$ is $1$. Hence, the probability of
  $\rho$ is positive.

  $(2.):$
The $(\Leftarrow)$ direction is obvious. We now proceed to show $(\Rightarrow)$.
  Take some time $t$ after which all accepting runs on $w$ separated.
  Assume that some accepting run $\rho$ is not limit-deterministic.
  But then $\rho$ goes through infinitely many forks after $t$ which
  with positive probability lead to a successor from which the probability to accept is 0,
  and the probability of following $\rho$ is also 0. As the probability to follow $\rho$
  until time $t$ is positive, but after that the probability to accept is 0, this implies
  that there is a positive probability that $\mc{A}$ rejects $w$. Therefore, all accepting
  runs on $w$ must be limit-deterministic.
  Now assume that some run $\rho$ on $w$ is rejecting. Following this run until the time at which
  $\rho$ is separated from all accepting runs has positive probability and all
  continuations must be also rejecting, so $\mc{A}$ must reject $w$.

  $(3.):$
  Clearly $(\Rightarrow)$ holds, because a limit-deterministic rejecting run has positive
  probability, i.e., if such a run exists on $w$, then $\mc{A}$ cannot accept almost surely. For
  $(\Leftarrow)$, observe that
because $\mc{A}$ is flat, we know by Lemma~\ref{lem:flat_pba_limdet} that with probability 1
  runs are limit-deterministic. Hence, if there exists no limit-deterministic rejecting
  run on $w$ (which would have positive probability), then with probability 1 runs are
  limit-deterministic and accepting.
  \lncsqed

\end{proof}

Using these characterizations, we can provide simple constructions from probabilistic
to classical automata.

\begin{restatable}{theorem}{thmpscpbatonba}
  \label{thm:ps_cpba_to_nba}
  Let $\mc{A}$ be a PBA that is at most countably ambiguous.

  Then $L^{>0}(\mc{A})$ is a regular language.
\end{restatable}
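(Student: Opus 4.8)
The plan is to reduce the statement to the classical characterization of positive acceptance provided by Lemma~\ref{lem:cntamb_pba_runs}(1). Since $\mc{A}$ is at most countably ambiguous, that lemma tells us that $w\in L^{>0}(\mc{A})$ if and only if there is a limit-deterministic accepting run of $\mc{A}$ on $w$, which is the same as a limit-deterministic accepting run of the underlying \NBA $\underlyingnba{\mc{A}}$ on $w$. Hence it suffices to exhibit a classical \NBA $\mc{C}$ whose language is exactly $\{\, w \mid \underlyingnba{\mc{A}} \text{ has a limit-deterministic accepting run on } w \,\}$; this immediately yields that $L^{>0}(\mc{A})$ is regular. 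Note that the ambiguity hypothesis is used only to invoke this equivalence and plays no role in the construction itself.

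I would build $\mc{C}$ as a two-copy (limit-deterministic) variant of $\underlyingnba{\mc{A}}$. The first copy simulates $\underlyingnba{\mc{A}}$ with its full branching and contains no accepting states; the second copy is entered by a nondeterministic \emph{commit} step and afterwards may only follow forced transitions. Concretely, writing $\Delta$ for the transition relation of $\underlyingnba{\mc{A}}$, from a first-copy state $p$ on letter $a$ we allow moving to every first-copy successor in $\Delta(p,a)$, and, whenever $|\Delta(p,a)|=1$ with unique successor $q$, additionally to the second-copy state $q$; from a second-copy state $p$ on $a$ we move to the second-copy state $q$ only if $\Delta(p,a)=\{q\}$ is a singleton (otherwise that branch has no successor and dies). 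The accepting states of $\mc{C}$ are the second copies of states in $F$, and the initial states are the first copies of the initial states of $\underlyingnba{\mc{A}}$.

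For correctness I would argue both inclusions. An accepting run of $\mc{C}$ must visit accepting states infinitely often; since these live only in the second copy, which has no edges back to the first, such a run spends a finite prefix in the first copy and then stays in the second copy forever, taking only singleton (hence fork-free) transitions. Projecting it back to $Q$ yields a run of $\underlyingnba{\mc{A}}$ that has only finitely many forks (all inside the finite first-copy prefix) and visits $F$ infinitely often, i.e.\ a limit-deterministic accepting run. Conversely, given a limit-deterministic accepting run of $\underlyingnba{\mc{A}}$, let $N$ be a position after which it encounters no further fork; simulating it in the first copy and then committing on some transition taken after position $N$ is legal precisely because every such transition has a unique successor, and the run can then be followed in the second copy while still visiting $F$ infinitely often. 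This establishes $L(\mc{C})$ equals the desired set, proving regularity. The only genuinely delicate point is to design the commit step so that entering and remaining in the second copy is possible exactly along fork-free suffixes, ensuring that accepting runs of $\mc{C}$ correspond, after projection, to limit-deterministic accepting runs of $\underlyingnba{\mc{A}}$.
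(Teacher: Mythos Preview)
Your proposal is correct and follows essentially the same two-copy guess/verify construction as the paper: a nondeterministic first copy with no accepting states, a second copy restricted to probability-$1$ transitions, and accepting states only in the second copy, with correctness reduced to Lemma~\ref{lem:cntamb_pba_runs}(1). The only (harmless) deviation is that you permit the commit step solely along a fork-free transition, whereas the paper allows switching into the deterministic copy on \emph{any} positive-probability transition; both variants accept exactly the words with a limit-deterministic accepting run, so this does not affect correctness.
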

\begin{proof}[sketch]
An NBA construction taking two copies of the PBA, where in the first copy no
state is accepting and the second copy has no forks, with the purpose of guessing
a limit-deterministic accepting run.
\lncsqed
\end{proof}

\begin{corollary}
  If $L^{>0}(\mc{A})$ is not regular, then it contains an $\EDAF$ pattern.
\end{corollary}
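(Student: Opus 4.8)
The plan is to establish the contrapositive of the statement, which then follows almost immediately from Theorem~\ref{thm:ps_cpba_to_nba} combined with the ambiguity characterization of \cite{faba-dlt}. Concretely, I would assume that $\mc{A}$ contains no $\EDAF$ pattern and deduce that $L^{>0}(\mc{A})$ is regular; contraposition then yields exactly the claim that non-regularity of $L^{>0}(\mc{A})$ forces the presence of an $\EDAF$ pattern.

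First I would recall the syntactic characterization from \cite{faba-dlt}: the presence of an $\EDAF$ pattern is precisely what separates the uncountably ambiguous NBA from the at most countably ambiguous ones. Equivalently, an NBA is at most countably ambiguous if and only if it has no $\EDAF$ pattern. Since the ambiguity of a PBA is, by our convention, the ambiguity of its trimmed underlying NBA $\underlyingnba{\mc{A}}$, the absence of an $\EDAF$ pattern in $\mc{A}$ is equivalent to $\mc{A}$ being at most countably ambiguous.

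Given this equivalence, the argument is direct. If $\mc{A}$ has no $\EDAF$ pattern, then $\mc{A}$ is at most countably ambiguous, so Theorem~\ref{thm:ps_cpba_to_nba} applies and shows that $L^{>0}(\mc{A})$ is regular. Taking the contrapositive, if $L^{>0}(\mc{A})$ is not regular, then $\mc{A}$ must contain an $\EDAF$ pattern.

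The only step requiring care — and the place where the genuine content resides — is the invocation of the characterization of countable ambiguity by the absence of $\EDAF$, which is where all the work of \cite{faba-dlt} enters; the rest is bookkeeping about passing to the trimmed underlying automaton $\underlyingnba{\mc{A}}$. I do not expect any obstacle beyond correctly citing and applying this characterization.
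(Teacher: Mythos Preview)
Your proposal is correct and matches the paper's own reasoning: the paper explicitly states that this corollary follows directly from Theorem~\ref{thm:ps_cpba_to_nba} together with the syntactic characterization of ambiguity classes from \cite{faba-dlt}, which is exactly the contrapositive argument you give.
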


\begin{restatable}{theorem}{thmasefpbatodba}
  \label{thm:as_efpba_to_dba}
  Let $\mc{A}$ be a PBA that is at most exponentially ambiguous or flat.

  Then $L^{=1}(\mc{A})$ is regular and recognizable by DBA.
\end{restatable}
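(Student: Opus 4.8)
The plan is to reduce, in both cases, to showing that the \emph{complement} $\overline{L^{=1}(\mc{A})}$ is recognizable by a nondeterministic co-Büchi automaton (\op{NCA}). Since an \op{NCA} can be determinized to a \op{DCA} (as recalled in the introduction), and the complement of a \op{DCA} is a \DBA on the same transition structure, this immediately yields a \DBA for $L^{=1}(\mc{A})$. In both cases the \op{NCA} will nondeterministically trace a single run of $\mc{A}$ (read off the underlying automaton, with the sink $q_{rej}$ included) that witnesses $w\notin L^{=1}(\mc{A})$, and its co-Büchi condition will mark exactly the finitely many ``bad'' events such a witness is required to produce.

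For the \emph{flat} case I would invoke Lemma~\ref{lem:cntamb_pba_runs}(3): $w\notin L^{=1}(\mc{A})$ iff there is a limit-deterministic rejecting run on $w$. Such a run visits $F$ only finitely often and passes through only finitely many forks, so I let the \op{NCA} trace a run and mark a position as bad whenever the current state lies in $F$ or is the source of a fork on the current letter (a static property of the state/letter pair). The co-Büchi condition ``finitely many bad positions'' then holds precisely for limit-deterministic rejecting runs, so the automaton recognizes $\overline{L^{=1}(\mc{A})}$.

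For the \emph{at most exponentially ambiguous} case I would first show that $L^{=1}(\mc{A})$ collapses to the universal Büchi language $U:=\{\,w : \text{every run of }\mc{A}\text{ on }w\text{ is accepting}\,\}$. The inclusion $L^{=1}(\mc{A})\subseteq U$ is Lemma~\ref{lem:cntamb_pba_runs}(2). For $U\subseteq L^{=1}(\mc{A})$ the key observation is that if every run on $w$ is accepting, then no run reaches $q_{rej}$, so the run tree on $w$ has no dead ends and each fork genuinely splits into subtrees that all carry an infinite run; since exponential ambiguity bounds the number of accepting (hence here of all) runs on $w$ by some finite $k$, the tree has at most $k$ infinite branches and therefore only finitely many (at most $k-1$) forks. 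Thus every run is limit-deterministic, and Lemma~\ref{lem:cntamb_pba_runs}(2) gives $w\in L^{=1}(\mc{A})$. Consequently $\overline{L^{=1}(\mc{A})}=\overline{U}=\{\,w : \text{some run on }w\text{ is rejecting}\,\}$, which is recognized by an \op{NCA} that traces a run and marks exactly the visits to $F$ as bad.

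The main obstacle is this collapse $U=L^{=1}(\mc{A})$ in the finitely-ambiguous case, i.e.\ arguing that limit-determinism comes for free once all runs are accepting. This is exactly where finite ambiguity is used twice: once (through Lemma~\ref{lem:cntamb_pba_runs}(2)) to turn any rejecting run into positive rejection probability by separating it from the finitely many accepting runs, and once in the branch-counting argument to rule out infinitely forking runs when there are no dead ends. For flat $\mc{A}$ one \emph{cannot} drop the fork-marking, since there genuinely are accepted-with-probability-less-than-one words carrying non-limit-deterministic runs; this is why the flat case needs the stronger witness and a co-Büchi condition on both $F$ and forks. Finally, I would check that the guessed witness is allowed to enter $q_{rej}$, where it is trivially limit-deterministic and non-accepting, so that leakage into the rejecting sink is detected correctly.
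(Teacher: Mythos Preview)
Your approach is correct and rests on the same reduction as the paper, namely the characterizations in Lemma~\ref{lem:cntamb_pba_runs}. The paper then builds the \DBA directly by a breakpoint (Miyano--Hayashi) construction---tracking, in the exponentially ambiguous case, the set of runs not yet having seen $F$ again, and in the flat case, the set of runs that have stayed deterministic and outside $F$ since the last breakpoint. Your route via an \op{NCA} for the complement followed by determinization and complementation is equivalent: the breakpoint \DBA \emph{is} the Miyano--Hayashi determinization of your \op{NCA}, read dually.

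One point is overcomplicated, and it is precisely what you flag as ``the main obstacle.'' The inclusion $U\subseteq L^{=1}(\mc{A})$ is trivial for \emph{any} PBA: if every run on $w$ is accepting, then the set of accepting runs is the set of all runs, which has probability~$1$. No limit-determinism and no ambiguity bound is needed here; the paper simply remarks this and moves on. Your branch-counting argument (no dead ends, finitely many infinite branches, hence finitely many forks) is correct but unnecessary---finite ambiguity is used only once, for the direction $L^{=1}(\mc{A})\subseteq U$ through Lemma~\ref{lem:cntamb_pba_runs}(2). In the flat case your treatment matches the paper's: marking both $F$-visits and fork sources is exactly what the paper's modified breakpoint encodes by retaining in the right set only those states reached via a probability-$1$ transition and lying outside $F$.
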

\begin{proof}[sketch]
  Both cases (exp. ambiguous or flat) shown using a deterministic
  breakpoint construction resulting in a DBA.
  In one case it checks whether all runs are accepting, in
  the other it checks that there are no limit-deterministic rejecting runs.
  \lncsqed
\end{proof}

\begin{corollary}
  If $L^{=1}(\mc{A})$ is not regular, \\
  then $\mc{A}$ contains both an $\EDA$ and an $\IDAF$ pattern.
\end{corollary}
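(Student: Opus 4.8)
The plan is to obtain this statement as the contrapositive of Theorem~\ref{thm:as_efpba_to_dba}, after translating the two sufficient conditions appearing in that theorem into statements about forbidden patterns. Recall that the theorem guarantees regularity of $L^{=1}(\mc{A})$ whenever $\mc{A}$ is at most exponentially ambiguous \emph{or} flat. The first step is therefore to pin down the pattern characterizations of these two properties: by definition a PBA is flat exactly when it contains no $\EDA$ pattern, and by the ambiguity characterization of \cite{faba-dlt} (as summarized in Figure~\ref{fig:hpba_and_amb}) absence of the $\IDAF$ pattern implies that the underlying NBA is at most exponentially ambiguous.

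With these two facts in hand, I would argue by contraposition. Assume $L^{=1}(\mc{A})$ is not regular. By Theorem~\ref{thm:as_efpba_to_dba}, $\mc{A}$ can be neither at most exponentially ambiguous nor flat, since either property alone would force $L^{=1}(\mc{A})$ to be regular and recognizable by a DBA. Applying the characterizations from the first step, ``not flat'' means by definition that $\mc{A}$ contains an $\EDA$ pattern, while ``not at most exponentially ambiguous'' forces $\mc{A}$ to contain an $\IDAF$ pattern (the contrapositive of the implication that absence of $\IDAF$ yields exponential ambiguity). A De~Morgan step applied to the disjunction in the hypothesis of the theorem then shows that \emph{both} patterns must occur simultaneously, which is exactly the claim.

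The deduction is essentially a one-line consequence of the theorem, so I do not expect any genuine obstacle in the construction itself. The only point demanding care is the bookkeeping with the pattern conditions: one must negate the \emph{disjunction} of the two sufficient conditions correctly, and quote the right equivalences, namely flatness $\leftrightarrow$ absence of $\EDA$ and (absence of $\IDAF$) $\rightarrow$ at-most-exponential ambiguity, rather than inadvertently invoking $\EDAF$ or the plain $\IDA$ pattern (which instead characterizes finite ambiguity). Once these correspondences are stated precisely, the corollary follows immediately.
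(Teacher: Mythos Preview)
Your proposal is correct and matches the paper's approach exactly: the paper states that the corollary follows directly from Theorem~\ref{thm:as_efpba_to_dba} together with the syntactic characterization of ambiguity classes from \cite{faba-dlt}, and your argument is precisely the spelled-out version of this, including the correct identification of flatness with absence of $\EDA$ and of at-most-exponential ambiguity with absence of $\IDAF$.
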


The corollaries above follow directly from the theorems and the
syntactic characterization of ambiguity classes \cite{faba-dlt}. The
following proposition states that these characterizations of
regularity in terms of the ambiguity patterns are tight.

\begin{figure}[htbp]
  \centering
  \begin{center}
  (a)\
  \begin{tikzpicture}[baseline={([yshift=-.5ex]current bounding box.center)},shorten >=1pt,
    node distance=1.2cm,inner sep=1pt,on grid,auto]
    \node[state,initial, initial text={$\frac{1}{2}$}] (qa)   {$q_a$};
    \node[state,initial, initial text={$\frac{1}{2}$},below=of qa] (qb)   {$q_b$};
    \node[state,right=of qa] (qp) {$q_+$};
    \node[state,accepting,right=of qb] (qs) {$q_\$$};
      \path[->]
      (qa) edge [loop above] node[xshift=10mm,yshift=-3mm] {$b:1,a:\frac{1}{2}$} (qa)
      (qa) edge [swap] node[yshift=-2pt] {$a:\frac{1}{2}$} (qp)
      (qp) edge [loop right] node {$a,b$} (qp)
      (qb) edge [loop below] node[xshift=10mm,yshift=3mm] {$a:1,b:\frac{1}{2}$} (qb)
      (qb) edge [] node {\$} (qs)
      (qp) edge [] node {\$} (qs)
      (qs) edge [loop right] node {\$} (qs)
      ;
  \end{tikzpicture}
  \quad(b)\
    \begin{tikzpicture}[baseline={([yshift=-.5ex]current bounding box.center)},shorten >=1pt,
      inner sep=1pt,node distance=1.2cm,on grid,auto]
      \node[state,initial below,accepting,initial text={}] (q0)   {$q_0$};
      \node[state] (q1) [right=of q0] {$q_1$};
      \path[->]
      (q0) edge [loop above] node {$a:1\text{-}\lambda$} (q1)
      (q0) edge [bend left] node {$a:\lambda$} (q1)
      (q1) edge [bend left] node {$b$} (q0)
      (q1) edge [loop above] node {$a$} (q1);
    \end{tikzpicture}
  \quad(c)\
  \begin{tikzpicture}[baseline={([yshift=-.5ex]current bounding box.center)},shorten >=1pt,
    node distance=1.2cm,inner sep=1pt,on grid,auto]
    \node[state,initial left, initial text={}] (q0)   {$q_0$};
    \node[state] (q1) [right=of q0] {$q_1$};
    \node[state] (q2) [below=of q0] {$q_2$};
    \node[state,accepting] (qf) [right=of q2] {$q_f$};
      \path[->]
      (q0) edge[bend left] node {$a:\lambda$} (q1)
      (q1) edge[bend left,swap] node {$b$} (q0)
      (q0) edge[swap] node {$a:(1\text{-}\lambda)$} (q2)
      (q1) edge [loop right] node {$a$} (q1)
      (q2) edge [loop left] node {$a:(1\text{-}\lambda)$} (q2)
      (q2) edge [swap] node [near end] {$a:\lambda$} (q1)
      (q2) edge [swap] node {$b$} (qf)
      (qf) edge [loop right] node {$\Sigma$} (qf)
;
  \end{tikzpicture}
\end{center}
   \caption{
    (a) Some PWA which accepts the non-regular language
      $\{\ w = (a+b)^*\$^\omega \mid \#_a(w) > \#_b(w)\ \}$ with a threshold of
      $\frac{1}{2}$, where $\#_x(w)$ denotes the number of occurrences of $x\in \Sigma$ in $w\in \Sigma^\omega$.
    (b) A family of PBA ${\mc{P}}_\lambda$ from \cite{baier2012probabilistic} such that
      $\mathbb{L}^{>0}({\mc{P}}_\lambda)$ is not regular for any $\lambda \in \mathbb{R}$.
    (c) A family of PWA $\tilde{\mc{P}}_\lambda$  (closely related to \cite[Fig. 6]{baier2012probabilistic})
    such that $\mathbb{L}^{=1}(\tilde{\mc{P}}_\lambda)$ is not regular for any $\lambda \in \mathbb{R}$.
  }
  \label{fig:pwa_fig}
\end{figure}
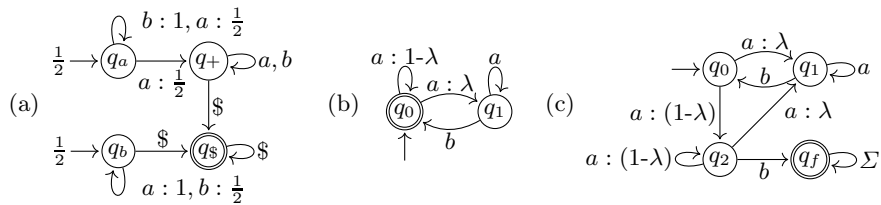

\vspace{1.3cm} \begin{proposition} \quad \label{prp:amb_counterexamples}
  There exist PBA...\\[-6mm]
  \begin{enumerate}
    \item ...with $\EDAF$ pattern (i.e. uncountably ambiguous) that accept \\
      non-regular languages under positive semantics.
    \item ...with no $\EDAF$ pattern (i.e. countably ambiguous) that accept \\
      non-regular languages under almost-sure semantics.
  \end{enumerate}
\end{proposition}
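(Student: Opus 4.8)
The plan is to prove both claims by exhibiting explicit witnesses, namely the two families of automata depicted in Figure~\ref{fig:pwa_fig}(b) and~(c); for each I would show that (i) its language is non-regular and (ii) it lies in the claimed ambiguity class. Together with Theorems~\ref{thm:ps_cpba_to_nba} and~\ref{thm:as_efpba_to_dba} this makes precise that the ambiguity hypotheses there are tight: countable ambiguity cannot be dropped for regularity under positive semantics, and ``flat or exponentially ambiguous'' cannot be weakened to ``countably ambiguous'' under almost-sure semantics.

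For the first claim I would use $\mc{P}_\lambda$ from Figure~\ref{fig:pwa_fig}(b), whose positive language is non-regular for every $\lambda$ by~\cite{baier2012probabilistic}. It then suffices to point out an $\EDAF$ pattern: the accepting state $q_0$ has a self-loop on $a$ together with an $a$-fork to $q_1$, and $b$ leads from $q_1$ back to $q_0$. Hence for $v=aab$ there are two distinct cycles $q_0\overset{v}{\to}q_0$, according to whether the fork to $q_1$ is taken on the first or on the second $a$. Since $q_0\in F$ this is an $\EDAF$ pattern, so $\mc{P}_\lambda$ is uncountably ambiguous.

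For the second claim I would use $\tilde{\mc{P}}_\lambda$ from Figure~\ref{fig:pwa_fig}(c). Its unique accepting state $q_f$ is an absorbing sink on $\Sigma$, so every word induces exactly one cycle $q_f\overset{v}{\to}q_f$; thus there is no $\EDAF$ pattern and the automaton is at most countably ambiguous. It is nonetheless neither flat ($q_1$ admits two distinct $baa$-cycles, an $\EDA$ pattern) nor exponentially ambiguous ($q_2,q_f$ with $v=aba$ form an $\IDAF$ pattern), so it genuinely lies strictly outside the hypotheses of Theorem~\ref{thm:as_efpba_to_dba}. For non-regularity I would compute the acceptance probability blockwise on inputs $a^{n_1}ba^{n_2}b\cdots$: each block $a^{n_i}b$ moves precisely the mass $(1-\lambda)^{n_i}$ into $q_f$ and returns the remainder to $q_0$, independently of the past, so such a word is accepted almost surely exactly when $\prod_i\bigl(1-(1-\lambda)^{n_i}\bigr)=0$, i.e.\ when $\sum_i (1-\lambda)^{n_i}$ diverges.

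The main obstacle is the non-regularity of $L^{=1}(\tilde{\mc{P}}_\lambda)$. Since this automaton is only closely related to, and not identical with, the construction of~\cite{baier2012probabilistic}, the blockwise computation above and the resulting claim that $\{a^{n_1}ba^{n_2}b\cdots \mid \sum_i(1-\lambda)^{n_i}=\infty\}$ is non-regular have to be established from scratch, via a Myhill--Nerode argument separating suitably chosen prefixes; here one must handle with care the inputs containing only finitely many $b$, which retain positive mass outside $q_f$ forever and are therefore never accepted almost surely. Verifying the purely syntactic ambiguity patterns, by contrast, is routine.
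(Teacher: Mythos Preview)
Your proposal is correct and follows essentially the same approach as the paper: both parts use exactly the automata of Figure~\ref{fig:pwa_fig}(b) and~(c), cite \cite{baier2012probabilistic} for non-regularity of $L^{>0}(\mc{P}_\lambda)$, and verify the ambiguity patterns directly (the paper uses $v=aab$ throughout where you use $aab$, $baa$, $aba$, but your witnesses are equally valid). For part~(2) the paper likewise computes the blockwise acceptance probability to obtain $L^{=1}(\tilde{\mc{P}}_\lambda)=\{a^{k_1}ba^{k_2}b\cdots \mid \prod_i(1-(1-\lambda)^{k_i})=0\}$ and then simply remarks that this language is known to be non-regular, so your planned Myhill--Nerode argument would only make explicit what the paper leaves to the reader.
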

\begin{proof}
(1.) Note that this statement just means that there are PBA accepting non-regular languages, which is well known.
  For example, the automata family from \cite[Fig. 3]{baier2012probabilistic}, depicted in
  Figure~\ref{fig:pwa_fig}(b), accepts non-regular languages under positive
  semantics and clearly contains an $\EDAF$ pattern, e.g. there are two different paths
  from $p_0$ to $p_0$ on the word $aab$.

(2.)
  The automata family depicted in Figure~\ref{fig:pwa_fig}(c) is a simple
  modification of the PBA family depicted in \cite[Fig. 6]{baier2012probabilistic} and
  recognizes the same non-regular languages under almost-sure semantics.
It does not contain an $\EDAF$ pattern, because the accepting state is a sink, but
  it does contain an $\IDAF$ and an $\EDA$ pattern (both e.g. on $aab$), so it is countably ambiguous and not flat.
\lncsqed
\end{proof}

This completes our classification of regular subclasses of PBA
under extremal semantics that are defined by ambiguity patterns, showing that going beyond
the restricted classes presented above (by allowing more patterns) in general leads to a
loss of regularity.

Notice that the presented constructions do not track exact probabilities, just whether
transitions have a probability $>0$ or $=1$.
This is a noteworthy observation, as in general, the probabilities do
matter for PBA, as shown in \cite[Thm. 4.7, Thm. 4.11]{baier2012probabilistic}.
\begin{proposition}
  Let $\mc{A}$ be a PBA.
  The exact probabilities in $\mc{A}$ do not influence $L^{>0}(\mc{A})$ if $\mc{A}$ is at most
  countably ambiguous, and $L^{=1}(\mc{A})$ if $\mc{A}$ is at most exponentially ambiguous or
  flat.
\end{proposition}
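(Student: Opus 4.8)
The plan is to observe that each of the three equivalences in Lemma~\ref{lem:cntamb_pba_runs} characterizes the accepted language purely in terms of the underlying automaton $\underlyingnba{\mc{A}}$, with no reference to the actual transition probabilities. Since changing the exact probabilities of a PBA while keeping the support of every distribution fixed leaves $\underlyingnba{\mc{A}}$ unchanged, these characterizations must yield the same language before and after such a change.

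Concretely, I would fix two PBA $\mc{A}$ and $\mc{A}'$ that differ only in their probability values, i.e. with $\underlyingnba{\mc{A}} = \underlyingnba{\mc{A}'}$, and verify that all ingredients appearing on the right-hand sides of the equivalences are invariant. First, $\op{Runs}(\mc{A},w) = \op{Runs}(\mc{A}',w)$ for every $w$, since a run is just a path in the common transition relation $\Delta$. Second, whether a given run is limit-deterministic depends only on how many forks it traverses, and forks are determined by $\Delta$ alone; likewise whether a run is accepting depends only on $F$, which is shared. Third, both the ambiguity class and flatness are defined via $\underlyingnba{\mc{A}}$ (as fixed in Section~\ref{sec:amb}), so $\mc{A}$ is at most countably ambiguous (resp. at most exponentially ambiguous, resp. flat) if and only if $\mc{A}'$ is.

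With these invariances in hand the result is immediate. If $\mc{A}$ is at most countably ambiguous, then by part~(1) of Lemma~\ref{lem:cntamb_pba_runs} membership of $w$ in $L^{>0}$ is equivalent to the existence of a limit-deterministic accepting run on $w$, a condition identical for $\mc{A}$ and $\mc{A}'$; hence $L^{>0}(\mc{A}) = L^{>0}(\mc{A}')$. If $\mc{A}$ is at most exponentially ambiguous, then every word carries only finitely many accepting runs (again a structural property), so part~(2) applies and membership in $L^{=1}$ is equivalent to all runs being accepting and limit-deterministic, giving $L^{=1}(\mc{A}) = L^{=1}(\mc{A}')$. The flat case is handled identically using part~(3), where membership in $L^{=1}$ is equivalent to the absence of a limit-deterministic rejecting run on $w$.

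The proof has no real obstacle beyond carefully noting that every notion entering the characterizations --- runs, forks and limit-determinism, the set $F$, and the ambiguity/flatness classification --- is a property of $\underlyingnba{\mc{A}}$ rather than of the concrete distributions. The one point that deserves explicit mention is that the hypothesis ``finitely many accepting runs on $w$'' required by part~(2) of the lemma is exactly what ``at most exponentially ambiguous'' guarantees, and that this too is preserved when only the probabilities are altered.
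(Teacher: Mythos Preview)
Your proof is correct and essentially matches the paper's reasoning. The paper does not give a separate proof of this proposition; it is stated as an observation immediately following the remark that ``the presented constructions do not track exact probabilities, just whether transitions have a probability $>0$ or $=1$,'' referring to the constructions in Theorems~\ref{thm:ps_cpba_to_nba} and~\ref{thm:as_efpba_to_dba}. Your argument via Lemma~\ref{lem:cntamb_pba_runs} is the same idea one level upstream (the constructions are themselves derived from that lemma), and your explicit verification that runs, forks, limit-determinism, acceptance, and the ambiguity/flatness classification all depend only on $\underlyingnba{\mc{A}}$ is exactly what is needed.
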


\subsection{Threshold Semantics}

In this section we consider PBA under threshold semantics and we will see that in this
setting, we lose regularity much earlier than in the case of extremal semantics,
but there is still the large and natural subclass of finitely ambiguous PBA that retains
regularity. Before we can show this, we need to derive a suitable
characterization of such languages.

We derive it from the following simple observation, which was also
used more implicitly in the proof that Simple HPBA with threshold semantics are equivalent
to DBA in \cite{chadha2011threshold}.
\begin{lemma}
  \label{lem:run_finvals_above_th}
  Let $\mc{A}$ be a PBA. Then for every threshold $\lambda \in ]0,1]$,
    there exists a finite set of probability values $V_{\geq\lambda} \subset [\lambda,1]$ such that
      for every finite run prefix with probability $v$ in $\mc{A}$ we have $v \geq \lambda
      \Rightarrow$ $v \in V_{\geq\lambda}$.
\end{lemma}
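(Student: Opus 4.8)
The plan is to exploit the fact that the probability of any finite run prefix is a product of finitely many factors, each drawn from the fixed finite set of positive probabilities occurring in $\mc{A}$. Concretely, let $P$ be the (finite) set of all distinct values $\mu_0(q)>0$ and $\delta(p,a,q)>0$ that appear in $\mc{A}$; since $Q$ and $\Sigma$ are finite, so is $P$. The probability $v$ of a run prefix $q_0,\ldots,q_n$ on a word $w_0\cdots w_{n-1}$ equals $\mu_0(q_0)\cdot\prod_{i} \delta(q_i,w_i,q_{i+1})$, so whenever $v>0$ it is a product $\prod_{p\in P} p^{e_p}$ for some nonnegative integer exponents $e_p$. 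Note also that each factor is at most $1$, so automatically $v\le 1$, which already places every relevant value in $[\lambda,1]$.

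First I would discard the factors equal to $1$, which do not change the product; this is the key point that allows arbitrarily long prefixes to yield only finitely many values. Writing $P_{<1}:=\{p\in P : p<1\}$, we have $v=\prod_{p\in P_{<1}} p^{e_p}$. If $P_{<1}=\emptyset$ then the only possible positive value is $v=1$ and we are done, so assume $c:=\max P_{<1}<1$. Since each factor coming from $P_{<1}$ is at most $c$, setting $E:=\sum_{p\in P_{<1}} e_p$ gives $v\le c^{E}$.

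Next I would use the threshold to bound $E$. The hypothesis $v\ge\lambda$ forces $c^{E}\ge\lambda$, hence $E\le (\ln\lambda)/(\ln c)$, a finite constant depending only on $\mc{A}$ and $\lambda$ (both logarithms are non-positive since $\lambda\le 1$ and $c<1$, so the quotient is a well-defined nonnegative number). Thus every run prefix of probability at least $\lambda$ arises from an exponent tuple $(e_p)_{p\in P_{<1}}$ whose total weight $E$ is bounded by this constant. Since $P_{<1}$ is finite, there are only finitely many such tuples, hence only finitely many distinct values $v$; collecting them into $V_{\geq\lambda}$ finishes the argument.

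The main thing to be careful about is the reduction from ``finitely many factors'' to ``finitely many values'': the factors equal to $1$ must be thrown out explicitly, because otherwise the number of factors in a prefix is unbounded and a naive counting of products fails. Once that is done, the decisive combinatorial observation—that bounded-weight products over a finite multiplicative base $P_{<1}$ yield only finitely many distinct results—is immediate, and the threshold merely supplies the bound on the total weight via the strictly-sub-unit factor $c$.
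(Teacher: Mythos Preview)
Your proof is correct and follows essentially the same approach as the paper: identify the finite set of edge probabilities, bound the number of factors strictly below $1$ that can appear in a product remaining $\ge\lambda$ via a logarithm, and conclude finiteness of the set of attainable values. Your treatment is slightly more careful in explicitly separating out the factors equal to $1$ (the paper's bound $m=\lceil\log_\lambda(\max R)\rceil$ implicitly assumes $\max R<1$), but the argument is the same.
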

\begin{proof}
  Observe that given a finite set of real numbers $R \subset [0,1]$, the set
  $R_{\geq\lambda} := \{ r \mid r = \prod_i r_i \geq \lambda,\ r_i\in R \}$ must be finite, as
  in any sequence $p_1p_2\ldots$ of $p_i\in R$, only at most
  $m = \lceil \log_\lambda(\max R) \rceil$ values can be $<1$
  and such that the product of the sequence remains $\geq \lambda$.
  In our case, let $R$ be the set of distinct probabilities assigned to edges (including
  the initial edges) in $\mc{A}$. As every finite run prefix by definition has the
  probability given by the product of the edge probabilities, this implies the statement.
  \lncsqed
\end{proof}

If there is just one accepting run (i.e., the automaton is unambiguous), one can easily
construct a nondeterministic automaton that guesses an accepting run and tracks it along
with its probability value, of which there are only finitely many above the threshold. In
the case that there are multiple accepting runs, for acceptance only the sum of their
probabilities matters. As individual runs can in principle have arbitrarily small
probability values, it is not obvious that the same approach (tracking a set of runs) can
work. Determining a suitable cut-off point is not as simple, because it is not apparent
when a single run becomes so improbable that it does not matter among the others. However,
we will now show that such a cut-off point must exist:

\begin{lemma}
  \label{lem:cutoff_fpba_th}
  Let $\mc{A}$ be a PBA, $\lambda \in ]0,1]$ a threshold and $k\in \mathbb{N}$.
  There exists $\varepsilon_k \in\ ]0,\lambda]$ such that
  for all sets $R^t=\{\rho^t_i\}_{i=1}^j$ of at most $j\leq k$ different run prefixes in $\mc{A}$ of the
  same length $t\in \mathbb{N}$, $\op{Pr}(R^t) = \sum_{i=1}^j \op{Pr}(\rho^t_i) < \lambda$
  implies that $\op{Pr}(R^t) < \lambda - \varepsilon_k$.
\end{lemma}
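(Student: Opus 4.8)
The plan is to prove the statement by induction on $k$, using Lemma~\ref{lem:run_finvals_above_th} to control the finitely many \emph{large} run prefixes and to handle the potentially many \emph{small} ones in bulk. Observe that the claim is equivalent to saying that the attainable values $\op{Pr}(R^t)$ lying below $\lambda$ cannot accumulate at $\lambda$ from below; it therefore suffices to produce, for each $k$, a single uniform gap $\varepsilon_k>0$ such that no such value falls into $]\lambda-\varepsilon_k,\lambda[$.

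For the base case $k=0$ the only admissible $R^t$ is empty, with $\op{Pr}(R^t)=0$, so $\varepsilon_0:=\lambda$ works. For the step I would assume the claim for $k-1$ with gap $\varepsilon_{k-1}$ and set $\mu_k := \varepsilon_{k-1}/(2k) \in\ ]0,1]$. Applying Lemma~\ref{lem:run_finvals_above_th} with threshold $\mu_k$, only finitely many run-prefix probabilities are $\geq \mu_k$; let $V_{\geq\mu_k}$ be this finite set, and let $\mathcal{L}_k$ be the (finite) set of all sums of at most $k$ not necessarily distinct elements of $V_{\geq\mu_k}$, which contains $0$. Since $\mathcal{L}_k$ is finite and $0 \in \mathcal{L}_k \cap [0,\lambda[$, the value $\varepsilon' := \lambda - \max(\mathcal{L}_k \cap [0,\lambda[)$ is strictly positive. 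I then put $\varepsilon_k := \min\{\varepsilon_{k-1}/2,\ \varepsilon'\}$, which lies in $]0,\lambda]$.

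Given any $R^t = \{\rho^t_i\}_{i=1}^j$ with $j\leq k$ and $\op{Pr}(R^t) < \lambda$, I would split the prefixes into the set $L$ of those with probability $\geq \mu_k$ and the set $S$ of those with probability $< \mu_k$, writing $\op{Pr}(R^t)=s_L+s_S$ for the corresponding partial sums. If $S=\emptyset$, then $\op{Pr}(R^t)=s_L \in \mathcal{L}_k$, so $\op{Pr}(R^t)<\lambda$ forces $\op{Pr}(R^t) \leq \lambda-\varepsilon' \leq \lambda-\varepsilon_k$. If $S\neq\emptyset$, then $|L|\leq k-1$, so $s_L$ is a sum of at most $k-1$ run prefixes of the same length $t$ with $s_L \leq \op{Pr}(R^t) < \lambda$; the induction hypothesis gives $s_L \leq \lambda-\varepsilon_{k-1}$, while $s_S < |S|\,\mu_k \leq k\mu_k = \varepsilon_{k-1}/2$, hence $\op{Pr}(R^t) < \lambda - \varepsilon_{k-1}/2 \leq \lambda - \varepsilon_k$. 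In both cases $\op{Pr}(R^t)\leq\lambda-\varepsilon_k$, completing the induction.

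The step I expect to be the main obstacle is precisely the one the induction is designed to circumvent: a naive single-threshold argument fails, because as the cut-off $\mu$ shrinks, Lemma~\ref{lem:run_finvals_above_th} admits more and more large values, whose sums could in principle creep up towards $\lambda$ from below, so the gap obtained that way need not stay bounded away from $0$. The key observation that breaks this circularity is that the presence of even one small prefix forces the number of large prefixes down to at most $k-1$, which lets the induction hypothesis absorb the large part while the small part contributes at most $k\mu_k$, deliberately chosen below the previous gap $\varepsilon_{k-1}$. Finally, note that the ``same length $t$'' hypothesis is never actually used in the argument and is merely preserved along the recursion, so the statement in fact holds for arbitrary run prefixes.
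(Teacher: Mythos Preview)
Your proof is correct and follows the same inductive strategy as the paper's, with a cosmetic variation: you bound \emph{all} sub-threshold prefixes together via $k\mu_k=\varepsilon_{k-1}/2$, whereas the paper peels off a single small prefix and bounds it by the largest attainable run-prefix value $v_{\max,<\varepsilon_k}$ below the previous gap. Note that as written you establish only the weak inequality $\op{Pr}(R^t)\le\lambda-\varepsilon_k$ (and your choice $\varepsilon_0=\lambda$ satisfies only this weak form at $k=0$, since the empty sum gives $0\not<\lambda-\lambda$), but halving each $\varepsilon_k$ immediately recovers the strict inequality stated in the lemma.
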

\begin{proof}
  We prove this by induction on the number of runs $k$. For $k=1$, i.e. a single run
  prefix, let $V_{\geq \lambda}$ be the finite (by Lemma~\ref{lem:run_finvals_above_th})
  set of different probability values $\geq \lambda$ and let $E$ be the set of distinct
  probabilities in the automaton $\mc{A}$. Then clearly
  $v_{\max,<\lambda}:= \max \{ a\cdot b \mid a\cdot b  < \lambda, a\in V_{\geq \lambda}, b\in E \}$
  is the largest probability value $< \lambda$ that can correspond to a finite run prefix in $\mc{A}$.
  Hence, we can just pick an $\varepsilon_1 < \lambda - v_{\max,<\lambda}$ and immediately
  get that for any run prefix with probability $v < \lambda$, we have that
  $v \leq v_{\max,<\lambda} < \lambda - \varepsilon_1$.

  Now assume the statement holds for all sets with at most $k$ run prefixes. Let $R^t$
  be a set of $k+1$ of different run prefixes of the same length such that $\op{Pr}(R^t) <
  \lambda$ and let $\varepsilon := \varepsilon_k$. Then we know
  that for every subset $S$ of at most $k$ runs of $R^t$ we have $\op{Pr}(S) < \lambda -
  \varepsilon$. Also, every single run prefix can by Lemma~\ref{lem:run_finvals_above_th}
  have one of only finitely many probability values in $V_{\geq \varepsilon}$ that are $\geq
  \varepsilon$ and there exists a value $v_{\max,<\varepsilon}$ denoting the largest
  possible probability value $<\varepsilon$ that a single run prefix can have.

  If there exists a run prefix $\rho \in R^t$ with probability value $v < \varepsilon$,
  then we know that
  $\op{Pr}(R^t) = \op{Pr}(R^t \setminus\{\rho\}) + v
  < (\lambda - \varepsilon) + v_{\max,<\varepsilon} < \lambda$.
If every run in $R^t$ has a probability value $\geq \varepsilon$,
  then every run prefix in $R^t$ has as probability one of the values in $V_{\geq
  \varepsilon}$. Consider all sums of $k$ values from $V_{\geq\varepsilon}$, which are
  finitely many, and pick the largest sum $s$ which is $<\lambda$.
  Choose $\varepsilon_{k+1}$ such that $\varepsilon_{k+1} < \min(\varepsilon
  - v_{\max,<\varepsilon}, \lambda - s)$ to account for both cases.
  \lncsqed
\end{proof}

From this we can derive the following characterization of languages accepted by finitely
ambiguous PBA under threshold semantics:
\begin{lemma}
  \label{lem:th_kpba_characterization}
  Let $\mc{A}$ be a $k$-ambiguous PBA and $\lambda \in ]0,1]$ a threshold.
  There exists  an $\varepsilon \in\ ]0,\lambda]$  such that for all $w \in \Sigma^\omega$:
  $w\in L^{>\lambda}(\mc{A})$ iff
  there exists
  a set $R$ of limit-deterministic accepting runs of $\mc{A}$ on $w$
  with $\op{Pr}(R) > \lambda$, $\op{Pr}(S) \leq \lambda$ for all $S\subset R$ and at most
  one run $\rho\in R$ with $\op{Pr}(\rho) < \varepsilon$.
\end{lemma}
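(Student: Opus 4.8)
The plan is to take $\varepsilon := \varepsilon_k$ as provided by Lemma~\ref{lem:cutoff_fpba_th} and to argue the two directions separately, the backward one being immediate and essentially all work lying in the forward direction. For $(\Leftarrow)$, if such a set $R$ exists, then its members are accepting runs on $w$, so $\op{Pr}(\text{accepting runs on } w) \geq \op{Pr}(R) > \lambda$ and hence $w \in L^{>\lambda}(\mc{A})$. For $(\Rightarrow)$ I would first reduce to limit-deterministic runs: since $\mc{A}$ is $k$-ambiguous there are at most $k$ accepting runs on $w$, and every run that is not limit-deterministic has probability $0$ (exactly as in the proof of Lemma~\ref{lem:cntamb_pba_runs}(1)). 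Consequently, writing $R_{\max}$ for the (at most $k$-element) set of all limit-deterministic accepting runs on $w$, the total probability of the accepting runs equals $\op{Pr}(R_{\max})$, so $w \in L^{>\lambda}(\mc{A})$ yields $\op{Pr}(R_{\max}) > \lambda$.

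Next I would extract the required minimal set. Since $R_{\max}$ is finite with $\op{Pr}(R_{\max}) > \lambda$, I delete runs one by one as long as the probability stays above $\lambda$; the result is a set $R \subseteq R_{\max}$ with $\op{Pr}(R) > \lambda$ such that removing any single run drops the probability to $\leq \lambda$. As any proper subset $S \subsetneq R$ is contained in $R \setminus \{\rho\}$ for some $\rho \in R$, this immediately gives $\op{Pr}(S) \leq \lambda$ for all $S \subsetneq R$. The crucial link to Lemma~\ref{lem:cutoff_fpba_th} is that a limit-deterministic run $\rho$ passes through only finitely many forks, so past its last fork all of its transitions have probability $1$; hence $\op{Pr}(\rho) > 0$ and $\op{Pr}(\rho)$ equals the probability of any sufficiently long prefix of $\rho$.

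It remains to bound the number of runs with tiny probability, which is the main obstacle. Suppose for contradiction that two distinct runs $\rho_1, \rho_2 \in R$ satisfy $\op{Pr}(\rho_i) < \varepsilon$. To invoke Lemma~\ref{lem:cutoff_fpba_th}, which speaks about finite run prefixes, I would pick a time $t$ large enough that every run in $R$ has passed its last fork and all runs in $R$ are pairwise separated; then $\{\rho^t : \rho \in R\}$ is a set of at most $k$ distinct prefixes of equal length $t$ whose total probability equals $\op{Pr}(R)$, and likewise for every subset. Consider $S := R \setminus \{\rho_2\}$. By minimality $\op{Pr}(S) \leq \lambda$, while $\op{Pr}(S) = \op{Pr}(R) - \op{Pr}(\rho_2) > \lambda - \varepsilon$. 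If $\op{Pr}(S) < \lambda$, then Lemma~\ref{lem:cutoff_fpba_th} forces $\op{Pr}(S) < \lambda - \varepsilon$, a contradiction. Otherwise $\op{Pr}(S) = \lambda$; this forces $|R| \geq 3$, since a singleton $S$ would consist only of $\rho_1$ and thus have probability $< \varepsilon \leq \lambda$. Hence $S' := R \setminus \{\rho_1, \rho_2\}$ is nonempty with $\op{Pr}(S') = \lambda - \op{Pr}(\rho_1)$, and because $0 < \op{Pr}(\rho_1) < \varepsilon$ this value lies strictly between $\lambda - \varepsilon$ and $\lambda$, again contradicting Lemma~\ref{lem:cutoff_fpba_th}. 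Thus at most one run in $R$ has probability $< \varepsilon$, which completes the forward direction. The delicate points are the lifting of the finite-prefix cutoff statement to infinite limit-deterministic runs and the handling of the boundary case $\op{Pr}(S) = \lambda$ by descending to the smaller subset $S'$.
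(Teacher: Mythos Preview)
Your proposal is correct and follows essentially the same approach as the paper: choose $\varepsilon := \varepsilon_k$ from Lemma~\ref{lem:cutoff_fpba_th}, reduce to the finitely many limit-deterministic accepting runs, extract a minimal subset $R$ with total probability $> \lambda$, lift the prefix statement of Lemma~\ref{lem:cutoff_fpba_th} to infinite limit-deterministic runs via a sufficiently large time $t$, and then derive a contradiction from the existence of two runs of probability $< \varepsilon$. Your treatment of the boundary case $\op{Pr}(S) = \lambda$ by descending to $S' = R \setminus \{\rho_1,\rho_2\}$ is exactly the paper's step (the paper phrases it as ``by the same argument $s - p' < \lambda - \varepsilon$''), and your explicit observation that $|R| \geq 3$ in this case makes the argument slightly cleaner.
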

\begin{proof}
  Clearly $(\Leftarrow)$ holds, as then $w$ is accepted with probability
  $\geq \op{Pr}(R) > \lambda$. We now show $(\Rightarrow)$.
In a finitely ambiguous PBA there are only finitely many different accepting runs on each word.
Furthermore, as after finite time all accepting runs have separated and each accepting
  run that visits forks infinitely often has probability 0, accepting runs that visit
  forks infinitely often do not contribute positively to the acceptance probability and
  thus can be ignored.
  Hence, if $w\in L^{>\lambda}(\mc{A})$, there is a number of accepting runs that
  eventually all become deterministic and each such run has a positive probability, which
  must in total be $>\lambda$.

  Let $R$ be a
  set of different limit-deterministic accepting runs of $\mc{A}$ on
  $w$ such that $\op{Pr}(R) > \lambda$ and $\op{Pr}(S) \leq \lambda$ for all
  $S\subset R$. As there are only finitely many accepting runs, such a set $R$ must exist.
  Furthermore, notice that each limit-deterministic run has a finite prefix which has the
  same probability as the whole run, so there exists a time $t$ such that the probability
  of the set of all different prefixes of runs in $R$ of length $t$ is exactly
  $\op{Pr}(R)$, so that Lemma~\ref{lem:cutoff_fpba_th} applies.

  Now pick an $\varepsilon := \varepsilon_k$ given by Lemma~\ref{lem:cutoff_fpba_th}.
  We claim that at most one run $\rho \in R$ can have a probability less than $\varepsilon$.
  If there is no such run in $R$, we are done. Otherwise let $\rho$ be a run with
  $\op{Pr}(\rho) =: p < \varepsilon$ and notice that by choice of $R$, we have that
  $\op{Pr}(R\setminus\{\rho\}) =: s \leq \lambda$. It cannot be the case that
  $s < \lambda$, as then by Lemma~\ref{lem:cutoff_fpba_th} we have $s < \lambda - \varepsilon$,
  which implies that $\op{Pr}(R)= s+p < \lambda$, which is a contradiction.
  Hence, now assume that $s = \lambda$. But then, if there is any $\rho' \neq \rho \in R$
  such that $\op{Pr}(\rho') =: p' < \varepsilon$, by the same argument we get the
  contradiction that $s - p' < \lambda - \varepsilon$ and hence $s < \lambda$. Therefore,
  no other run in $R$ can have a probability $<\varepsilon$.
  \lncsqed
\end{proof}

Now we can perform the intended automaton construction to show:
\begin{restatable}{theorem}{thmthkpbatonba}
\label{thm:th_kpba_to_nba}$L^{>\lambda}(\mc{A})$ is regular for each $k$-ambiguous PBA $\mc{A}$ and $\lambda \in ]0,1[$.
\end{restatable}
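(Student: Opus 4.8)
The plan is to construct an \NBA $\mc{B}$ with $L(\mc{B}) = L^{>\lambda}(\mc{A})$, invoking the characterization of Lemma~\ref{lem:th_kpba_characterization}. I fix the $\varepsilon \in\ ]0,\lambda]$ provided by that lemma and let $V_{\geq\varepsilon} \subset [\varepsilon,1]$ be the finite set of probability values attainable by run prefixes of probability $\geq\varepsilon$, which exists by Lemma~\ref{lem:run_finvals_above_th}. By Lemma~\ref{lem:th_kpba_characterization}, $w\in L^{>\lambda}(\mc{A})$ iff $\mc{A}$ admits a set $R$ of at most $k$ limit-deterministic accepting runs on $w$ with $\op{Pr}(R) > \lambda$ and \emph{at most one} run $\rho\in R$ having $\op{Pr}(\rho) < \varepsilon$. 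The automaton $\mc{B}$ guesses exactly such a set $R$ and verifies its defining properties using only finitely much information.

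The observation that makes a finite-state simulation possible is that each limit-deterministic run eventually stops passing through forks, and from that point on every transition it takes has probability $1$; hence its probability is \emph{frozen} to the value of the finite prefix up to that point. Consequently $\mc{B}$ only needs to remember, for each tracked run, its current state of $\mc{A}$, a bit indicating whether the run is still in its branching phase or has already committed to its deterministic tail, and---once committed---its frozen probability label. A committed run with probability $\geq\varepsilon$ carries one of the finitely many labels from $V_{\geq\varepsilon}$, while the at most one run with probability $<\varepsilon$ is tracked with a single distinguished label ``small'', for which $\mc{B}$ records only that its (necessarily positive) probability lies below $\varepsilon$. Thus a state of $\mc{B}$ is a multiset of at most $k$ labelled states of $\mc{A}$ together with bounded bookkeeping, giving a finite state space. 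In the transitions, branching-phase runs follow any $\delta$-successor nondeterministically, while committed runs must follow their unique deterministic successor; if a committed run reaches a fork the corresponding branch of $\mc{B}$ simply has no successor and dies. The nondeterminism lets $\mc{B}$ guess, for each run, the position after which it is deterministic and freeze its label there. Distinctness of the tracked runs---needed so that the frozen labels really sum to $\op{Pr}(R)$ over \emph{distinct} runs rather than overcounting---is preserved because two runs that have already diverged keep different prefixes forever even if their current states later merge; hence a bounded multiset of current labelled states faithfully represents $|R|$ distinct runs.

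For acceptance I use a Büchi condition, marking a state of $\mc{B}$ as accepting precisely when (i) all tracked runs are in the committed phase, (ii) a round-robin counter over the (at most $k$) indices certifies that every tracked run has visited $F$ since the last mark, so that each run is genuinely accepting and limit-deterministic, and (iii) the frozen probabilities witness $\op{Pr}(R) > \lambda$, i.e. the sum $s$ of the labels lying in $V_{\geq\varepsilon}$ satisfies $s > \lambda$, or else $s = \lambda$ and a ``small'' run is present. Since the labels and all their partial sums range over finite sets, condition (iii) is a predicate on the finite control of $\mc{B}$. Correctness then follows from Lemma~\ref{lem:th_kpba_characterization}: an accepting run of $\mc{B}$ on $w$ exhibits a set $R$ of limit-deterministic accepting runs whose frozen sum exceeds $\lambda$ (the small run contributing a positive amount to $s=\lambda$), so $w\in L^{>\lambda}(\mc{A})$; conversely, a witnessing $R$ from the lemma yields a correctly guessing accepting run of $\mc{B}$. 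As $\mc{B}$ is a finite \NBA, its language $L^{>\lambda}(\mc{A})$ is regular.

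The step I expect to be the main obstacle is the bookkeeping around the single sub-$\varepsilon$ run: its probability cannot be tracked exactly with finitely many states, so the construction must exploit the sharp form of Lemma~\ref{lem:th_kpba_characterization}---namely that in the presence of such a run the remaining runs already sum to exactly $\lambda$, whence the unknown but positive contribution of the small run suffices to cross the threshold. Checking that this case distinction, combined with the simultaneous round-robin Büchi tracking of all runs and the correct accounting of merged-but-distinct runs, reproduces the characterization \emph{exactly} is the crux; the remaining estimates on frozen probabilities are routine given Lemmas~\ref{lem:run_finvals_above_th}, \ref{lem:cutoff_fpba_th}, and~\ref{lem:th_kpba_characterization}.
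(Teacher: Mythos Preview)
Your approach is the same as the paper's: invoke Lemma~\ref{lem:th_kpba_characterization} and build a Büchi automaton that guesses at most $k$ limit-deterministic accepting runs, tracks their probabilities using the finite set $V_{\geq\varepsilon}$ together with a distinguished ``small'' marker for the single sub-$\varepsilon$ run, and checks that the frozen sum exceeds $\lambda$. The paper uses a generalized Büchi condition with $k$ acceptance sets instead of your round-robin counter, and lets the tracked tuple grow when runs sharing a common prefix diverge, but these are routine variants of the same construction.

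There is, however, one genuine gap. You declare that a state of $\mc{B}$ stores, for each tracked run, only its current $\mc{A}$-state, a branching/committed bit, and a probability label \emph{once committed}. But then at the moment of commitment $\mc{B}$ has no way to know which label in $V_{\geq\varepsilon}$ (or ``small'') to assign: that value is the probability of the prefix traversed so far, which depends on the entire sequence of branching choices you have not recorded. Guessing the label nondeterministically is unsound, since a too-high guess would make $\mc{B}$ accept words whose true acceptance probability is $\leq\lambda$. The fix is precisely what the paper does: track the current prefix probability throughout, not only after commitment. While it stays $\geq\varepsilon$ it ranges over the finite set $V_{\geq\varepsilon}$ (updated multiplicatively at each step); when it drops below $\varepsilon$ it becomes a marker (the paper's $\star_n$, later switched to $\star_d$ once the run is guessed deterministic). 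With probabilities tracked from the start you must also avoid double-counting runs that share a prefix: begin with pairwise distinct initial states and allow a tracked entry to \emph{split} into several successors (each inheriting its multiplicative share of the probability), letting the tuple grow up to size $k$, rather than carrying duplicate copies of the same state with the same full probability from the outset.
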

\begin{proof}[sketch] We use the characterization of Lemma~\ref{lem:th_kpba_characterization} to
  construct a generalized Büchi automaton accepting $L^{>\lambda}(\mc{A})$.
  Intuitively, the new automaton just guesses at most $k$ different runs of
  $\mc{A}$ and verifies that the guessed runs are limit-deterministic and accepting.
  The automaton additionally tracks the probability of the runs over time, to determine
  whether the individual runs and their sum have enough ``weight''.
  The automaton rejects when the total probability of the guessed runs is $\leq \lambda$,
  one of the runs goes into the rejecting sink $q_{rej}$ or a run does not see accepting
  states infinitely often.

  By Lemma~\ref{lem:th_kpba_characterization} we only need to consider sets of runs with
  at most one run that has a probability $<\varepsilon$, where $\varepsilon :=
  \varepsilon_k$ is given by Lemma~\ref{lem:cutoff_fpba_th}. For this single run we also
  do not need to track the exact probability value, as its only purpose is to witness that the
  acceptance probability is strictly greater than $\lambda$, whereas all other runs must have
  one of the finitely many different probabilities which are $\geq \varepsilon$ and must
  sum to $\lambda$.
  \lncsqed
\end{proof}

This generalizes the corresponding result for PFA
\cite[Theorem~3]{fijalkow2017probabilistic}. The proof in
\cite{fijalkow2017probabilistic} uses similar concepts, though a
rather different presentation. In the setting of infinite words we
additionally have to deal with a single run that has arbitrarily low
probability, and we have to ensure that this probability remains
positive.

After seeing that finitely ambiguous PBA retain regularity, we show
that this is the best we can do under threshold semantics:

\begin{corollary}
  \label{cor:nonreg_fpba}
  There are polynomially ambiguous PBA $\mc{A}$, that is, with an
  $\IDA$ pattern and no $\EDA,\IDAF$ patterns, such that
  $L^{>\lambda}(\mc{A})$ is not regular even for rational
  thresholds $\lambda \in ]0,1[$.
\end{corollary}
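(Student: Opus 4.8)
The goal is to exhibit a polynomially ambiguous PBA—one that contains an $\IDA$ pattern but neither an $\EDA$ nor an $\IDAF$ pattern—whose threshold language is non-regular for some rational $\lambda$. The natural starting point is the existing family of non-regular threshold examples. Figure~\ref{fig:pwa_fig}(a) already gives a PWA accepting the non-regular language $\{w=(a+b)^*\$^\omega \mid \#_a(w)>\#_b(w)\}$ at threshold $\tfrac12$, and it operates by running two branching loops (around $q_a$ and $q_b$) that each halve their probability mass on the relevant letter. The plan is to adapt this kind of counting automaton so that its underlying NBA lands in exactly the target ambiguity class.

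First I would fix an explicit candidate automaton and a rational threshold (e.g.\ $\lambda=\tfrac12$), keeping the ``count-and-compare'' mechanism that forces non-regularity: the acceptance probability after a finite prefix encodes an arithmetic comparison between the numbers of two letters, and a standard pumping/Myhill--Nerode argument shows no finite automaton can make this comparison. I would isolate this non-regularity argument as the \emph{semantic} obligation, independent of the automaton's syntactic shape.

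Second, and this is where the real work lies, I would verify the \emph{syntactic} obligation: that the underlying NBA has an $\IDA$ pattern (two distinct states $p\neq q$ and a word $v$ with $p\xrightarrow{v}p$, $p\xrightarrow{v}q$, $q\xrightarrow{v}q$) but no $\EDA$ pattern (no state lying on two distinct $v$-labelled self-loops) and no $\IDAF$ pattern (no $\IDA$ witness whose $q$ is accepting). Polynomial ambiguity then follows from the characterization in \cite{faba-dlt}, namely $\lnot\EDA \wedge \lnot\IDAF$. The delicate point is the \emph{simultaneous} exclusion of $\EDA$ and $\IDAF$ while retaining $\IDA$ and while preserving the non-regular counting behaviour: the counting must be driven by \emph{branching} transitions (so that probability mass is genuinely split, witnessing the $\IDA$ pattern), yet no single state may sit on two same-labelled cycles (which would create an $\EDA$ pattern and push ambiguity up to exponential), and the $\IDA$ witness state $q$ must be non-accepting. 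I expect this balancing act to be the main obstacle: one must arrange the accepting states (e.g.\ a final sink reached on $\$$) so that the forks feeding the count avoid $F$, thereby killing $\IDAF$, while checking by hand that every state lies on at most one $v$-cycle for each $v$.

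Finally, with both obligations discharged, the corollary is immediate: the chosen PBA is polynomially ambiguous by the pattern characterization, and its threshold language at the rational $\lambda$ is non-regular by the counting argument. To make the statement robust for arbitrary rational thresholds, I would note that the initial distribution and loop probabilities can be tuned (as in the Figure~\ref{fig:pwa_fig}(a) construction) to shift the effective comparison point without altering the transition \emph{structure}, so the ambiguity class is unaffected and non-regularity persists across the claimed range of $\lambda$.
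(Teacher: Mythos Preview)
Your plan is essentially the paper's proof, but you over-anticipate the difficulty: no adaptation of Figure~\ref{fig:pwa_fig}(a) is required. The underlying NBA of that PWA \emph{already} sits in the polynomial ambiguity class. The $\IDA$ witness is $(p,q,v)=(q_a,q_+,a)$: we have $q_a\xrightarrow{a}q_a$, $q_a\xrightarrow{a}q_+$, and $q_+\xrightarrow{a}q_+$. There is no $\EDA$, because every state has at most one path back to itself on any word (the only fork is at $q_a$ on $a$, and $q_+$ never returns to $q_a$). There is no $\IDAF$, because the only accepting state $q_\$$ loops only on $\$$, and no other state can return to itself on a word in $\$^+$. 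So the ``balancing act'' you flag as the main obstacle is already resolved by the existing example; the corollary is literally just the observation that this automaton has an $\IDA$ but no $\EDA$ or $\IDAF$ pattern, combined with Proposition~\ref{prp:th_pwa_nonreg_strong} for non-regularity.

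Your remark about tuning for other thresholds is also in line with the paper (which defers to \cite[Lemma~4.15]{baier2012probabilistic} via Proposition~\ref{prp:th_pwa_nonreg_strong}); note that the statement only requires \emph{some} rational threshold to work, so the $\lambda=\tfrac12$ instance already suffices.
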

\begin{proof}
  Follows from the fact that the PWA $\mc{A}$ from Figure~\ref{fig:pwa_fig}(a),
  which recognizes a non-regular language
  (and is used to show Proposition~\ref{prp:th_pwa_nonreg_strong}),
  has just an $\IDA$ pattern in the underlying NBA, but no $\EDA$ or $\IDAF$ patterns.
  \lncsqed
\end{proof}

This completes our characterization of languages which are recognized by PBA that are
restricted by forbidden ambiguity patterns, so that we can state our main result of this
section (see Figure~\ref{fig:hpba_and_amb} for a visualization):

\begin{theorem}
  \label{thm:pbaamb_classes}
  The following results hold about PBA with restricted ambiguity:
  \begin{itemize}
    \item $\mbb{L}^{>0}(k\text{-}\PBA) = \mbb{L}^{>0}(\aleph_0\text{-}\PBA) = \mbb{L}(\NBA)$
    \item $\mbb{L}^{=1}(k\text{-}\PBA) = \mbb{L}^{=1}(2^k\text{-}\PBA) =
      \mbb{L}^{=1}(\text{flat\ }\PBA) = \mbb{L}(\DBA) \subset \mbb{L}^{=1}(\aleph_0 \text{-}\PBA)$
    \item $\mbb{L}^{>\lambda}(k\text{-}\PBA) = \mbb{L}(\NBA) \subset \mbb{L}^{>\lambda}(n^k\text{-}\PBA)$
  \end{itemize}
\end{theorem}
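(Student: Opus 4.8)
The three displayed lines are independent statements, and I would prove each by closing a cycle of inclusions between the named classes, so that the bulk of the work is assembling the results already established in this section together with the ambiguity containments $k\text{-}\PBA \subseteq n^k\text{-}\PBA \subseteq 2^k\text{-}\PBA \subseteq \aleph_0\text{-}\PBA$ and $\text{flat\ }\PBA \subseteq \aleph_0\text{-}\PBA$ (the latter from Proposition~\ref{prp:hpba_ambiguity}). Two observations are used repeatedly: an unambiguous PBA is $1$-ambiguous and hence $k$-ambiguous, and a PBA obtained from a deterministic automaton by reading every transition with probability $1$ has no forks and is therefore flat as well as unambiguous.

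For the positive line, Theorem~\ref{thm:nba_to_kpba} gives $\mbb{L}(\NBA) \subseteq \mbb{L}^{>0}(k\text{-}\PBA)$, the inclusion $\mbb{L}^{>0}(k\text{-}\PBA) \subseteq \mbb{L}^{>0}(\aleph_0\text{-}\PBA)$ is immediate, and Theorem~\ref{thm:ps_cpba_to_nba} closes the cycle via $\mbb{L}^{>0}(\aleph_0\text{-}\PBA) \subseteq \mbb{L}(\NBA)$. For the almost-sure line, Proposition~\ref{prp:dba_to_kpba} turns a DBA into an unambiguous $0/1$ PBA that is simultaneously $k$-ambiguous and flat, so $\mbb{L}(\DBA)$ lands in both $\mbb{L}^{=1}(k\text{-}\PBA)$ and $\mbb{L}^{=1}(\text{flat\ }\PBA)$; since $k\text{-}\PBA \subseteq 2^k\text{-}\PBA$, Theorem~\ref{thm:as_efpba_to_dba} (applied once to exponentially ambiguous and once to flat PBA) sends all of $\mbb{L}^{=1}(2^k\text{-}\PBA)$ and $\mbb{L}^{=1}(\text{flat\ }\PBA)$ back into $\mbb{L}(\DBA)$, closing both cycles. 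The final strict inclusion $\mbb{L}(\DBA) \subset \mbb{L}^{=1}(\aleph_0\text{-}\PBA)$ follows because $\text{flat\ }\PBA \subseteq \aleph_0\text{-}\PBA$ yields $\subseteq$, while Proposition~\ref{prp:amb_counterexamples}(2) exhibits a countably ambiguous PBA whose almost-sure language is non-regular and hence outside $\mbb{L}(\DBA)$.

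The threshold line splits into the equality and the strict inclusion. I obtain $\mbb{L}^{>\lambda}(k\text{-}\PBA) \subseteq \mbb{L}(\NBA)$ directly from Theorem~\ref{thm:th_kpba_to_nba}, and once the equality is in hand the strict inclusion $\mbb{L}(\NBA) \subset \mbb{L}^{>\lambda}(n^k\text{-}\PBA)$ is clear ($\subseteq$ because $k\text{-}\PBA \subseteq n^k\text{-}\PBA$, strictness from the polynomially ambiguous, non-regular example of Corollary~\ref{cor:nonreg_fpba}). The only genuinely new ingredient is $\mbb{L}(\NBA) \subseteq \mbb{L}^{>\lambda}(k\text{-}\PBA)$, which no earlier result delivers, since Theorem~\ref{thm:nba_to_kpba} only provides an unambiguous automaton for the \emph{positive} semantics. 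My plan is to fix a regular $L$, take the unambiguous PBA $\mc{D}$ with $L^{>0}(\mc{D}) = L$ from Theorem~\ref{thm:nba_to_kpba}, and build $\mc{B}$ that with initial probability $\lambda$ enters a fresh accepting sink (a single, always-accepting run contributing exactly $\lambda$ on every word) and with probability $1-\lambda$ behaves like $\mc{D}$. Then $\op{Pr}(\mc{B} \text{ accepts } w) = \lambda + (1-\lambda)\cdot \op{Pr}(\mc{D} \text{ accepts } w)$, which is strictly above $\lambda$ exactly when $\op{Pr}(\mc{D} \text{ accepts } w) > 0$, i.e. iff $w \in L$; and $\mc{B}$ has at most two accepting runs per word (the sink run and the unique accepting run of $\mc{D}$), so it is finitely ambiguous.

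The step I expect to be the main obstacle is precisely this last construction, and the subtlety is that it \emph{sidesteps} rather than defeats the decay problem: for a non-$\DBA$ language the accepting run inherited from $\mc{D}$ must branch and so its probability can tend to $0$ over words of $L$, meaning no constant amplification can pin it above $\lambda$ on its own. What rescues the argument is the strictness of the threshold together with the $\lambda$-mass sink, which supplies a fixed baseline of exactly $\lambda$ on every word so that the arbitrarily small but positive contribution of $\mc{D}$ only has to tip the total strictly past $\lambda$; this is exactly the ``single run of arbitrarily low probability'' that the characterization in Lemma~\ref{lem:th_kpba_characterization} is built to accommodate. The care that remains is to verify that $\mc{B}$ really stays finitely ambiguous and that the boundary words $w \notin L$ receive probability exactly $\lambda$, and are therefore correctly rejected under the strict ${>}\lambda$ semantics.
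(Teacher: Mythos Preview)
Your proof is correct and follows essentially the same inclusion-chain strategy as the paper, invoking the same supporting results (Theorems~\ref{thm:nba_to_kpba}, \ref{thm:ps_cpba_to_nba}, \ref{thm:as_efpba_to_dba}, \ref{thm:th_kpba_to_nba}, Proposition~\ref{prp:dba_to_kpba}, Proposition~\ref{prp:amb_counterexamples}, Corollary~\ref{cor:nonreg_fpba}) at the same points. The one step you flag as ``genuinely new'' and discuss at length---adding an accepting sink with initial mass $\lambda$ to convert a $k$-ambiguous PBA$^{>0}$ into a $k$-ambiguous PBA$^{>\lambda}$---is precisely the standard transformation the paper dispatches in one line by citing \cite[Lemma~4.16]{baier2012probabilistic}; your worry about probability decay is unnecessary since, as you yourself conclude, the strict threshold plus the $\lambda$-mass sink makes any positive contribution from $\mc{D}$ sufficient.
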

\begin{proof}
  The statements follow from the following inclusion chains:
  \begin{gather*}
    \mbb{L}(\NBA) \overset{(1.)}{\subseteq} \mbb{L}^{>0}(k\text{-}\PBA)
                     \overset{def.}{\subseteq} \mbb{L}^{>0}(\aleph_0\text{-}\PBA)
                     \overset{(2.)}{\subseteq} \mbb{L}(\NBA)
                     \\
    \mbb{L}(\DBA) \overset{(3.)}{\subseteq} \mbb{L}^{=1}(k\text{-}\PBA)
           \overset{def.}{\subseteq}
           \mbb{L}^{=1}(2^k\text{-}\PBA \cup \text{flat\ }\PBA)
\overset{(4.)}{\subseteq} \mbb{L}(\DBA)
           \overset{(5.)}{\subset} \mbb{L}^{=1}(\aleph_0 \text{-}\PBA)
           \\
    \mbb{L}(\NBA) \overset{(1.)}{\subseteq} \mbb{L}^{>0}(k\text{-}\PBA)
           \overset{(6.)}{\subseteq} \mbb{L}^{>\lambda}(k\text{-}\PBA)
           \overset{(7.)}{\subseteq} \mbb{L}(\NBA)
           \overset{(8.)}{\subset} \mbb{L}^{>\lambda}(n^k\text{-}\PBA)
  \end{gather*}
  Where the marked relationships hold due to:
    (1.) Theorem~\ref{thm:nba_to_kpba},
    (2.) Theorem~\ref{thm:ps_cpba_to_nba},
    (3.) Proposition~\ref{prp:dba_to_kpba},
    (4.) Theorem~\ref{thm:as_efpba_to_dba},
    (5.) Proposition~\ref{prp:amb_counterexamples},
    (6.) Simple transformation by adding a new accepting sink $q_{acc}$ and modifying the initial
      distribution $\mu_0$
\cite[Lemma 4.16]{baier2012probabilistic},
    (7.) Theorem~\ref{thm:th_kpba_to_nba},
    (8.) Corollary~\ref{cor:nonreg_fpba}, and
    (def.) by definition of the ambiguity-restricted automata classes.
  \lncsqed
\end{proof}

 \section{Complexity results}
\label{sec:compl}

In this section, we state some upper and lower bounds on the complexity for deciding
emptiness and universality for PBA with restricted ambiguity, derived from the
characterizations and constructions presented above.

\begin{theorem}
  \label{thm:pbaamb_complexity}\ \\[-5mm]
  \begin{enumerate}
    \item the emptiness problem for $\aleph_0$-PBA$^{>0}$ is in $\complclass{NL}$
\item the universality problem for $\aleph_0$-PBA$^{>0}$ is in $\complclass{PSPACE}$
    \item the universality problem for at most exp. ambiguous or flat PBA$^{=1}$ is in $\complclass{NL}$
  \end{enumerate}
\end{theorem}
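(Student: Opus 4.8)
The plan is to reduce each of the three decision problems to a standard automata-theoretic question on a \emph{polynomial-size} nondeterministic automaton obtained from the characterizations of Section~\ref{sec:amb}, and then invoke the known complexities of emptiness and universality for $\NBA$. For the first two items I would use that the construction underlying Theorem~\ref{thm:ps_cpba_to_nba} produces, from a countably ambiguous $\mc{A}$, an $\NBA$ $\mc{N}$ of size linear in $|\mc{A}|$ with $L(\mc{N}) = L^{>0}(\mc{A})$, whose states and transitions are checkable in logarithmic space (two copies of $\underlyingnba{\mc{A}}$, a guessed switch into the fork-free copy, acceptance only in that copy). Then emptiness of $L^{>0}(\mc{A})$ is exactly emptiness of $\mc{N}$; since non-emptiness of an $\NBA$ is the reachability of an accepting cycle and $\complclass{NL}=\complclass{coNL}$, this is in $\complclass{NL}$, giving item~1. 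Item~2 follows because universality of $L^{>0}(\mc{A})$ is universality of the same polynomial $\NBA$ $\mc{N}$, and $\NBA$ universality is in $\complclass{PSPACE}$.

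The third item is the delicate one, because the obvious route through Theorem~\ref{thm:as_efpba_to_dba} yields a $\DBA$ of \emph{exponential} size (a breakpoint/subset construction), and testing universality on it directly would only place the problem in $\complclass{PSPACE}$. To reach $\complclass{NL}$ I would instead avoid determinization entirely and characterize the complement $\overline{L^{=1}(\mc{A})}$ directly on the polynomial automaton $\underlyingnba{\mc{A}}$ via Lemma~\ref{lem:cntamb_pba_runs}. If $\mc{A}$ is at most exponentially ambiguous, then by part~(2) every word has finitely many accepting runs, so $w\notin L^{=1}(\mc{A})$ iff some run on $w$ is rejecting or not limit-deterministic; if $\mc{A}$ is flat, then by part~(3) $w\notin L^{=1}(\mc{A})$ iff there is a limit-deterministic rejecting run on $w$. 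In both cases a witnessing word exists iff a corresponding ``bad'' lasso is present in $\underlyingnba{\mc{A}}$, which is a reachability question.

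I would then spell out these reachability conditions so that each lands in $\complclass{NL}$. In the exponentially ambiguous case, $L^{=1}(\mc{A})\neq\Sigma^\omega$ iff $\underlyingnba{\mc{A}}$ has a reachable cycle that either contains a fork (looping it gives a run through infinitely many forks, hence not limit-deterministic) or avoids $F$ entirely (looping it gives a rejecting run); both are guess-and-verify reachability statements. In the flat case, $L^{=1}(\mc{A})\neq\Sigma^\omega$ iff some state is reachable through arbitrary transitions from which a fork-free path leads to a fork-free cycle avoiding $F$, witnessing a limit-deterministic rejecting run; this is a two-phase reachability, again in $\complclass{NL}$. So non-universality is in $\complclass{NL}$ in each subcase, and since $\complclass{NL}=\complclass{coNL}$ by Immerman--Szelepcs\'enyi, universality itself is in $\complclass{NL}$, giving item~3.

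I expect the main obstacle to be item~3: the whole point is to bypass the exponential determinization of Theorem~\ref{thm:as_efpba_to_dba} and argue, via Lemma~\ref{lem:cntamb_pba_runs}, that non-universality of $L^{=1}(\mc{A})$ is equivalent to a pure reachability property of the polynomial automaton $\underlyingnba{\mc{A}}$. The care needed there is to verify that a reachable bad lasso indeed yields an ultimately periodic witness word outside $L^{=1}(\mc{A})$, and conversely that any witness word induces such a lasso, applying the flat or the at-most-exponentially-ambiguous characterization as appropriate (and checking the two conditions agree on the overlap of the two classes).
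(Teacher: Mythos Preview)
Your proposal is correct and follows essentially the same route as the paper: for items~1 and~2 both you and the paper use the two-copy $\NBA$ from Theorem~\ref{thm:ps_cpba_to_nba} and invoke the standard $\complclass{NL}$ emptiness and $\complclass{PSPACE}$ universality bounds for $\NBA$; for item~3 both arguments bypass the exponential breakpoint construction and instead use Lemma~\ref{lem:cntamb_pba_runs} to reduce non-universality to guessing a suitable ``bad'' run directly in $\underlyingnba{\mc{A}}$ (a rejecting run in the exponentially ambiguous case, a limit-deterministic rejecting run in the flat case), which is an $\complclass{NL}$ reachability check, together with $\complclass{NL}=\complclass{coNL}$.

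The one noteworthy difference is your extra disjunct ``reachable cycle containing a fork'' in the exponentially ambiguous case. This is harmless but redundant: as the paper implicitly uses, for at most exponentially ambiguous $\mc{A}$ one already has $w\notin L^{=1}(\mc{A})$ \emph{iff} there is a rejecting run on $w$ in $\underlyingnba{\mc{A}}$ (the existence of a non-limit-deterministic accepting run forces, via separation from the finitely many accepting runs, a rejecting continuation, hence a rejecting run). So guessing a rejecting lasso alone suffices, and you can drop the fork-cycle case without loss.
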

\begin{proof}
  $(1.+2.):$ By Theorem~\ref{thm:ps_cpba_to_nba} the languages of
  $\aleph_0$-PBA$^{>0}$ are regular. The construction of an NBA
  just uses two copies of the given PBA. For emptiness, it thus
  suffices to guess an accepted ultimately periodic word and verify
  that it is accepted by the NBA, which can be done in NL. Since
  universality for NBA in in PSPACE \cite{SistlaVW85}, we also obtain (2.).

  $(3.)$:
  If the automaton is at most exponentially ambiguous,
  there are only finitely many accepting runs on each word and as we
  know by Lemma~\ref{lem:cntamb_pba_runs} that $w\in L^{=1}(\mc{A})$ iff all runs are
  accepting, it suffices to guess a rejecting run in $\underlyingnba{\mc{A}}$, which
  implies that the ultimately periodic word $w$ labelling that run can not be in
  $L^{=1}(\mc{A})$. If the automaton is flat, then we know that for each rejected word there
  must exist a limit-deterministic rejecting run in the underlying NBA, which we also can
  guess.
  \lncsqed
\end{proof}

\vspace{-5mm}
\begin{table}\begin{center}
\begin{tabular}{c|ccc|cc|cc}
  Type & \multicolumn{3}{c|}{regular?}
       & \multicolumn{2}{c|}{Emptiness}
       & \multicolumn{2}{c}{Universality} \\
       & $>0$ & $=1$ & $>\lambda$ & $>0$ & $=1$ & $>0$ & $=1$ \\\hline

  $k$-$\PBA$ & \multirow{5}{*}{} & \multirow{4}{*}{\hspace{-4mm}\vspace{-4mm}\cmark} &
  & \multirow{3}{*}{$\in\complclass{NL}$}
  & \multirow{3}{*}{$\in\complclass{PSPACE}$}
  & \multirow{3}{*}{$\in\complclass{PSPACE}$}
  & \multirow{3}{*}{$\in\complclass{NL}$} \\\cline{4-4}

  $n^k$-$\PBA$ & & & \multicolumn{1}{|c|}{\multirow{3}{*}{\xmark}} & & & \\
  $2^n$-$\PBA$ & & & \multicolumn{1}{|c|}{} & & & \\\cline{5-8}
  flat $\PBA$  & & & \multicolumn{1}{|c|}{}
& \multirow{2}{*}{$\in\complclass{NL}$ c.}
  & \multirow{2}{*}{$\in\complclass{PSPACE}$ c.}
  & \multirow{2}{*}{$\in\complclass{PSPACE}$ c.}
  & {$\in\complclass{NL}$ c.} \\\cline{3-3}
  $\aleph_0$-$\PBA$ & & \multicolumn{2}{|c|}{\hspace{1cm}\quad}
  & & & & $\in\complclass{PSPACE}$
\end{tabular}
\end{center}
  \caption{Summary of main results from Theorems~\ref{thm:pbaamb_classes} and
  \ref{thm:pbaamb_complexity} concerning PBA with ambiguity restrictions.
  The completeness results follow from the hardness results for HPBA (which are subsumed by flat PBA)
  from \cite[Section 5]{chadha2011randomization}, the $\complclass{PSPACE}$ inclusion of
  universality for almost-sure $\aleph_0$-$\PBA$ follows from \cite[Theorem 4.4]{chadha2011randomization}.
  \vspace{-5mm}
  }
  \label{tab:result_overview}
\end{table}

Observe that $\aleph_0$-PBA$^{>0}$ subsume HPBA$^{>0}$ and the union of flat PBA$^{=1}$ and exp.
ambiguous PBA$^{=1}$ subsumes HPBA$^{=1}$, while preserving the same complexity of
the emptiness and universality problems. A summary of the main results from
Theorem~\ref{thm:pbaamb_classes} and Theorem~\ref{thm:pbaamb_complexity} is presented in
Table~\ref{tab:result_overview}.

We conclude with an observation relevant to the question about feasibility of PBA with restricted
ambiguity for the purpose of application in e.g. model-checking or synthesis.
\begin{proposition}[Relationship to classical formalisms]
  \
  \begin{itemize}
    \item There is a doubly-exponential lower bound for translation from LTL formula to
      countably ambiguous PBA with positive semantics.
    \item There is an exponential lower bound for conversion from NBA to countably
      ambiguous PBA with positive semantics.
  \end{itemize}
\end{proposition}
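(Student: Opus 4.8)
The plan is to derive both bounds from the two-copy, limit-deterministic NBA that Theorem~\ref{thm:ps_cpba_to_nba} attaches to a countably ambiguous PBA, combined with classical succinctness lower bounds for converting LTL, respectively NBA, into automata that are ``deterministic enough''. The starting observation is that the construction behind Theorem~\ref{thm:ps_cpba_to_nba} is both cheap and structure-preserving: from a countably ambiguous PBA $\mc{A}$ with $n$ states it produces an NBA of size $2n$ for $L^{>0}(\mc{A})$ in which the first copy is purely nondeterministic and carries no final states while the second copy is fork-free (hence deterministic) and carries all accepting states; this automaton is therefore \emph{limit-deterministic}. Conversely, the natural probabilistic reading of a limit-deterministic NBA (\cite[Lemma 4.2]{baier2012probabilistic}) yields a PBA with the same underlying transition structure and hence the same ambiguity. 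Thus countably ambiguous PBA under positive semantics and (countably ambiguous) limit-deterministic NBA are polynomially interconvertible, and every size lower bound for the latter transfers, up to a linear factor, to the former. Note also that since $L^{>0}(\mc{A})=L(\underlyingnba{\mc{A}})$ with $\underlyingnba{\mc{A}}$ of size $\le n$, a countably ambiguous PBA is always at least as large as an ordinary NBA; the single-exponential part of the first bound is therefore free, and the real content of the proposition is the \emph{extra} blow-up forced by the ambiguity restriction.

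For the exponential NBA bound I would exhibit a family $(L_n)_n$ over a fixed alphabet with small NBA but large limit-deterministic automata, concretely $L_n=\{w\in\{0,1\}^\omega : \text{there are infinitely many } i \text{ with } w(i)=w(i+n)\}$. This language has an NBA with $O(n)$ states (repeatedly guess a witness position $i$, remember $w(i)$, count $n$ steps, verify $w(i+n)=w(i)$, and make that verification accepting), but since the $\mathbf{GF}$-condition is a tail condition it cannot be pushed into a finite nondeterministic prefix: after the prefix the automaton must detect the matches deterministically, which forces it to remember the last $n$ bits and hence to have $2^{\Omega(n)}$ states. Thus every limit-deterministic Büchi automaton for $L_n$ has size $2^{\Omega(n)}$. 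Given a countably ambiguous PBA for $L_n$ of size $s$, Theorem~\ref{thm:ps_cpba_to_nba} yields a limit-deterministic NBA of size $2s$, so $2s\ge 2^{\Omega(n)}$ and $s\ge 2^{\Omega(n)}$, which is the claimed exponential lower bound.

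The doubly-exponential LTL bound is the delicate one, since it cannot be obtained from the limit-deterministic route alone: every LTL formula admits a \emph{single}-exponential limit-deterministic Büchi automaton, so limit-determinism by itself never forces a doubly-exponential blow-up. The extra exponent must come from the countable-ambiguity restriction itself, and the natural way to capture it is to reduce to the classical doubly-exponential lower bound for translating LTL into \emph{deterministic} automata. I would take a standard family $(\varphi_n)_n$ of LTL formulas of size $O(n)$ whose minimal deterministic automata have size $2^{2^{\Omega(n)}}$, and argue that the distinguishing argument underlying that lower bound already applies to countably ambiguous automata for $L(\varphi_n)$: on every input word such an automaton has only countably many accepting runs, each of which becomes deterministic after finitely many forks (Lemma~\ref{lem:cntamb_pba_runs}), so it cannot exploit genuine nondeterminism to merge the doubly-exponentially many pairwise inequivalent residual acceptance behaviours that the hard family induces. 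Hence a countably ambiguous NBA for $L(\varphi_n)$ must already carry a deterministic-automaton-sized state space, forcing it, and thus any corresponding PBA, to have size $2^{2^{\Omega(n)}}$.

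The step I expect to be the main obstacle is exactly this last point: separating \emph{countable} ambiguity from full nondeterminism at the level of determinisation cost. General NBA, and even general limit-deterministic NBA, for the hard LTL family have only single-exponential size, so the argument must use the absence of an $\EDAF$ pattern in an essential way---showing that countably many, eventually-deterministic runs are too weak to realise the compression that unrestricted nondeterminism provides---rather than invoking any generic determinisation procedure (which would only be single-exponential and hence give a single-exponential bound). A secondary care point is to ensure that the witness families in both parts are genuinely robust against the bounded-ambiguity structure, not merely against determinism, and, for the LTL part, that the hard deterministic behaviour is captured by formulas of linear size so that the doubly-exponential deterministic lower bound indeed lifts to $L(\varphi_n)$.
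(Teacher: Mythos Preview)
Your overall framework---use the polynomial transformation from countably ambiguous PBA$^{>0}$ to LDBA given by Theorem~\ref{thm:ps_cpba_to_nba} and then pull back known lower bounds for LDBA---is exactly the paper's approach, and your treatment of the NBA bound via a concrete family is a reasonable (if different) route to the same conclusion the paper reaches by chaining the LTL bounds.

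The genuine error is in the LTL part. Your premise that ``every LTL formula admits a \emph{single}-exponential limit-deterministic B\"uchi automaton, so limit-determinism by itself never forces a doubly-exponential blow-up'' is false: there is a doubly-exponential lower bound for translating LTL into LDBA, shown in \cite[Theorem~2]{esparza2016ltlldba}. This is precisely the result the paper invokes. Once you have it, the argument is a one-liner: a countably ambiguous PBA$^{>0}$ of size $s$ for $L(\varphi_n)$ yields, via Theorem~\ref{thm:ps_cpba_to_nba}, an LDBA of size $2s$, hence $s \ge 2^{2^{\Omega(n)}}$. There is nothing delicate about this case.

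Because of the mistaken premise you then embark on a much harder programme---showing that countable ambiguity alone (absence of $\EDAF$) forces deterministic-automaton-sized state spaces for the hard LTL family. You correctly identify this as the ``main obstacle'', but it is an obstacle you created for yourself; moreover the sketch you give (``countably many, eventually-deterministic runs are too weak to realise the compression that unrestricted nondeterminism provides'') is not an argument, and it is unclear that it can be turned into one without essentially reproving the LDBA lower bound. Drop this detour and simply cite the LTL$\to$LDBA doubly-exponential lower bound; the proof then becomes as short as the paper's.
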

\begin{proof}
  It is known \cite[Theorem 2]{esparza2016ltlldba} that there is a doubly-exponential lower bound
    from LTL to LDBA. It is also known that LTL to NBA has an exponential lower bound (e.g. \cite[Theorem
    5.42]{baier2008principles}), which implies an exponential lower bound from NBA to
    LDBA.

    By Theorem~\ref{thm:ps_cpba_to_nba} there is a polynomial transformation from countably
    ambiguous PBA with positive semantics into LDBA, which together with the aforementioned
    bounds implies the claimed lower bounds.
  \lncsqed
\end{proof}
 \section{Weakness in Probabilistic Büchi Automata}
\label{sec:weak}

In this section we investigate the class of probabilistic weak
automata (PWA), establishing the relation between different classes
defined by PWA as shown in Figure~\ref{fig:illustration_weak_classes}
(see also the description of our contribution in the introduction).

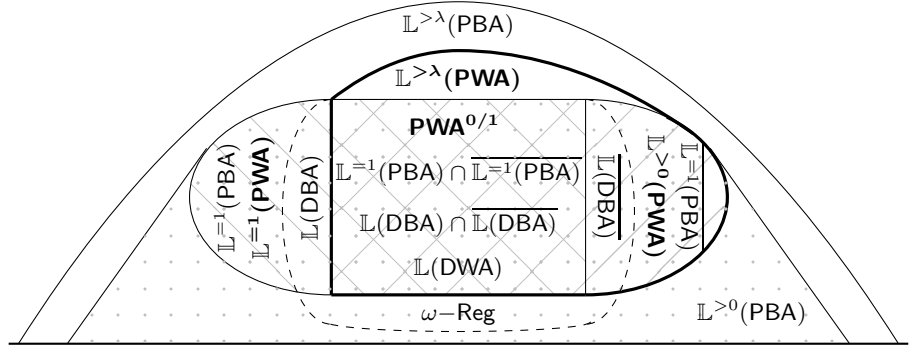
\begin{figure}[htbp]
  \centering
  
\begin{tikzpicture}
\pgftransformscale{.65}

  \newcommand{\xmin}{-2.6}
  \newcommand{\xmax}{2.6}
  \newcommand{\ymin}{1}
  \newcommand{\ymax}{5}

\draw[thick] (9.2,0) -- (-9.2,0);

\draw[] (-9,0) parabola bend (0,7) (9,0);
  \node at (0,6.5) { \begin{tabular}{c} $\mathbb{L}^{>\lambda}(\PBA)$ \end{tabular} };

\draw[very thick] (\xmin,\ymax) parabola bend (0,6) (\xmax+2.2,\ymax-0.7) ;
\node at (0,5.4) { \begin{tabular}{c}{\boldmath$\mathbb{L}^{>\lambda}(\PWA)$} \end{tabular} };

\draw[pattern color=gray!50, pattern=northwest, hatch distance=25pt] (\xmin,\ymax) arc(90:270:2.9cm and 2cm);
\fill[pattern color=gray!50, pattern=mydots] (\xmin,\ymax) arc(90:270:2.9cm and 2cm);
\node[rotate=90] at (-4.7,3) { \begin{tabular}{c} ${\mathbb{L}^{=1}(\PBA)}$ \end{tabular} };
\node[rotate=90] at (-4,3) { \begin{tabular}{c} {\boldmath$\mathbb{L}^{=1}(\PWA)$} \end{tabular} };

\draw[pattern color=gray!50, pattern=northeast, hatch distance=25pt] (\xmax,\ymax) arc(90:-90:2.9cm and 2cm);
\draw[very thick,pattern color=gray!50, pattern=northeast, hatch distance=25pt] (\xmax,\ymin) arc(-90:55:2.9cm and 2cm);
\fill[pattern color=gray!50, pattern=mydots] (\xmax,\ymax) arc(90:-90:2.9cm and 2cm);
\node[rotate=-90] at (4.7,3) { \begin{tabular}{c} $\overline{\mathbb{L}^{=1}(\PBA)}$ \end{tabular} };
\node[rotate=-90] at (4,3) { \begin{tabular}{c} {\boldmath${\mathbb{L}^{>0}(\PWA)}$} \end{tabular} };

\fill [pattern color=gray!50, pattern=northwest, hatch distance=25pt] (\xmin,\ymax) rectangle (\xmax,\ymin);
\fill [pattern color=gray!50, pattern=northeast, hatch distance=25pt] (\xmin,\ymax) rectangle (\xmax,\ymin);
\fill [pattern color=gray!50, pattern=mydots] (\xmin,\ymax) rectangle (\xmax,\ymin);
\draw[] (\xmin,\ymax) -- (\xmax,\ymax);
\draw[very thick] (\xmin,\ymax) -- (\xmin,\ymin);
\draw[very thick] (\xmin,\ymin) -- (\xmax,\ymin);
\draw[] (\xmax,\ymin) -- (\xmax,\ymax);
\node at (0,4.4) { \begin{tabular}{c} {\boldmath $\PWA^{0/1}$ \unboldmath} \end{tabular} };
\node at (0,3.5) { \begin{tabular}{c} $\mathbb{L}^{=1}(\PBA) \cap \overline{\mathbb{L}^{=1}(\PBA)}$ \end{tabular} };

\draw[] (-5.15,4) -- (-8,0);
\draw[] (5.15,4) -- (8,0);

\fill[pattern color=gray!50, pattern=mydots] (-8,0) -- (-5.15,4) -- (5.15,4) -- (8,0);

\node at (6,0.6) { \begin{tabular}{c} $\mathbb{L}^{>0}(\PBA)$ \end{tabular} };

\draw[dashed] (\xmin,\ymax) arc(90:270:1cm and 2.3cm);
\draw[dashed] (\xmax,\ymax) arc(90:-90:1cm and 2.3cm);
\draw[dashed] (2.7,0.45) arc(0:-180:2.7cm and 0.2cm);
\node at (0,0.6) { \begin{tabular}{c} $\op{\omega}{-}\op{Reg}$ \end{tabular} };

  \node[rotate=-90] at (3,3) { \begin{tabular}{c} $\overline{\mbb{L}(\DBA)}$ \end{tabular} };
  \node[rotate=90] at (-3,3) { \begin{tabular}{c} ${\mbb{L}(\DBA)}$ \end{tabular} };
\node at (0,2) { \begin{tabular}{c}
  $\mbb{L}(\DBA)\cap\overline{\mbb{L}(\DBA)}$ \\[2mm]
  $\mbb{L}(\op{DWA})$
  \end{tabular} };
\let\xmin\undefined
\let\xmax\undefined
\let\ymin\undefined
\let\ymax\undefined
\end{tikzpicture}
   \caption{Illustration of relationships between the class of languages accepted by weak
  probabilistic automata under
  various semantics with other already known classes. The overlapping patterns indicate intersection
  of classes, where dots mark $\mbb{L}^{>0}(\PBA)$, and different diagonal lines
  respectively $\mbb{L}^{=1}(\PBA)$ and $\overline{\mbb{L}^{=1}(\PBA)}$.
  The dashed line indicates intersections with different subclasses of
  regular languages. The class $\mathbb{L}^{>\lambda}(\PBA)$ contains all the
  other depicted classes, $\mathbb{L}^{>\lambda}(\PWA)$ contains the
  area inside the thick line.
The depicted fact that
  $\mbb{L}^{>0}(\PWA) = \mathbb{L}^{>\lambda}(\PWA) \cap \mathbb{L}^{>0}(\PBA)$ is
  a conjecture, one direction is shown in Theorem~\ref{thm:expressive_power_pwa}.
  }
  \label{fig:illustration_weak_classes}
\end{figure}

As a first remark, notice that PWA can be ``complemented'' by
inverting accepting and rejecting states and switching between dual
semantics, e.g., for a PWA $\mc{A}$ we have
$\overline{L^{>0}(\mc{A})} = L^{=1}(\overline{\mc{A}})$, where
$\overline{\mc{A}}$ is just $\mc{A}$ with inverted accepting state set
$F' = Q \setminus F$.

Since the overarching theme of this paper is trying to find regular
subclasses of PBA, we will next establish the following result, showing
that there is no hope to find a complete syntactical characterization
of regularity in PBA:

\begin{restatable}{theorem}{thmregundec}
  \label{thm:reg_undec}
  The regularity of PWA (and therefore of PBA) under positive, almost-sure and threshold semantics
  is an undecidable problem.
\end{restatable}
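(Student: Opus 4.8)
The plan is to prove undecidability by a reduction from a classical undecidable problem for probabilistic automata on \emph{finite} words, namely the strict threshold (emptiness) problem for PFA: given a $\PFA$ $\mc{P}$ over an alphabet $\Sigma$ and a rational $\lambda\in ]0,1[$, it is undecidable whether there is a finite word $u\in\Sigma^*$ accepted with probability $>\lambda$. Note that one cannot hope to reduce from an \emph{emptiness} problem of PWA themselves, since for PWA with positive or almost-sure semantics emptiness and universality are decidable; the undecidability must instead stem from the inability to detect whether an intrinsically non-regular counting behaviour is actually triggered. From each instance $(\mc{P},\lambda)$ I would construct, in polynomial time, a $\PWA$ $\mc{A}$ whose accepted language (under the semantics under consideration) equals a \emph{fixed regular} language $L_{\mathrm{reg}}$ when the instance is negative and a \emph{fixed non-regular} language when the instance is positive. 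Deciding regularity of $L(\mc{A})$ would then decide the instance; moreover, since the negative case always yields the same fixed $L_{\mathrm{reg}}$, the same reduction shows that it is undecidable whether a given PWA accepts this one fixed regular language.

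The technical core is a gadget that converts a PFA witness into the non-regular counting phenomenon already displayed in Figure~\ref{fig:pwa_fig}. For threshold semantics I would couple $\mc{P}$ with a counter in the style of Figure~\ref{fig:pwa_fig}(a): there, acceptance with probability $>\tfrac12$ is equivalent to a sign flip in the difference of two exponentially decaying quantities, which is exactly the shape of a PFA acceptance probability crossing the cut-point $\lambda$. Arranging the input as a sequence of $\Sigma$-blocks separated by a marker and running the product of $\mc{P}$ with such a counter, the appearance of a block $u$ with $\op{Pr}_{\mc{P}}(u)>\lambda$ switches on a genuine counting constraint on the infinite word --- non-regular by a Myhill--Nerode / pumping argument --- whereas if no block ever crosses the threshold the acceptance condition degenerates to a regular language. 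Crucially, the whole construction stays \emph{weak}: the non-trivial strongly connected components are either all accepting or all rejecting, so $\mc{A}$ is a legitimate PWA.

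To cover all three semantics, I would use the complementation duality noted just above the theorem. For almost-sure semantics I would replace the Figure~\ref{fig:pwa_fig}(a) counter by the Figure~\ref{fig:pwa_fig}(c)-style gadget, which realises a non-regular language under $L^{=1}$, and argue analogously. The positive case then follows for free: since $\overline{L^{>0}(\mc{A})}=L^{=1}(\overline{\mc{A}})$, and $\overline{\mc{A}}$ is again a PWA (obtained by swapping the final set), and regularity is closed under complement, undecidability of regularity for almost-sure PWA immediately transfers to positive PWA. Thus a single reduction skeleton, instantiated with the two non-regular gadgets and combined with the duality, handles positive, almost-sure, and threshold semantics at once.

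The hardest part will be the correctness of the gadget in both directions simultaneously. The easy direction is that a negative instance yields a regular language. The delicate direction is showing that a PFA witness genuinely forces non-regularity of the resulting $\omega$-language rather than merely changing it by a regular amount: one must verify that the threshold comparison coming from $\mc{P}$ is faithfully aligned with the counter so that the counting constraint is really imposed on infinitely many suffixes, and then establish non-regularity rigorously (via Myhill--Nerode classes or a pumping argument on the counter). Keeping this alignment exact while simultaneously maintaining the weakness constraint is where the main effort lies.
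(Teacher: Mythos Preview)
Your overall scaffolding (reduce from an undecidable PFA problem, establish the almost-sure case, then transfer to positive semantics by the duality $\overline{L^{>0}(\mc{A})}=L^{=1}(\overline{\mc{A}})$ and to threshold semantics via $\mbb{L}^{>0}(\PWA)\subseteq\mbb{L}^{>\lambda}(\PWA)$) matches the paper. But the paper does \emph{not} reduce from the strict-threshold emptiness problem and does \emph{not} attach any external counter. It reduces from the \emph{value~1 problem} for PFA, under the harmless extra assumption that no finite word is accepted with probability exactly~$1$. From a PFA $\mc{A}$ it builds a PWA $\mc{B}$ over $\Sigma\cup\{\#\}$ that simply reruns $\mc{A}$ on each $\#$-separated block: on $\#$ from an accepting state of $\mc{A}$ the automaton restarts, and on $\#$ from a non-accepting state it falls into a tiny gadget that is accepting almost surely iff infinitely many $\#$ follow. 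If $\mc{A}$ does not have value~$1$, a uniform gap $\varepsilon>0$ forces almost-sure escape into the gadget, so $L^{=1}(\mc{B})=(\Sigma^*\#)^\omega$, a fixed regular language. If $\mc{A}$ has value~$1$, one can choose blocks $w_i$ with $\prod_i\op{Pr}_{\mc{A}}(w_i)>0$, giving a rejected word; yet since no single word has probability~$1$, every ultimately periodic word is still accepted. Hence the rejected set is non-empty but contains no ultimately periodic word, so $L^{=1}(\mc{B})$ is non-regular. The non-regularity thus comes for free from the value-1 dichotomy, with no counter and no calibration.

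Your route via threshold emptiness plus a Figure~\ref{fig:pwa_fig}(a)/(c) counter has a concrete gap. A PWA cannot branch discretely on whether $\op{Pr}_{\mc{P}}(u)>\lambda$; any product construction yields an overall acceptance probability of the form $p_u\cdot c(w)+(1-p_u)\cdot r$, so in the positive case the counter is effectively tested at a cut-point $\tau$ that depends on the \emph{unknown} witness value $p_u$. But the Figure~\ref{fig:pwa_fig}(a) automaton has $c(w)=\tfrac12(1-2^{-\#_a}+2^{-\#_b})$, and $\{w:c(w)>\tau\}$ is non-regular only for $\tau=\tfrac12$; for any other $\tau$ the set is a finite union of conditions like ``$\#_b=k$ and $\#_a>f(k)$'' and hence regular. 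So you would need a counter that is non-regular at \emph{every} threshold in an interval, which your plan does not provide (and is at least as hard as the theorem itself). The analogous issue arises for the almost-sure variant with the Figure~\ref{fig:pwa_fig}(c) gadget. The fix is to change the source problem: use value~$1$ instead of threshold emptiness, so that the negative case gives a uniform gap (hence a fixed regular language) and the positive case gives non-regularity directly via the ``no ultimately periodic rejected word'' argument, with no counter needed.
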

\begin{proof}[sketch] Since $\mbb{L}^{>\lambda}(\text{PWA}) \supseteq
  \mbb{L}^{>0}(\text{PWA})$ (see Theorem~\ref{thm:expressive_power_pwa}),
  $\mbb{L}^{>0}(\text{PWA}) = \overline{\mbb{L}^{=1}(\text{PWA})}$,
  and the class of regular $\omega$-languages is closed under
  complement, it suffices to show the statement for PWA$^{=1}$.
We do this by reduction from the value 1
  problem for PFA, which is the question whether for each $\varepsilon>0$ there exists a
  word accepted by the PFA with probability $> 1-\varepsilon$. This problem is known to be
  undecidable \cite{gimbert2010probabilistic}. We consider a
  slightly modified version of the problem by assuming that no word
  is accepted with probability 1 by the given PFA. The problem remains
  undecidable under this assumption, because one can check if a
  PFA accepts a finite word with probability 1 by a simple subset construction.

  Given some PFA $\mc{A}$, we construct a PWA$^{=1}$ $\mc{B}$ by
  taking a copy of $\mc{A}$ and extending it with a new symbol $\#$ such that from
  accepting states of $\mc{A}$ the automaton is ``restarted'' on $\#$, while from
  non-accepting states $\#$ leads into a new part which ensures that infinitely many
  $\#$ are seen and contains the only accepting state of $\mc{B}$. We show that
  $L^{=1}(\mc{B}) = (\Sigma^*\#)^\omega \setminus R$, where $R=\emptyset$ if $\mc{A}$ does
  not have value 1, and $R$ is non-empty but  does not contain an ultimately
  periodic word, otherwise. This implies that $L^{=1}(\mc{B})$ is
  regular iff $\mc{A}$ does
  not have value 1. \lncsqed
\end{proof}

We will now show that PWA with almost-sure semantics are as expressive
as PBA, and with positive semantics as expressive as PCA.
\begin{restatable}{theorem}{thmpcatopwa}
  \label{thm:pca_to_pwa}
  $\mathbb{L}^{>0}(\PWA) = \mathbb{L}^{>0}(\PCoBA)$ and
  $\mathbb{L}^{=1}(\PWA) = \mathbb{L}^{=1}(\PBA)$.
\end{restatable}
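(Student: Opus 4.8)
The plan is to reduce both equalities to a single one by duality, and then prove that one by a breakpoint-guessing construction. Recall from the remark preceding the theorem that complementing a PWA (inverting its final set, which keeps it a union of SCCs and hence weak) gives $\mathbb{L}^{>0}(\PWA) = \overline{\mathbb{L}^{=1}(\PWA)}$. Dually, for any probabilistic automaton the events ``visit $F$ infinitely often'' and ``visit $F$ only finitely often'' partition the run space, so reading one and the same transition structure once with Büchi and once with co-Büchi acceptance yields complementary run-acceptance probabilities $\op{Pr}(w)$ on every word $w$; this gives $\mathbb{L}^{=1}(\PBA) = \overline{\mathbb{L}^{>0}(\PCoBA)}$. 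Consequently the second equality is exactly the complement of the first, so it suffices to establish $\mathbb{L}^{>0}(\PWA) = \mathbb{L}^{>0}(\PCoBA)$.

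For the easy inclusion $\mathbb{L}^{>0}(\PWA) \subseteq \mathbb{L}^{>0}(\PCoBA)$ I would reinterpret a weak automaton directly as a co-Büchi automaton. Since every run is eventually trapped in a single SCC and in a weak automaton each SCC lies entirely inside or outside $F$, a run is weak-accepting iff it is trapped in an SCC contained in $F$, i.e. iff it visits $Q\setminus F$ only finitely often. Hence the same transition structure, read as a \PCoBA\ with final set $Q\setminus F$, has exactly the same accepting runs on every word and therefore the same positive-probability language.

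The substance is the converse $\mathbb{L}^{>0}(\PCoBA) \subseteq \mathbb{L}^{>0}(\PWA)$. Given a \PCoBA\ $\mc{A}$, I would build a PWA $\mc{B}$ that probabilistically guesses a time after which the simulated run of $\mc{A}$ never visits $F$ again. Concretely, $\mc{B}$ consists of a \emph{simulation copy} of $\mc{A}$, in which all states are non-final, and a \emph{committed copy} made up of the non-final states of $\mc{A}$, all of which are declared final. On each letter the simulation copy follows $\delta$, except that with some fixed probability $p\in{]0,1[}$ it instead \emph{commits}, moving onto a non-final successor in the committed copy; once committed, the run follows $\delta$ inside the committed copy, but any transition that $\mc{A}$ would take into $F$ is redirected to the rejecting sink $q_{rej}$. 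Because committing is one-directional and the committed part is closed apart from the exit to $q_{rej}$, its states form a union of SCCs, all final, so $\mc{B}$ is genuinely weak. By construction a run of $\mc{B}$ is weak-accepting iff it commits at some time $t$ and its $\mc{A}$-tail from $t$ on avoids $F$ forever, which is precisely a co-Büchi-accepting tail of $\mc{A}$.

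The remaining task, which I expect to be the main obstacle, is to verify that this construction preserves positive-probability acceptance in both directions. For $w\in L^{>0}(\mc{A})$ there is, with positive probability, a time $t$ from which the $\mc{A}$-run avoids $F$; reaching the corresponding configuration in the simulation copy has positive probability (each step retains mass in the simulation copy and $p<1$), committing there contributes a positive factor $p$, and surviving forever in the committed copy has the same positive probability as $\mc{A}$'s tail avoiding $F$, so $w\in L^{>0}(\mc{B})$. Conversely, forgetting the commit step maps every accepting run of $\mc{B}$ to an $\mc{A}$-run that visits $F$ only finitely often, and since the committed tail follows $\delta$ with unchanged probabilities, a positive-measure set of accepting $\mc{B}$-runs projects to a positive-probability set of co-Büchi-accepting $\mc{A}$-runs; hence $L^{>0}(\mc{B}) = L^{>0}(\mc{A})$. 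With the first equality established, the second, $\mathbb{L}^{=1}(\PWA)=\mathbb{L}^{=1}(\PBA)$, follows at once from the two complementation identities of the first paragraph.
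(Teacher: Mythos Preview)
Your proposal is correct and follows essentially the same approach as the paper: you reduce the second equality to the first by the same duality (PWA complementation by inverting $F$, and Büchi/co-Büchi duality on the same transition structure), and for the nontrivial inclusion $\mathbb{L}^{>0}(\PCoBA)\subseteq\mathbb{L}^{>0}(\PWA)$ you use the same two-copy construction---a non-accepting ``guess/simulation'' copy that probabilistically switches into an accepting ``verify/committed'' copy in which transitions into $F$ are redirected to a rejecting sink---with the same correctness argument based on decomposing accepting runs by the first time after which $F$ is avoided. The only cosmetic differences are that the paper fixes the switching probability to $\tfrac12$ and keeps copies of $F$-states in the verify part (as trivial SCCs), whereas you omit them; neither affects the argument.
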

\begin{proof}[sketch] It suffices to show the first statement. The second then follows by duality, i.e., we can
  interpret a PBA$^{=1}$ $\mc{A}$ recognizing $L$ as a PCA$^{>0}$ recognizing
  $\overline{L}$ and just apply the construction to get a PWA$^{>0}$ $\mc{B}$ for $\overline{L}$,
  such that $\overline{\mc{B}}$ (with inverted accepting and rejecting states) is a
  PWA$^{=1}$ for $L$. In the first statement the $\subseteq$ inclusion is trivial, hence
  we only need to show that $\mathbb{L}^{>0}(\PCoBA) \subseteq \mathbb{L}^{>0}(\PWA)$.

  We construct a PWA$^{>0}$ consisting of two copies of the original PCA$^{>0}$,
  a \emph{guess} copy and a \emph{verify} copy. In the first copy, the automaton can guess
  that no final states will be visited anymore and switch to the verify copy, which is
  accepting, but where all transitions into final states are redirected to a rejecting
  sink.
  \lncsqed
\end{proof}

Next, we show that languages that can be accepted by both, a PWA with almost-sure
semantics, and by a PWA with positive semantics, are regular and
can be accepted by a DWA. For the proof, we rely on a characterization
of DWA languages in terms of the Myhill-Nerode equivalence relation
from \cite{staiger1983finite}. So we first define this equivalence,
and show that languages defined by PBA with positive semantics have
only finitely many equivalence classes. Then we come back to the
result for PWA.

For $L\subseteq \Sigma^\omega$, define the Myhill-Nerode equivalence relation
$\sim_L \subseteq \Sigma^* \times \Sigma^*$ by $u \sim_L v$ iff $uw\in
L \Leftrightarrow vw\in L$ for all $w\in\Sigma^\omega$.  Then the
following holds:
\begin{lemma}[Finitely many Myhill-Nerode classes]
  \label{lem:fin_myhnerode_classes}\ \\
  Languages in $\mbb{L}^{>0}(\PBA)$ have finitely many Myhill-Nerode equivalence classes.
\end{lemma}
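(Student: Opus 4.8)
The plan is to show that $\sim_L$ is refined by a much coarser equivalence on prefixes that manifestly has only finitely many classes, namely the one that identifies prefixes reaching the same set of states. Fix a PBA $\mc{A}$ with $L = L^{>0}(\mc{A})$, and for a prefix $u\in\Sigma^*$ let $S_u := \op{supp}(\delta(\mu_0,u))$ be the support of the distribution reached after reading $u$ (equivalently, $S_u = \Delta(Q_0,u)$ in the underlying automaton $\underlyingnba{\mc{A}}$). Since $Q$ is finite, there are at most $2^{|Q|}$ distinct such sets, so it suffices to prove that $S_u = S_v$ implies $u \sim_L v$.

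The key fact I would establish first is a characterization of positive-semantics acceptance after a prefix: for every $u\in\Sigma^*$ and $w\in\Sigma^\omega$, we have $uw\in L^{>0}(\mc{A})$ if and only if there is some state $q\in S_u$ from which $w$ is accepted with positive probability, i.e.\ $w \in L^{>0}(\mc{A}_q)$, where $\mc{A}_q$ denotes $\mc{A}$ with initial distribution concentrated on $q$. This holds because, by the Markov property of the run measure, the overall acceptance probability decomposes as $\op{Pr}(\text{accept } uw) = \sum_{q\in Q} (\delta(\mu_0,u))(q)\cdot p_q(w)$, where $p_q(w)$ is the probability that $\mc{A}_q$ accepts $w$. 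Indeed, the B\"uchi condition ignores the finite prefix $u$ and depends only on the state reached after $u$, so the contribution of the suffix is exactly $p_q(w)$. This is a sum of nonnegative terms, hence strictly positive iff at least one summand is, which happens iff some $q$ with $(\delta(\mu_0,u))(q)>0$ satisfies $p_q(w)>0$, i.e.\ $q\in S_u$ and $w\in L^{>0}(\mc{A}_q)$.

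Given this characterization the refinement is immediate: if $S_u = S_v$, then for every $w\in\Sigma^\omega$ we have $uw\in L$ iff $\exists q\in S_u: w\in L^{>0}(\mc{A}_q)$ iff $\exists q\in S_v: w\in L^{>0}(\mc{A}_q)$ iff $vw\in L$, so $u\sim_L v$. Consequently the map $u\mapsto S_u$ factors through $\sim_L$, and the number of Myhill-Nerode classes of $L$ is bounded by the number of distinct reachable supports, that is, by $2^{|Q|}$.

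I expect the only delicate point to be the justification of the decomposition of the acceptance probability, i.e.\ that the run measure of $uw$ factors through the state reached after the finite prefix $u$ so that the suffix contributes precisely the positive-acceptance probability from that state. Once the run measure is set up as in \cite{baier2012probabilistic}, this is a routine consequence of $\sigma$-additivity together with the fact that B\"uchi acceptance is insensitive to finite prefixes (the rejecting sink $q_{rej}$, which may lie in $S_u$, contributes $p_{q_{rej}}(w)=0$ and can be ignored). The combinatorial heart of the statement is then simply the finiteness of the powerset of $Q$.
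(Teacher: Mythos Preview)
Your proof is correct and follows essentially the same approach as the paper: both arguments show that $u \sim_L v$ whenever the supports of the distributions reached after $u$ and $v$ coincide, via the observation that $uw \in L^{>0}(\mc{A})$ iff some $q$ with positive probability after $u$ accepts $w$ with positive probability, and then bound the number of classes by $2^{|Q|}$. Your version is slightly more explicit about the Markov decomposition $\op{Pr}(\text{accept } uw) = \sum_q (\delta(\mu_0,u))(q)\cdot p_q(w)$, which the paper compresses into a one-line remark.
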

\begin{proof}
  Let $\mc{A}=(Q,\Sigma,\delta,\mu_0,F)$ be some PBA$^{>0}$ and $u\in\Sigma^*$ some word
  and let $\mu_u := \delta^*(\mu_0, u)$ be the probability distribution on states of $\mc{A}$
  after reading $u$. Pick any $w\in \Sigma^\omega$ and notice that $uw \in L=L^{>0}(\mc{A})$
  iff there exists some state $q$ such that $\mu_u(q) > 0$ and the probability to accept
  $w$ from $q$ is also $>0$, as the product of two positive numbers clearly still
  is positive. But then, for any two $u,v \in \Sigma^*$ we have that whenever $\mu_u(q) >0
  \Leftrightarrow \mu_v(q) > 0$ for all $q$, then we have $uw \in L \Leftrightarrow vw \in
  L$ for all $w\in \Sigma^\omega$ by the reasoning above, as the exact value does not
  matter for acceptance, and therefore $u \sim_L v$. But as
  there are only at most $2^{|Q|}$ different possibilities how values in a distribution
  $\mu$ over $Q$ are either equal to or greater than $0$, this is an upper bound on
  the number of different equivalence classes.
  \lncsqed
\end{proof}

\begin{theorem} \label{the:pwa_extremal}
  $\mbb{L}^{>0}(\PWA) \cap \mbb{L}^{=1}(\PWA) = \mbb{L}(\DWA) = \mbb{L}(\PWA^{0/1})$
\end{theorem}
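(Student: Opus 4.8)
The three classes fit into the cycle of inclusions $\mbb{L}(\DWA) \subseteq \mbb{L}(\PWA^{0/1}) \subseteq \mbb{L}^{>0}(\PWA)\cap\mbb{L}^{=1}(\PWA) \subseteq \mbb{L}(\DWA)$, and only the last of these is substantial. The plan is to dispatch the first two immediately: reading a DWA as a PWA in which every transition has probability $1$ turns it into a $0/1$ PWA recognizing the same language, and in a $0/1$ PWA all three semantics coincide, so its language lies simultaneously in $\mbb{L}^{>0}(\PWA)$ and in $\mbb{L}^{=1}(\PWA)$. Thus the entire content is the inclusion $\mbb{L}^{>0}(\PWA)\cap\mbb{L}^{=1}(\PWA) \subseteq \mbb{L}(\DWA)$.

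So fix $L = L^{=1}(\mc{A}) = L^{>0}(\mc{B})$ for PWA $\mc{A},\mc{B}$. I would invoke Staiger's Myhill--Nerode characterization \cite{staiger1983finite}, which for a language with finitely many $\sim_L$-classes reduces $\DWA$-recognizability to a weakness/Borel condition, namely that the language be both $G_\delta$ and $F_\sigma$. The index condition is free: every PWA is in particular a PBA, so $L\in\mbb{L}^{>0}(\PWA)\subseteq\mbb{L}^{>0}(\PBA)$, and Lemma~\ref{lem:fin_myhnerode_classes} gives that $\sim_L$ has finitely many classes. What remains is to pin down the topological level of $L$.

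The key step is to show that every almost-sure PWA language is $G_\delta$. Here I would use that the SCCs of a weak automaton form a DAG, so along any fixed input word a run changes SCC only finitely often and is eventually trapped in a single SCC, which is accepting or rejecting; consequently the rejecting event ``the limit SCC of the run is rejecting'' is the countable union over the trapping time $n$ of the closed events ``the run stays within rejecting-SCC states from time $n$ on''. Each probability $\op{Pr}_w(\text{stay in rejecting states on }[n,m])$ depends only on a finite prefix of $w$ and is therefore locally constant, whence a standard semicontinuity argument gives that $\{\,w : \op{Pr}_w(\text{reject})=0\,\}$ is $G_\delta$; so $L=L^{=1}(\mc{A})$ is $G_\delta$. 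Applying the same bound to $\overline{\mc{B}}$ and using the PWA complementation rule $\overline{L^{>0}(\mc{B})}=L^{=1}(\overline{\mc{B}})$ yields $\overline{L}=L^{=1}(\overline{\mc{B}})$ is $G_\delta$, i.e.\ $L$ is $F_\sigma$. Being both $G_\delta$ and $F_\sigma$ with finitely many Myhill--Nerode classes, $L$ is $\DWA$-recognizable by Staiger's characterization, closing the cycle.

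The main obstacle I expect is the topological bound $\mbb{L}^{=1}(\PWA)\subseteq G_\delta$: translating an almost-sure acceptance condition into an explicit countable intersection of open sets relies essentially on the single-SCC confinement property of weak automata together with a careful finite-prefix/semicontinuity analysis of the run measure, and one must be careful that Staiger's finite-index characterization is applied correctly, since $L$ is a priori only known to have finite index and is \emph{not} yet known to be $\omega$-regular (its regularity, and hence the $\DWA$, is an output of the characterization rather than an assumption).
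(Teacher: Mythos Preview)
Your proposal is correct and follows essentially the same route as the paper: the same cycle of trivial inclusions, the same appeal to Staiger's characterization \cite{staiger1983finite} of $\DWA$ languages as the finite-index languages in $G_\delta\cap F_\sigma$, the same use of Lemma~\ref{lem:fin_myhnerode_classes} for the index bound, and the same complementation trick $\overline{L^{>0}(\mc{B})}=L^{=1}(\overline{\mc{B}})$ to obtain $F_\sigma$ from $G_\delta$.

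The one difference is in how the $G_\delta$ bound is obtained. The paper simply invokes \cite[Lemma~3.2]{chadha2011randomization}, which already shows that \emph{every} almost-sure PBA language (not just weak) lies in $G_\delta$; no new argument is needed. You instead re-derive this for PWA directly via the SCC-trapping decomposition and a semicontinuity argument. Your sketch is correct: writing the rejection event as the increasing union $\bigcup_n E_n$ with $E_n=\bigcap_{m\ge n}E_{n,m}$, one gets $\Pr_w(E_n)=\inf_m \Pr_w(E_{n,m})$ upper semicontinuous (infimum of locally constant functions), hence $\{w:\Pr_w(E_n)=0\}$ is $G_\delta$, and $\{w:\Pr_w(\text{reject})=0\}=\bigcap_n\{w:\Pr_w(E_n)=0\}$ is again $G_\delta$. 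This is a valid self-contained alternative to the citation, though note that the weakness hypothesis is not actually needed for this step (the cited lemma covers general PBA), so your restriction to PWA in the $G_\delta$ argument is harmless but not essential.
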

\begin{proof}
The inclusions $\mbb{L}(\DWA) \subseteq \mbb{L}(\PWA^{0/1})
\subseteq \mbb{L}^{>0}(\PWA) \cap \mbb{L}^{=1}(\PWA)$ are trivial,
hence it remains to show $\mbb{L}^{>0}(\PWA) \cap \mbb{L}^{=1}(\PWA)
\subseteq \mbb{L}(\DWA)$.

So let $L$ be a language from $\mbb{L}^{>0}(\PWA) \cap
\mbb{L}^{=1}(\PWA)$. We want to show that $L$ can be accepted by a
DWA.  We use the following characterization of DWA languages
\cite[Theorem 21]{staiger1983finite}: The DWA languages are precisely
the languages with finitely many Myhill-Nerode classes in the class
$G_\delta \cap F_\sigma$ in the Borel hierarchy. The classes
$G_\delta$ and $F_\sigma$ of the Borel hierarchy are often also
referred to as $\Pi_2$ and $\Sigma_2$. We do not introduce the details
of this hierarchy here, but rather refer the reader not familiar with
these concepts to \cite{staiger1983finite} and
\cite{chadha2011randomization}.

We already know that $L$ has finitely many Myhill-Nerode classes by
Lemma~\ref{lem:fin_myhnerode_classes} (as PWA are special cases of
PBA). It remains to show that $L$ is in the class $G_\delta \cap
F_\sigma$. It is known that PBA with almost-sure semantics define
languages in $G_\delta$ \cite[Lemma
  3.2]{chadha2011randomization}. Hence $L$ is in $G_\delta$. Since $L$
is accepted by a PWA with positive semantics, the complement of $L$ is
accepted by a PWA with almost-sure semantics (as noted at the
beginning of this section). We obtain that the complement of $L$ is
also in $G_\delta$ again by \cite[Lemma
  3.2]{chadha2011randomization}. This means that $L$ is in $F_\sigma$,
which by definition consists of the complements of languages from
$G_\delta$.
\lncsqed
\end{proof}

Concluding this section, we show a result about weak automata with threshold
semantics, which (not surprisingly) turn out to be even more expressive.
A careful analysis
of the PWA $\mc{A}$ in Fig.~\ref{fig:pwa_fig}(a)
shows the following result:
\begin{restatable}{proposition}{prpthpwanonreg}
  \label{prp:th_pwa_nonreg_strong}
For all thresholds $\lambda \in ]0,1[$ there exists
  a PWA $\mc{A}$ such that
  $L^{>\lambda}(\mc{A})$ is not regular and not $PBA^{>0}$ recognizable.
\end{restatable}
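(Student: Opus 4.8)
The plan is to analyze the specific PWA $\mc{A}$ depicted in Figure~\ref{fig:pwa_fig}(a), which is already asserted to recognize under threshold $\tfrac{1}{2}$ the non-regular language $\{w = (a+b)^*\$^\omega \mid \#_a(w) > \#_b(w)\}$. First I would recall how this automaton works: the two initial states $q_a$ and $q_b$ each receive probability $\tfrac{1}{2}$, and while reading a prefix from $(a+b)^*$, the state $q_a$ stays on $b$ with probability $1$ and on $a$ branches (staying or moving to the accepting sink region $q_+$) with probability $\tfrac{1}{2}$, and symmetrically $q_b$ treats $a$ and $b$ with swapped roles. Reading a $\$$ from $q_b$ or $q_+$ sends the run to the accepting sink $q_\$$. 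A direct computation shows that the probability of reaching the accepting sink encodes the comparison between $\#_a$ and $\#_b$ in the finite prefix before the $\$^\omega$ tail, so that the acceptance probability exceeds $\tfrac{1}{2}$ exactly when $\#_a(w) > \#_b(w)$.

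The main work is to handle arbitrary thresholds $\lambda \in\ ]0,1[$ rather than just $\tfrac{1}{2}$. My plan is to rescale the initial distribution and/or the branching probabilities of the two components so that the balance point of the comparison shifts appropriately; concretely, I would adjust the weights so that the two families of runs contribute masses whose ratio crosses the threshold $\lambda$ precisely along a non-regular boundary. The key point is that for any rational or irrational $\lambda$ we can choose such parameters so that the resulting threshold language still separates words by an unbounded arithmetic condition on letter counts, which cannot be captured by a finite automaton; the non-regularity then follows by a standard Myhill--Nerode or pumping argument, since the prefixes $a^n$ (or $a^n b^m$) yield infinitely many inequivalent equivalence classes.

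To establish the second, stronger claim that the language is not $PBA^{>0}$ recognizable, I would invoke Lemma~\ref{lem:fin_myhnerode_classes}, which states that every language in $\mbb{L}^{>0}(\PBA)$ has only finitely many Myhill--Nerode equivalence classes. Thus it suffices to exhibit infinitely many pairwise inequivalent classes for $L^{>\lambda}(\mc{A})$. I would use the words $u_n = a^n$ as representatives: for distinct $n$, there exist continuations $w$ (a suitable tail in $b^*\$^\omega$) that are accepted after $u_n$ but rejected after $u_m$ for $m \neq n$, because the surplus count $\#_a - \#_b$ threshold is crossed at different points. This produces infinitely many classes, contradicting the finite bound from Lemma~\ref{lem:fin_myhnerode_classes}, and hence $L^{>\lambda}(\mc{A}) \notin \mbb{L}^{>0}(\PBA)$.

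The hardest part will be the quantitative analysis for a \emph{general} $\lambda$: I must verify that after reparametrizing the automaton the acceptance probability as a function of the input prefix still admits a clean closed form whose comparison with $\lambda$ yields a genuinely non-regular set, and in particular that the boundary condition does not accidentally degenerate into a regular (e.g. eventually constant or periodic) condition for special values of $\lambda$. I expect this to require care in choosing the branching probabilities as functions of $\lambda$ and in confirming that the resulting surplus-based acceptance criterion remains sensitive to the unbounded difference $\#_a(w) - \#_b(w)$ for every threshold in the open interval.
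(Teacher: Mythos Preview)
Your overall strategy matches the paper's: analyze the automaton of Figure~\ref{fig:pwa_fig}(a), compute the acceptance probability explicitly as a function of $\#_a$ and $\#_b$, conclude that the threshold language has infinitely many Myhill--Nerode classes, and then invoke Lemma~\ref{lem:fin_myhnerode_classes} to rule out $\PBA^{>0}$ recognizability. That part is fine.

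The one substantive difference is in how you pass from $\lambda=\tfrac{1}{2}$ to an arbitrary $\lambda\in\,]0,1[$. You propose to reparametrize the branching probabilities and initial weights so that the crossing point shifts to $\lambda$, and you correctly identify as the ``hardest part'' the need to verify that the resulting boundary condition does not degenerate into something regular for awkward values of $\lambda$. The paper sidesteps this entirely: it proves the statement only for $\lambda=\tfrac{1}{2}$ and then invokes \cite[Lemma~4.15]{baier2012probabilistic}, which shows that any PBA accepting a language with threshold $>\lambda_0$ can be modified (by adding sinks and adjusting the initial distribution) to accept the \emph{same} language with any other threshold $>\lambda$. Since adding sinks preserves weakness, the modified automaton is still a PWA, and because the language is unchanged, the non-regularity and the infinitely-many-classes argument carry over verbatim. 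This removes precisely the difficulty you were worried about: there is no per-$\lambda$ quantitative analysis to redo and no risk of degeneration. Your route would work, but it is strictly more laborious than necessary.
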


Putting things together, we can say the following about threshold PWA,
establishing the relation of $\mathbb{L}^{>\lambda}(\PWA)$ to the other classes in
Figure~\ref{fig:illustration_weak_classes}:
\begin{theorem}[Expressive power of threshold PWA]
  \label{thm:expressive_power_pwa}
  \quad\\[-5mm]
  \begin{enumerate}
    \item $\mbb{L}^{>0}(\PWA) \subseteq \mathbb{L}^{>\lambda}(\PWA) \cap \mathbb{L}^{>0}(\PBA)$.
    \item $\mathbb{L}^{>\lambda}(\PWA)$ and $\mathbb{L}^{>0}(\PBA)$ are incomparable (wrt. set inclusion).
    \item $\mathbb{L}^{>0}(\PWA) \subset \mathbb{L}^{>\lambda}(\PWA) \subset \mathbb{L}^{>\lambda}(\PBA)$.
\end{enumerate}
\end{theorem}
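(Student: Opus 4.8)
My plan is to assemble the three items from results already in hand and to isolate a single genuinely new ingredient, a topological upper bound on threshold-PWA languages. The inclusions $\mathbb{L}^{>0}(\PWA)\subseteq\mathbb{L}^{>0}(\PBA)$ in (1.) and $\mathbb{L}^{>\lambda}(\PWA)\subseteq\mathbb{L}^{>\lambda}(\PBA)$ in (3.) are immediate, since every PWA is a PBA under the same semantics. For the remaining inclusion $\mathbb{L}^{>0}(\PWA)\subseteq\mathbb{L}^{>\lambda}(\PWA)$ (the nontrivial half of (1.) and the first inclusion of (3.)) I would use the standard sink construction in a form that preserves weakness: given a PWA $\mc{A}$ with $L^{>0}(\mc{A})=L$, add a fresh accepting self-looping sink $q_{acc}$ and set the initial distribution so that $q_{acc}$ receives mass $\lambda$ and the original states receive the remaining mass scaled by $1-\lambda$. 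Then $\op{Pr}(\text{accept }w)=\lambda+(1-\lambda)\op{Pr}_{\mc{A}}(\text{accept }w)$, which is $>\lambda$ iff $\op{Pr}_{\mc{A}}(\text{accept }w)>0$, i.e.\ iff $w\in L$. Since $\{q_{acc}\}$ is an accepting SCC, the final set stays a union of SCCs, so the construction yields a PWA recognizing $L$ under threshold $\lambda$.

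\emph{The easy separations.} Proposition~\ref{prp:th_pwa_nonreg_strong} provides a PWA whose threshold language $L_0$ is neither regular nor contained in $\mathbb{L}^{>0}(\PBA)$. Hence $L_0\in\mathbb{L}^{>\lambda}(\PWA)\setminus\mathbb{L}^{>0}(\PBA)$, which is one half of the incomparability in (2.); and since $\mathbb{L}^{>0}(\PWA)\subseteq\mathbb{L}^{>0}(\PBA)$, the same $L_0$ witnesses the strictness of $\mathbb{L}^{>0}(\PWA)\subsetneq\mathbb{L}^{>\lambda}(\PWA)$ in (3.).

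\emph{The hard direction.} What remains is a language in $\mathbb{L}^{>0}(\PBA)$ that no PWA can define under threshold semantics; this yields both the other half of (2.) and the strictness of $\mathbb{L}^{>\lambda}(\PWA)\subsetneq\mathbb{L}^{>\lambda}(\PBA)$ at once. I would establish the key lemma that \emph{for every PWA $\mc{A}$ and every $\lambda\in\ ]0,1[$ the language $L^{>\lambda}(\mc{A})$ lies in the Borel class $F_\sigma$}. By weakness every run is eventually trapped in one SCC, so a run is accepting exactly when it is eventually always in $F$; writing $\psi_{N,M}(w)$ for the probability of staying inside $F$ throughout the window $[N,N+M]$ — a value depending only on a finite prefix of $w$ — one obtains $\op{Pr}(\text{accept }w)=\sup_N\inf_M\psi_{N,M}(w)$, and unfolding the condition ``$>\lambda$'' gives $L^{>\lambda}(\mc{A})=\bigcup_{N,k}\bigcap_M\{w\mid\psi_{N,M}(w)\geq\lambda+1/k\}$, a countable union of closed sets. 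Taking $L_1=\{w\mid w\text{ has infinitely many }a\}\in\mathbb{L}(\DBA)\subseteq\mathbb{L}^{>0}(\PBA)\subseteq\mathbb{L}^{>\lambda}(\PBA)$, which is $G_\delta$-complete and therefore not $F_\sigma$, yields $L_1\notin\mathbb{L}^{>\lambda}(\PWA)$ and completes (2.) and (3.).

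The main obstacle is the lemma, specifically the quantifier bookkeeping in the $\sup\inf$ representation. Two points must be handled with care: that it is weakness which collapses Büchi acceptance to ``eventually always in $F$'' (for a general PBA the analogous expression has higher quantifier complexity, so the bound genuinely uses weakness), and that it is the \emph{strict} threshold with $\lambda<1$ that produces an $F_\sigma$ set — the almost-sure case yields only a weaker bound, which is reassuringly consistent with $\mathbb{L}^{=1}(\PWA)=\mathbb{L}^{=1}(\PBA)$ (Theorem~\ref{thm:pca_to_pwa}) containing the $G_\delta$-complete language $L_1$ that is itself provably outside $\mathbb{L}^{>\lambda}(\PWA)$.
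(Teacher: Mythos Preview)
Your proposal is correct and follows essentially the same route as the paper: the sink construction for (1.), Proposition~\ref{prp:th_pwa_nonreg_strong} for one half of the separations, and the topological bound $\mathbb{L}^{>\lambda}(\PWA)\subseteq F_\sigma$ (the paper's Lemma~\ref{lem:pwa_topo}) for the other half. Your treatment is in fact more self-contained than the paper's, which obtains the $\Sigma_2$ bound by citing an adaptation of \cite[Lemma 3.2]{chadha2011randomization} and then appeals abstractly to $\mathbb{L}^{>0}(\PBA)=\op{BCl}(\mathbb{L}^{=1}(\PBA))$ containing proper $\Pi_2$ languages, whereas you give the $\sup\inf$ representation directly and exhibit the concrete $G_\delta$-complete witness $\{w\mid w\text{ has infinitely many }a\}$.
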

\begin{proof}
  (1.) $\mbb{L}^{>0}(\PWA) \subseteq \mathbb{L}^{>0}(\PBA)$ by definition and
  $\mbb{L}^{>0}(\PWA) \subseteq \mathbb{L}^{>\lambda}(\PWA)$, as any PWA$^{>0}$ can be
  modified to a PWA$^{>\lambda}$ recognizing the same language by just adding an
  additional accepting sink and modifying the initial distribution, just as described in
  \cite[Lemma 4.16]{baier2012probabilistic} for general PBA.

  (2.) By Proposition~\ref{prp:th_pwa_nonreg_strong}, there are
  languages recognized by PWA$^{>\lambda}$ that cannot be recognized with PBA$^{>0}$.
To show that there are languages accepted by PBA$^{>0}$ that cannot be accepted by
  PWA$^{>\lambda}$ we can give a topological characterization of languages accepted by
  PWA by a simple adaptation of \cite[Lemma 3.2]{chadha2011randomization} and combine it
  with other results shown in \cite{chadha2011randomization} to show that there are
  PBA$^{>0}$ that accept languages that cannot be accepted by PWA$^{>\lambda}$.

  (3.) The first inclusion was discussed in (1.),
  the strictness follows from Proposition~\ref{prp:th_pwa_nonreg_strong} and the fact that
  $\mathbb{L}^{>0}(\PWA) = \overline{\mbb{L}^{=1}(\PBA)} \subset
  \op{BCl}(\mbb{L}^{=1}(\PBA)) = \mathbb{L}^{>0}(\PBA)$, where the first equality is
  Theorem~\ref{thm:pca_to_pwa} and the second is shown in \cite{chadha2011randomization}.
  The second inclusion of the statement follows from (2.) and the fact from \cite{baier2012probabilistic}
  that $\mbb{L}^{>0}(\PBA) \subset \mbb{L}^{>\lambda}(\PBA)$.
  \lncsqed
\end{proof}

For the dual class $\mbb{L}^{\geq \lambda}(\PWA)$ one can show symmetric results that
correspond to statements (1.) and (2.) above, for statement (3.) however there is no proof
yet for the strictness of the inclusions (especially the second one), whereas the
statement $\mathbb{L}^{=1}(\PWA) \subseteq \mathbb{L}^{\geq\lambda}(\PWA) \subseteq
\mathbb{L}^{\geq\lambda}(\PBA)$ is obvious. We leave this issue as an open
question. Another interesting question is whether $>\lambda$ is equivalent to $<\lambda$
(or dually for $\geq / \leq$).

 \section{Conclusion}
\label{sec:conclusion}

By using
notions from ambiguity in classical Büchi automata, we were able to extend the set of easily
(syntactically) checkable PBA which are regular under some or all of the usual
semantics. As a consequence, ambiguity appears to be an even more interesting notion in
the probabilistic setting, as here it in fact has consequences for the expressive power of
automata, whereas in the classical setting there is no such effect. Our results also
indicate that to get non-regularity, one requires the use of certain structural
patterns which at least imply the existence of the ambiguity patterns that we used. It
is an open question whether it is possible to identify more fine-grained syntactic
characterizations, patterns or easily checkable properties which are just over-approximated by the
ambiguity patterns and are required for non-regularity.

\bibliographystyle{splncs04}
\bibliography{literature}

\clearpage
\appendix
\section{Proofs for section on ambiguity in PBA}
\label{app:amb}

\subsection{Proof for Proposition~\ref{prp:hpba_ambiguity}}
\prphpbaambiguity*
\begin{proof}
  \
  \begin{enumerate}
    \item
      First, observe that all states in the same SCC of a HPBA must have the same rank, as
      otherwise the SCC contains a path where the ranks of the states strictly decrease.
      Existence of an EDA pattern implies that there is at least one intra-SCC fork, which
      implies that two successors must have the same rank, which is forbidden for HPBA.

      On the other hand, it is easy to construct an automaton that has no EDA pattern, but
      is not a valid HPBA because it has an intra-SCC fork. The second inclusion follows
      trivially because there are automata with EDA pattern but no EDA$_F$ pattern, and
      thus are at most countably ambiguous (by \cite{faba-dlt}).

    \item Finitely ambiguous automata have no IDA (and thus no EDA) patterns
      (by \cite{faba-dlt}), but even unambiguous PBA may contain an intra-SCC
      fork, meaning that it cannot be a HPBA. On the other hand, HPBA may even have an
      IDA$_F$ pattern, which then implies infinite ambiguity.

    \item Clearly, level 1 of SPBA can be thought of a rejecting sink, as no accepting
      states are reachable. As in the trimmed automaton there is just one (useful) SCC containing
      states on level 0 and there are no intra-SCC forks in HPBA, transitions within level
      0 are deterministic. Hence SPBA are trivially unambiguous, which by definition is a
      strict subset of finitely ambiguous PBA.
  \end{enumerate}
  \lncsqed
\end{proof}

\clearpage

\subsection{Proof for Theorem~\ref{thm:nba_to_kpba}}
\thmnbatokpba*
\begin{proof}
  Let $\mc{A}=(Q,\Sigma,\delta,q_0,c)$ be a deterministic parity automaton accepting $L$,
  i.e., a finite automaton with priority function $c:Q\to\{1,\ldots,m\}$ such that $w \in
  L(\mc{A})$ iff the smallest priority assigned to a state on the unique run of $\mc{A}$
  on $w$ which is seen infinitely often is even.

  We will construct an unambiguous LDBA $\mc{A}'$ from $\mc{A}$ which also accepts $L$,
  from which we will easily obtain an unambiguous PBA $\mc{B}$.
  For this, we take $m+1$ copies of $\mc{A}$ and create a Büchi automaton which guesses the
  smallest priority that is seen infinitely often along the run in $\mc{A}$, and ensure
  that only one correct guess is possible for each word.

  Formally, let $\mc{A}'=(Q', \Sigma, \Delta', Q_0', F')$ be an NBA with
  $Q' := \tilde{Q}\cupdot Q_1 \cupdot \ldots \cupdot Q_m$ consisting of $m+1$
  copies of each state in $Q$, where the copies of $q\in Q$ are denoted by $\tilde{q}, q^1,
  \ldots, q^m$, respectively, initial states defined as
  $Q_0' := \{\tilde{q_0}, q_0^1, \ldots, q_0^m \}$ and final states defined as
  $F' := \{ q^i \mid c(q) = i \text{\ and\ } i \text{\ is\ even}\}$.
  The transition relation $\Delta' := \tilde{\Delta} \cup \bigcup_{i=1}^m \Delta_m$
  is given by
  \begin{itemize}
    \item $\tilde{\Delta} := \{ (\tilde{p},a,q') \mid \delta(p,a)=q \text{\ and\ }
      (q'=\tilde{q} \text{\ or\ } q'=q^j \text{\ s.t.\ } c(q)\geq j > c(p)) \}$,
    \item $\Delta_i := \{ (p^i, a, q^i) \mid \delta(p,a)=q \text{\ and\ } c(p), c(q)\geq i \}$.
  \end{itemize}
  As the transitions defined by the $\Delta_i$ sets are just copies of a subset of the
  deterministic transitions given by $\delta$, and as all accepting states are only in
  these restricted deterministic copies of $\mc{A}$, clearly $\mc{A}'$ is an LDBA. Now we
  will show that it accepts the same language and is unambiguous.

  If $w \in L(\mc{A}')$, then there exists a run $\rho$ which either reaches or starts in
  one of the $m$ copies of $\mc{A}$ that contain accepting states, and visits those infinitely
  often. Notice that we can easily obtain the run of $\mc{A}$ on $w$ by projecting the
  states of $\rho$ onto the original states in $\mc{A}$. Now w.l.o.g. assume that $\rho$
  eventually is in the $i$-th copy, i.e., eventually using states in $Q_i$. As $\rho$ is
  accepting, we have by definition of $F'$ that $i$ must be even and $\rho$ visits states
  $q^i$ in $\mc{A}'$ such that $c(q) = i$ in $\mc{A}$ infinitely often. Also, $\rho$
  eventually never visits states $q^i$ with $c(q)<i$ in $\mc{A}$, as transitions with such
  states are not defined in $\Delta_i$. This implies that the run of $\mc{A}$ on $w$ is accepting.

  If $w \in L(\mc{A})$, let $\rho$ now be the run on $w$ in $\mc{A}$,
  $k$ the minimal priority $k$ which is seen along $\rho$ infinitely often, and $t$ the
  time where a state with priority $<k$ is visited for the last time (or if $\rho$ never
  visits such states, let $t:=-1$).

  First consider the runs that start in some initial state $q_0^j$, which all proceed
  deterministically in the corresponding restricted $j$-th copy of $\mc{A}$. If $j>k$,
  then at some point when in $\rho$ a state $q$ with priority $\leq k$ is visited, there
  exists no matching transition to $q^j$, so the run from $q_0^j$ terminates. If $j<k$, then if
  $\rho$ reaches a state with priority $<j$, the run also terminates, and otherwise at
  some point $\rho$ does not see states with priority $<k$, so that by definition of $F'$
  the run does not see accepting states anymore, and hence the run is rejecting. If $j=k$,
  then the run terminates if some state with priority $<k$ is visited at some point along
  the run $\rho$, and otherwise (the case with $t=-1$) it can continue forever.
  Furthermore, by choice of $k$, states $q^j$ with $c(q)=k$ are visited infinitely often,
  so that by definition of $F'$ the run is accepting.

  Now consider the runs which start in $\tilde{q}_0$ and observe that the automaton can
  either use the unique transitions between states in $\tilde{Q}$, or at any point
  nondeterministically decide to switch into one of the restricted copies discussed above,
  but from any state $\tilde{p}$ only to a $q^j \in Q_j$  in a copy of $\mc{A}$ where only
  copies of states $q\in Q$ with priorities $c(q) \geq j > c(p)$ can be reached.

  If $t=-1$, i.e., no state with priority $<k$ is ever visited by $\rho$, the runs of $\mc{A}'$
  which forever visit states in $\tilde{Q}$ are all rejecting, whereas runs that
  eventually switch into one of the other copies can only choose to go to a copy with
  states $Q_j$ with $j>k$ and hence these runs must terminate whenever $\rho$ visits a
  state $q$ with $c(q)=k$, which happens infinitely often, so all runs from $\tilde{q}_0$
  are rejecting.

  For $t\geq 0$, observe the following.
  If a run eventually switches from $\tilde{Q}$ to some state in $Q_j$ with
  $j\neq k$, then as discussed above the run will either terminate (due to missing
  transitions in $\Delta_j$) or be rejecting (by definition of $F'$). Furthermore, if it
  switches too early to a state in $Q_k$, it will also terminate (as $\rho$ will visit at
  least one more state with priority $<k$), and a run cannot switch to states in $Q_k$
  strictly after $t$, because by definition of $\tilde{\Delta}$ this is only possible from
  a state with priority $<k$.
Hence, the only possible accepting run is the one which stays in $\tilde{Q}$
  until time $t$ and in the next transition switches to some state $q^k \in Q_k$, from
  where it continues deterministically and accepts, as then no more states with priority $<k$
  are visited by $\rho$ and hence no transitions that are missing in $\Delta_k$ are used, and
  furthermore infinitely many states $q^k\in F'$ are visited, which are copies of states $q$
  with $c(q)=k$.

  So in any case, for every accepting run $\rho$ in $\mc{A}$ there exists exactly one
  accepting run in $\mc{A}'$: for $t=-1$ it is the run starting in $q_0^k$, and for $t\geq
  0$ it is the run starting in $\tilde{q}$ and switching to a state in $Q_k$ in the
  transition from time $t$ to $t+1$.
Therefore $\mc{A}'$ is an unambiguous LDBA accepting $L$. As all
  accepting runs in $\mc{A}'$ are limit-deterministic, we can trivially obtain the claimed
  unambiguous PBA $\mc{B}$ which accepts $L$ under positive semantics by equipping edges
  in $\mc{A}'$ with arbitrary probabilities that result in valid probability
  distributions, because in any case the unique limit-deterministic accepting runs in
  $\mc{B}$ will have positive probability.
  \lncsqed

\end{proof}

\clearpage

\subsection{Proof for Theorem~\ref{thm:ps_cpba_to_nba}}
\thmpscpbatonba*
\begin{proof}
  Let $\mc{A}=(Q,\Sigma,\delta,\mu_0,F)$ be a PBA that is at most countably ambiguous. We construct an NBA $\mc{B}$ accepting $L^{>0}(\mc{A})$, which intuitively consists of
  two copies of $\underlyingnba{\mc{A}}$. The first copy has no accepting states
  and the second copy has no forks.

  Let $\mc{B}=(Q', \Sigma, \Delta', Q'_0, F')$ be an NBA,
  where $Q' = Q \times \{n,d\}$ consists of two copies of each state in $\mc{A}$,
  $Q'_0 = \{ (q,n) \mid \mu_0(q)>0 \}, F' = \{ (q,d) \mid q \in F \}$,
  and transitions $\Delta := \Delta_n \cupdot \Delta_d \cupdot \Delta_{nd}$ defined by
  \begin{itemize}
    \item $\Delta_n    = \{ ((p,n),a,(q,n)) \mid \delta(p,a,q) > 0 \}$,
    \item $\Delta_{nd} = \{ ((p,n),a,(q,d)) \mid \delta(p,a,q) > 0 \}$, and
    \item $\Delta_d    = \{ ((p,d),a,(q,d)) \mid \delta(p,a,q) = 1 \}$.
  \end{itemize}

  It is easy to see that the automaton accepts exactly those words for which there
  exists a limit-deterministic accepting run, hence
  by Lemma~\ref{lem:cntamb_pba_runs} we have $L^{>0}(\mc{A}) = L(\mc{B})$.
  \lncsqed
\end{proof}

\subsection{Proof for Theorem~\ref{thm:as_efpba_to_dba}}
\thmasefpbatodba*
\begin{proof}
  Let $\mc{A}=(Q,\Sigma,\delta,\mu_0,F)$ be a PBA.
  There are two cases to consider---when $\mc{A}$ is exponentially ambiguous and when
  $\mc{A}$ is flat.

  First, assume that $\mc{A}$ is at most exponentially ambiguous, which
  means that on each word there are only finitely many accepting runs.
  We construct a DBA $\mc{B}$ accepting $L^{=1}(\mc{A})$. By
  Lemma~\ref{lem:cntamb_pba_runs}, $\mc{B}$ should accept if \emph{every} run of $\mc{A}$
  accepts and is limit-deterministic.
Notice, that we do not even need to check that the runs on $w$ are limit-deterministic,
  because if all runs accept, this already implies $w\in L^{=1}(\mc{A})$.
Hence, we just need to check that all runs accept, using a simple breakpoint construction.

  Formally, let $\mc{B} := (Q', \Sigma, \delta', q_0', F')$ with $Q' := 2^Q \times 2^Q, q_0'
  := (\emptyset, \op{supp}(\mu_0)),$
  $F' := \{ (S, \emptyset) \mid S \subseteq Q \}$ and transition function $\delta'$ defined by
  \begin{itemize}
    \item $\delta'((S,\emptyset),a) := (\emptyset, \Delta(S,a))$, and
    \item $\delta'((S,T),a) := (S', T')$ for $T\neq\emptyset$ \\
      with $T' = \Delta(T,a)\setminus F$ and $S' := \Delta(S\cup T, a) \setminus T'$.
  \end{itemize}

  It is easy to see that $\mc{B}$ sees accepting states infinitely often if and only if on
  every path in $\mc{A}$ an accepting state is visited infinitely often, and hence
  by Lemma~\ref{lem:cntamb_pba_runs} we have $L^{=1}(\mc{A}) = L(\mc{B})$.

  Now assume that $\mc{A}$ is flat.
  In this case, we construct a DBA $\mc{B}$ accepting $L^{=1}(\mc{A})$, that by
  Lemma~\ref{lem:cntamb_pba_runs} should accept $w$ iff there exists no
  limit-deterministic rejecting run of $\mc{A}$. This is checked using a
  construction almost as above, but now it suffices for a state to be
  at some point reached only by branching transitions to be moved into the left set.

Formally, define $\mc{B}$ as above, but with different $\delta'$ defined by
  \begin{itemize}
    \item $\delta'((S,\emptyset),a) := (\emptyset, \Delta(S,a))$, and
    \item $\delta'((S,T),a) = (S', T')$ for $T\neq\emptyset$ with
      \begin{itemize}
        \item $T' := \{q \mid q\not\in F \text{\ and \ } \exists p\in T \text{\ s.t.\ }\delta(p,a,q) = 1 \}$, and
        \item $S' := \Delta(S\cup T, a) \setminus T'$.
      \end{itemize}
  \end{itemize}

  Let $w=w_0w_1\ldots \in \Sigma^\omega$.
  If $w\not\in L^{=1}(\mc{A})$, by Lemma~\ref{lem:cntamb_pba_runs} there exists a
  limit-deterministic rejecting run $\rho = q_0, q_1, \ldots$ on $w$, then from some time
  $t$ on only deterministic transitions (i.e., with $\delta(q_i, w_i, q_{i+1}) = 1$) will
  be taken and all states $q_i$ for $i\geq t$ are rejecting. Hence by construction the
  set in the right component of the macrostate will always contain the current state along
  the run and thus will never become empty anymore, so no accepting states of $\mc{B}$ are
  visited anymore and hence $w\not\in L(\mc{B})$.

  On the other hand, if $w\in L^{=1}(\mc{A})$, then there are no limit-deterministic
  rejecting runs, which means that every run either sees accepting states infinitely often
  (in which case it is accepting), or uses branching transitions infinitely often (in
  which case it is not limit-deterministic). But then by construction, infinitely often all
  successor states in the sets will reach the left set and the right set must become
  empty, and therefore $w \in L(\mc{B})$.
  \lncsqed
\end{proof}

\subsection{Omitted details for Proposition~\ref{prp:amb_counterexamples}(2)}

\begin{lemma}\ \\
  The automata in Figure~\ref{fig:pwa_fig}(c) accept non-regular languages for
  all $\lambda\in ]0,1[$.
\end{lemma}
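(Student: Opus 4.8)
The plan is to first compute the language $L^{=1}(\tilde{\mc{P}}_\lambda)$ exactly, and then show it is not $\omega$-regular by a pumping argument run against a hypothetical deterministic parity automaton.

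\emph{Determining the language.} Since $q_f$ is the unique accepting state and a sink, a run is accepting iff it eventually reaches $q_f$, so $w\in L^{=1}(\tilde{\mc{P}}_\lambda)$ iff the probability of reaching $q_f$ equals $1$. Tracking the distribution over $\{q_0,q_1,q_2,q_f,q_{rej}\}$, one sees that reading a block $a^n b$ with $n\ge 1$ starting from all mass in $q_0$ moves a $(1-\lambda)^n$-fraction into $q_f$ (through $q_2$) and returns the remaining mass to $q_0$ (through $q_1$), whereas reading $b$ while positive mass sits in $q_0$ sends that mass irrevocably to $q_{rej}$, and any suffix $a^\omega$ traps positive mass in $\{q_1,q_2\}$ forever. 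Hence no mass is ever wasted precisely when the word has the form $a^{n_1}b\,a^{n_2}b\cdots$ with all $n_i\ge 1$ and infinitely many blocks, in which case $\op{Pr}(\text{reach }q_f)=1-\prod_{i\ge 1}\bigl(1-p^{n_i}\bigr)$ with $p:=1-\lambda$. Using the standard fact that $\prod_i(1-x_i)=0$ iff $\sum_i x_i=\infty$ for $x_i\in[0,1)$, this yields $L^{=1}(\tilde{\mc{P}}_\lambda)=\{\,a^{n_1}b\,a^{n_2}b\cdots \mid n_i\ge 1 \text{ for all } i,\ \sum_{i}p^{n_i}=\infty\,\}$.

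\emph{Non-regularity.} Next I would assume $L:=L^{=1}(\tilde{\mc{P}}_\lambda)$ is $\omega$-regular and fix a deterministic parity automaton $\mc{C}$ with $N$ states recognizing $L$. Reading $a,aa,aaa,\dots$ from any fixed state of $\mc{C}$ traces an eventually periodic sequence, so there exist $m_0\ge 1$ and a period $P$ such that for every state $q$ and all $m,m'\ge m_0$ with $m\equiv m'\pmod P$, reading $a^m$ and $a^{m'}$ from $q$ ends in the same state and visits exactly the same set of states. In particular, every block $a^{m}b$ with $m\ge m_0$ and $m\equiv m_0\pmod P$ induces the same state transition and the same set of visited states in $\mc{C}$ as the block $a^{m_0}b$.

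\emph{Deriving the contradiction.} Finally I would compare $w^{\star\star}:=(a^{m_0}b)^\omega$ and $w^\star:=a^{m_0}b\,a^{m_0+P}b\,a^{m_0+2P}b\cdots$. Here $\sum_i p^{m_0}$ diverges, so $w^{\star\star}\in L$, while $\sum_i p^{m_0+iP}=p^{m_0}/(1-p^P)<\infty$, so $w^\star\notin L$. Every block of $w^\star$ has length $\ge m_0$ and $\equiv m_0\pmod P$; hence by induction the block-boundary states of the runs of $\mc{C}$ on $w^\star$ and on $w^{\star\star}$ coincide, and within each block the two runs visit the same set of states. Therefore the two runs visit exactly the same set of states infinitely often, and since parity (equivalently, Muller) acceptance is a function only of this set, $\mc{C}$ accepts $w^\star$ iff it accepts $w^{\star\star}$, contradicting $w^{\star\star}\in L$ and $w^\star\notin L$. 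I expect the main obstacle to be the first step: verifying that the acceptance probability is exactly the stated infinite product and that positive probability mass is necessarily wasted (so that $w$ is rejected) unless the input has the pure block form with infinitely many blocks. Once the language is pinned down, the pumping step is a routine consequence of the eventual periodicity of reading $a^m$ together with the fact that a deterministic parity automaton accepts depending only on the set of states seen infinitely often.
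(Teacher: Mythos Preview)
Your argument is correct. The language computation matches the paper's: from $q_0$ the block $a^kb$ sends probability $(1-\lambda)^k$ to $q_f$ and returns the remainder to $q_0$, and any deviation from the pure block form with infinitely many blocks leaks positive mass into $q_{rej}$ or strands it in $\{q_1,q_2\}$. Your reformulation via $\sum_i p^{n_i}=\infty$ is equivalent to the paper's $\prod_i(1-p^{n_i})=0$ by the standard infinite-product criterion.

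The difference is in how non-regularity is established. The paper does not argue this at all: it simply identifies $L^{=1}(\tilde{\mc{P}}_\lambda)$ with the language $\tilde L_\lambda$ already treated in \cite{baier2012probabilistic} and appeals to the fact that this language is known to be non-regular. You instead give a self-contained pumping argument against a hypothetical deterministic parity automaton, exploiting that reading powers of $a$ from any state is eventually periodic, so suitable uniform $m_0,P$ exist making $a^{m_0}b$ and $a^{m_0+iP}b$ indistinguishable in endpoint and in visited-state set; then $(a^{m_0}b)^\omega\in L$ and $a^{m_0}b\,a^{m_0+P}b\,a^{m_0+2P}b\cdots\notin L$ yield runs with identical $\Inf$-sets and hence identical parity verdict, a contradiction. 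This buys you independence from the cited reference at the cost of a longer argument; the paper's route is shorter but relies on external work. One small point worth making explicit when you write it out: the uniform $m_0$ and $P$ are obtained by taking the maximum pre-period plus period over all states and the least common multiple of the periods, respectively, so that the ``same endpoint and same visited set'' claim indeed holds simultaneously from every state.
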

\begin{proof}
The PWA presented in Figure~\ref{fig:pwa_fig}(c) is based on the PBA depicted in
\cite[Fig. 6]{baier2012probabilistic} and accepts for some $\lambda\in]0,1[$ the
following language, which is known to be not regular:
\[ \tilde{L}_\lambda = \left\{ a^{k_1}ba^{k_2}b\ldots \mid k_1,k_2,\ldots \in \mathbb{N}_{\geq 1}
\text{\ such\ that\ } \prod_{i=1}^\infty \left(1-(1-\lambda)^{k_i}\right) = 0 \right\} \]

Notice that $a^\omega$ is not accepted, as then $q_f$ can never be
reached. Also, if there are finitely many $b$'s, i.e., the word has the shape
$w=a^{k_1}b \ldots a^{k_n} b a^\omega$, then there is positive probability to
not reach $q_f$ after reading the last $b$ and after that $q_f$ cannot be reached
anymore, hence with positive probability the automaton rejects $w$. Hence it is easy to
see that all accepted words must be of the form $(a^{+}b)^\omega$.

Once a run has reached $q_f$, it becomes accepting and stays accepting forever. The
probability to reach $q_f$ from $q_0$ on $a^k b$ is $(1-\lambda)^k$, whereas the
probability to avoid $q_f$ and come back to $q_0$ instead is $1-(1-\lambda)^k$. Hence,
$\prod_{i=1}^\infty (1-(1-\lambda)^{k_i})$ is the probability of runs that avoid $q_f$
forever and therefore is exactly the probability of rejecting runs. Therefore, we have
$L^{=1}(\tilde{\mc{P}}_\lambda) = \tilde{L}_\lambda$, as claimed.
\lncsqed
\end{proof}

\clearpage

\subsection{Proof for Theorem~\ref{thm:th_kpba_to_nba}}
\thmthkpbatonba*
\begin{proof}
  We use the characterization of Lemma~\ref{lem:th_kpba_characterization} to
  construct a generalized Büchi automaton $\mc{B}$ (i.e., a Büchi automaton with multiple
  acceptance sets, where from each set at least one state must be visited infinitely
  often) accepting $L^{>\lambda}(\mc{A})$, which can easily be translated into an NBA.

  Intuitively, the new automaton $\mc{B}$ just guesses at most $k$ different runs of
  $\mc{A}$ and verifies that the guessed runs are limit-deterministic and accepting.
The automaton
  additionally tracks the probability of the runs over time, to determine whether the
  individual runs and their sum have enough ``weight''. More precisely, it tracks the
  probabilities of the current prefixes, which in the limit yield the probabilities of the
  runs. As the runs we are interested in are limit-deterministic, there exists a finite
  prefix which has the probability of the whole run, hence tracking the prefix
  probabilities is sufficient for our purpose.

  The automaton rejects when the total probability of the guessed runs is
  $\leq \lambda$, one of the runs goes into the rejecting sink $q_{rej}$ or a run does not see
  accepting states infinitely often. Furthermore, the automaton shall guess no runs which
  are definitely useless for acceptance.
  By Lemma~\ref{lem:th_kpba_characterization} we only need to consider sets of runs with
  at most one run that has a probability $<\varepsilon$, where $\varepsilon :=
  \varepsilon_k$ is given by Lemma~\ref{lem:cutoff_fpba_th}. For this single run we also
  do not need to track the exact probability value, as its only purpose is to witness that the
  acceptance probability is strictly greater than $\lambda$, whereas all other runs must have
  one of the finitely many different probabilities which are $\geq \varepsilon$ and must
  sum to $\lambda$.

  Formally, let $\varepsilon$ be as in Lemma~\ref{lem:th_kpba_characterization}, and $V := V_{\geq \varepsilon} \cupdot \{\star_n, \star_d\}$, where
  $V_{\geq \varepsilon}$ is the finite (by Lemma~\ref{lem:run_finvals_above_th})
  set of different probability values $\geq \varepsilon$ that a run prefix of $\mc{A}$ can
  have, and the values $\star_n, \star_d$ are to be interpreted as arbitrarily small
  values such that $0 < \star_n, \star_d < \varepsilon$ and are introduced for
  convenience to cover the case of tracking a single low-probability run imprecisely.

  Then $\mc{B} := (Q', \Sigma, \Delta', Q_0', F_1,\ldots, F_k)$ is defined with
  \begin{itemize}
    \item $Q' := \bigcup_{i=1}^k (Q {\times} V)^i$ \hfill (tuples of at
      most $k$ states with probabilities),
    \item $Q_0' := \{ ((q_1,v_1)...(q_n,v_n)) \mid 1\leq n\leq k, q_i$ pw. diff. and $\forall (q_i,v_i), \mu_0(q_i)=v_i \}$,
    \item $F_i := \bigcup_{j=1}^{i-1}(Q {\times} V)^j
      \cup \{ ((q_1,v_1)...(q_i,v_i),...)\in Q' \mid q_i{\in}F,v_i{\neq}\star_n \}\ \forall i\in\{1...k\}$,
  \end{itemize}
  and for
      $S=((p_1,u_1),\ldots,(p_m,u_m)),T=((q_1,v_1),\ldots,(q_n,v_n)) \in Q'$ and symbol $a\in \Sigma$,
      the transition $(S,a,T)$ is defined in $\Delta'$ if
     \begin{itemize}
       \item $m \le n$, $\sum_{i=1}^n v_i > \lambda$ and $\forall i\in
         \{1\ldots n\}, q_i \neq q_{rej}$,
       \item there exists at most one $v_i$ such that $v_i < \varepsilon$, and
       \item there exist indices $1 = j_1 < \ldots < j_m \leq n$ and $j_{m+1}=n+1$ such that for
         all $i\in \{1\ldots m\}$:
         \begin{itemize}
           \item the states $q_{j_i}, \ldots, q_{j_{i+1}-1}$ are pairwise different, and
           \item for all $l \in \{j_i, \ldots, j_{i+1}-1\}$, we have:
             \begin{itemize}
               \item $v_l = u_i\cdot \delta(p_i,a,q_l)$ if $u_i\cdot \delta(p_i,a,q_l) \geq \varepsilon$,
               \item $v_l = \star_n$  if $u_i \geq \varepsilon$
                 and $0 < u_i\cdot \delta(p_i,a,q_l) < \varepsilon$,
               \item $v_l \in \{ \star_n, \star_d \}$  if $u_i = \star_n$ and $\delta(p_i,a,q_l)>0$
                 \hfill (guess when run is det.),
               \item $v_l = \star_d$  if $u_i = \star_d$ and $\delta(p_i,a,q_l) = 1$
                 \hfill (ensure that run det.).
             \end{itemize}
         \end{itemize}
     \end{itemize}

  This means that the automaton starts in a subset of the possible initial states (with
  respective initial probabilities), listed in a tuple in arbitrary order, and then must
  pick for each state at least one successor that has positive probability. For each state
  in the tuple also multiple different successors may be taken, which means that the
  automaton then tracks these as distinct runs, but the total number of tracked runs can
  be at most $k$. In other words, the automaton picks in each transition at most $k$
  different edges in the run tree of $\mc{A}$ and adjusts the probabilities according to
  the probability of the respective finite path prefix. Hence, by construction, the
  automaton tracks at most $k$ runs which are all different, all but at most one have a
  probability $\geq \varepsilon$,  no run ever goes into the rejecting sink $q_{rej}$ of
  $\mc{A}$, and the total probability of these runs is $>\lambda$.

  If $w\in L(\mc{B})$, then there exists an accepting run $\rho$ such that after some
  finite time $t$ the tuple size stabilizes at some size $n\leq k$ (as it is monotonically
  increasing) and the sum of probabilities in the tuple stabilizes at some value $>
  \lambda$ (as they are monotonically decreasing and can only take finitely many values).
  Furthermore,  as $\rho$ is accepting, infinitely many states along $\rho$ are in
  the sets $F_i$ for $i \leq n$, which means that in each tuple component accepting states
  of $\mc{A}$ are visited infinitely often. Notice that this implies that after $t$, every
  state in a tuple has exactly one selected successor, because each state must have at
  least one, but having more then one implies that the tuple would grow.
  Also, this successor must have probability 1 according to the transition distributions of
  $\mc{A}$, as either the total tracked probability would decrease, or there would be no
  transition for the single run which must at some point have the value $\star_d$ assigned.
  Hence, there exist $n$ different limit-deterministic accepting runs in $\mc{A}$ that
  have in total a probability $>\lambda$, witnessing that $w \in L^{>\lambda}(\mc{A})$.

  If $w \in L^{>\lambda}(\mc{A})$, then we can choose a set $R$ of accepting runs as in Lemma~\ref{lem:th_kpba_characterization}, i.e., with total probability $>\lambda$, at most one run with a
  probability $<\varepsilon$, and all subsets of $R$ have probability $< \lambda$.

The automaton $\mc{B}$ can guess this set $R$ of runs, increasing the size of the tuple whenever runs in $R$ separate after sharing a common prefix.
After some finite time then all those runs become deterministic, i.e.,
  only have unique successors with probability 1, which means that the tracked
  probabilities do not decrease anymore. For runs that have a probability $\geq
  \varepsilon$, this means that the tracked value stabilizes eventually. For the possible
  single run with probability $<\varepsilon$, the automaton eventually replaces its
  probability by $\star_n$ and finally by $\star_d$, after the run has also become
  deterministic. As by assumption the runs are accepting, in every component of the tuple
  infinitely often an accepting state is visited, such that by definition, infinitely
  often a state in $F_i$ is visited for all $1\leq i \leq k$, hence $w\in L(\mc{B})$.
  \lncsqed

\end{proof}

 \clearpage
\section{Proofs for section on weak PBA}
\label{app:weak}

\subsection{Proof for Theorem~\ref{thm:reg_undec}}
\thmregundec*
\begin{proof}
    Since $\mbb{L}^{>\lambda}(\text{PWA}) \supseteq
  \mbb{L}^{>0}(\text{PWA})$ (see Theorem~\ref{thm:expressive_power_pwa}), 
  $\mbb{L}^{>0}(\text{PWA}) = \overline{\mbb{L}^{=1}(\text{PWA})}$
  (see remark above),
  and the class of regular $\omega$-languages is closed under
  complement, it suffices to show the statement for PWA$^{=1}$.
We do this by reduction from the value 1
  problem for PFA, which is the question whether for each $\varepsilon>0$ there exists a
  word accepted by the PFA with probability $> 1-\varepsilon$. This problem is known to be
  undecidable \cite{gimbert2010probabilistic}. We consider a
  slightly modified version of the problem by assuming that no word
  is accepted with probability 1 by the given PFA. The problem remains
  undecidable under this assumption, because one can check if a
  PFA accepts a finite word with probability 1 by a simple subset construction.

  Let $\mc{A}=(Q,\Sigma,\delta,\mu_0,F)$ be some PFA. We construct a PWA $\mc{B}$ by
  taking a copy of $\mc{A}$ and extending it with a new symbol $\#$ such that from
  accepting states of $\mc{A}$ the automaton is ``restarted'' on $\#$, while from
  non-accepting states $\#$ leads into a new part which ensures that infinitely many
  $\#$ are seen and contains the only accepting state of $\mc{B}$.

  Formally, we construct the PWA $\mc{B}=(Q', \Sigma',\delta',\mu_0,F')$ with
  $Q' := Q\cupdot\{q_\#, q_a\}, \Sigma':=\Sigma\cupdot\{\#\}$, $F' := \{q_a\}$ by
  extending $\delta$ to $\delta'$ as follows:
  \begin{itemize}
    \item $\delta'(p,x,q) := \delta(p,x,q)$ $\forall p,q\in Q, x\in \Sigma$,
    \item $\delta'(p,\#,q) := \mu_0(q)$ if $p\in F$ and
    $\delta'(p,\#,q_\#) = 1$ if $p\in Q\setminus F$,
    \item $\delta'(q_a,\#,q_a) = \delta'(q_a,x,q_a) = \delta'(q_\#,x,q_\#) = 1\  \forall x\in\Sigma$, and
    \item $\delta'(q_\#,\#,q_\#) = \delta'(q_\#,\#,q_a) = \frac{1}{2}$.
  \end{itemize}

  First notice that whenever a run reaches $q_\#$, its continuations will almost surely
  reach $q_a$ (and hence be accepting) iff $\#$ is read infinitely often.

  If $\mc{A}$ does not have value 1, then there exists some $\varepsilon>0$ such that
  every word is accepted by $\mc{A}$ with probability $\leq 1-\varepsilon$.
  But as $q_\#$ can only be avoided by reaching a state that is accepting in $\mc{A}$
  before reading $\#$, for any infinite sequence of words $w_i \in \Sigma^*$ for $i\in
  \mbb{N}$ we have that the probability to never reach $q_\#$ on the word
  $w=w_1\#w_2\#\ldots$ is $\prod_i \op{Pr}_{Acc}(\mc{A},w_i) \leq \prod_i 1-\varepsilon =
  0$, which means that on any such $w$ almost surely the state $q_\#$ will be reached and
  hence $w$ will be accepted. For words not of this shape, i.e. containing only finitely
  many $\#$, a run will either never reach $q_\#$ or stay in it forever never reaching
  $q_a$. Therefore we have $L^{=1}(\mc{B}) = (\Sigma^*\#)^\omega$, which is a
  regular language.

  For the case that $\mc{A}$ does have value 1, recall that we
  assumed that no word is accepted with probability 1.
  But since there are words accepted with probability arbitrarily
  close to 1, there exists an infinite sequence of words $w_i \in \Sigma^*$ such that $\prod_i
  \op{Pr}_{Acc}(\mc{A},w_i) > 0$, and therefore on $w=w_1\#w_2\#\ldots$ with positive
  probability $q_\#$ can be avoided forever, i.e., $w \not\in L^{=1}(\mc{B})$. Notice that
  such a word $w$ cannot be ultimately periodic, as then $w$ could be written as
  $uv^\omega$ where $v=w_j\#w_{j+1}\#\ldots w_k\#$ for some $j,k \in \mbb{N}, j\leq k$. If $p$ is
  the probability to avoid $q_\#$ on $v$ in $\mc{B}$, then the probability to avoid $q_\#$ on
  $w$ is at most $\prod_i p$, which is 0 for $p<1$ and we already excluded that $p=1$
  (this would require that at least one word is accepted by $\mc{A}$ with probability 1),
  so all ultimately periodic words are accepted by $\mc{B}$. But then the subset
  $R\subseteq (\Sigma^*\#)^\omega$ of words of the shape $w_1\#w_2\#\ldots$ that are
  rejected by $\mc{B}$ does not contain an ultimately periodic word, so $R$ cannot
  be regular and therefore $L^{=1}(\mc{B}) = (\Sigma^*\#)^\omega \setminus R$ is
  also not regular.
  \lncsqed
\end{proof}

\subsection{Proof for Proposition~\ref{prp:th_pwa_nonreg_strong}}
\vspace{-5mm}
\begin{figure}[h!]
  \centering
  \begin{tikzpicture}[baseline={([yshift=-.5ex]current bounding box.center)},shorten >=1pt,
    node distance=1.2cm,inner sep=1pt,on grid,auto]
    \node[state,initial, initial text={$\frac{1}{2}$}] (qa)   {$q_a$};
    \node[state,initial, initial text={$\frac{1}{2}$},below=of qa] (qb)   {$q_b$};
    \node[state,right=of qa] (qp) {$q_+$};
    \node[state,accepting,right=of qb] (qs) {$q_\$$};
      \path[->]
      (qa) edge [loop above] node[xshift=10mm,yshift=-3mm] {$b:1,a:\frac{1}{2}$} (qa)
      (qa) edge [swap] node[yshift=-2pt] {$a:\frac{1}{2}$} (qp)
      (qp) edge [loop right] node {$a,b$} (qp)
      (qb) edge [loop below] node[xshift=10mm,yshift=3mm] {$a:1,b:\frac{1}{2}$} (qb)
      (qb) edge [] node {\$} (qs)
      (qp) edge [] node {\$} (qs)
      (qs) edge [loop right] node {\$} (qs)
      ;
  \end{tikzpicture}
  \captionsetup{labelformat=empty}
  \caption{Automaton in Figure~\ref{fig:pwa_fig}(a).}
\end{figure}
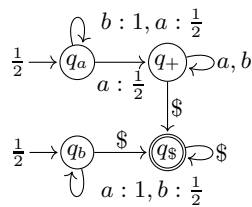\addtocounter{figure}{-1}
\vspace{-5mm}

\prpthpwanonreg*
\begin{proof}
  We show the result for $\lambda=\frac{1}{2}$ (in which case the PWA even has only
  rational coefficients). The general statement follows, because
  one can easily modify the PBA to accept the same language with any threshold $\lambda\in ]0,1[$
  by \cite[Lemma 4.15]{baier2012probabilistic}.

  Consider the PWA $\mc{A}$ in Figure~\ref{fig:pwa_fig}(a). Clearly,
  it can only positively accept words of shape $(a+b)^*\$^\omega$.
  Let $w=u\$^\omega$ with $u\in \{a,b\}^*$ and let $\#_a(u)$ denote the number of occurrences of
  $a\in \Sigma$ in $u$.
  Notice that on each $b$, half of the remaining probability of currently being in
  $q_b$ goes into the (implicit)
  rejecting sink, and on each $a$, half the probability
  of currently being at $q_a$ goes to $q_+$.
  The only runs which can continue on $\$^\omega$ after reading $u$ are
  in $q_b$ or in $q_+$ after $u$ and the unique possible run continuation on $\$^\omega$
  goes to and forever stays in the accepting state $q_\$$. Hence, we have:

  $\op{Pr}(\mc{A}$ accepts $w$) = $\overbrace{\op{Pr}(\mc{A}\text{ \ in\ } q_b \text{ after }u)}^{
      \frac{1}{2}\cdot \frac{1}{2}^{\#_b(u)}}$\quad
    + $\overbrace{\op{Pr}(\mc{A}\text{\ in\ } q_{+} \text{ after } u)}^{
      \frac{1}{2}\sum_{i=1}^{\#_a(u)} \frac{1}{2}^i = \frac{1}{2}\cdot (1 - \frac{1}{2}^{\#_a(u)})}$

  This means, that $\op{Pr}(\mc{A}$ accepts $w$) =
  $\frac{1}{2}\cdot(1-\frac{1}{2}^{\#_a(u)} + \frac{1}{2}^{\#_b(u)})$, which is greater
  than $\frac{1}{2}$ if and only if $\#_a(u) > \#_b(u)$, and therefore
  $L^{>\frac{1}{2}}(\mc{A}) = \{\ (a+b)^*\$^\omega \mid \#_a(u) > \#_b(u)\ \}$.

  Now it is easy to see that there are infinitely many Myhill-Nerode equivalence classes
  for this language, and hence it cannot be regular (as the implication
  ``regular $\Rightarrow$ finitely many Myhill-Nerode classes'' also holds for infinite
  words). Furthermore, by Lemma~\ref{lem:fin_myhnerode_classes} languages accepted by
  PBA$^{>0}$ have only finitely many classes. Hence, this language cannot be accepted by
  any PBA$^{>0}$.
  \lncsqed
\end{proof}

\clearpage

\subsection{Proof for Theorem~\ref{thm:pca_to_pwa}}
\thmpcatopwa*
\begin{proof}
  We show the first statement. The second then follows by duality, i.e., we can
  interpret a PBA$^{=1}$ $\mc{A}$ recognizing $L$ as a PCA$^{>0}$ recognizing
  $\overline{L}$ and just apply the construction to get a PWA$^{>0}$ $\mc{B}$ for $\overline{L}$,
  such that $\overline{\mc{B}}$ (with inverted accepting and rejecting states) is a
  PWA$^{=1}$ for $L$. In the first statement the $\subseteq$ inclusion is trivial, hence
  we only need to show that $\mathbb{L}^{>0}(\PCoBA) \subseteq \mathbb{L}^{>0}(\PWA)$.

  Now let $\mc{A}=(Q,\Sigma,\delta,\mu_0,F)$ be a PCA$^{>0}$. We refer to the states in $F$
  as \emph{bad} states (since they occur only finitely often in accepting runs). Intuitively, the
  PWA$^{>0}$ $\mc{B}$ accepting the same language is constructed as follows.
  Take two copies of $\mc{A}$, a \emph{guess} copy and a \emph{verify} copy.
  Each transition in the guess copy is modified to go into the verify copy with probability
  $\frac{1}{2}$ and all transitions to copies of bad states in the verify copy are
  redirected to a rejecting sink.

  Formally, let $Q_g$, $Q_v$ be two copies of the states $Q$ and let $q^g$ and $q^v$
  denote the respective copy of $q\in Q$. The PWA $\mc{B} = (Q', \Sigma, \delta', \mu_0',
  F')$ is defined with $Q' := Q_g \cup Q_v\cup\{q_{rej}\}, \mu_0'(q^g) := \mu_0(q)$ for all
  $q^g \in Q_g$ and $0$ otherwise, $F' := Q_v$, and $\delta'$ defined as:
  \begin{itemize}
    \item $\delta'(p^g, x, q^g) = \delta'(p^g, x, q^v) = \frac{1}{2}\cdot \delta(p,x,q)$
    \item $\delta'(p^v, x, q^v) = \delta(p,x,q)$ \quad if $q\not\in F$
    \item $\delta'(p^v, x, q_{rej}) = 1 - \sum_{q \not\in F} \delta(p,x,q)$
  \end{itemize}

  Notice that we can write the set of accepting runs $\op{AccRuns}(\mc{A},w)$ on some word
  $w\in \Sigma^\omega$ as a countable union of disjoint sets $\bigcup_{i\geq 0}
  \op{goodFrom(i)}$, such that $\op{goodFrom(i)}$ contains the accepting runs where $i$ is
  the smallest time such that no state in $F$ is visited at times $\geq i$.

  Assume that $w \in L^{>0}(\mc{A})$. By $\sigma$-additivity, this implies
  $\op{Pr}(\op{AccRuns}(\mc{A},w)) = \sum_{i\geq 0} \op{Pr}(\op{goodFrom}(i)) > 0$ and
  hence there is an $i$ with $\op{Pr}(\op{goodFrom}(i)) > 0$. Let $Q_i \subseteq Q$
  be the set of states occupied by some run in $\op{goodFrom}(i)$ at time $i$. Clearly $Q_i$ is
  reached at time $i$ with positive probability and by definition the
  runs in $\op{goodFrom}(i)$ never see bad states after $i$. But then by construction,
  with positive probability some runs of $\mc{B}$ stay in the guess copy until time $i-1$ and
  reach the verify copy at time $i$ and then they proceed in the verify copy exactly as the
  runs $\op{goodFrom}(i)$ proceed after $i$ in $\mc{A}$. Hence, they never will visit states $q^v$
  which correspond to states $q\in F$ and thus forever stay in the verify copy (where all
  states are accepting) and therefore $w \in L^{>0}(\mc{B})$.

  The other direction is similar---if $w\in L^{>0}(\mc{B})$, then
  there exists some time $i$ such that runs of $\mc{B}$ reach the verify copy at $i$ and
  then with positive probability stay there, i.e., there is a subset $\op{goodFrom}(i)$ of
  those runs that has positive probability, such that the runs never visit the
  rejecting sink after reaching $i$. By construction, clearly the probability for
  corresponding runs in $\mc{A}$ is at least as large and hence $w \in L^{>0}(\mc{A})$.
  \lncsqed
\end{proof}

\clearpage

\subsection{Proof details for Theorem~\ref{thm:expressive_power_pwa}(2) }

In this section we show that $\mbb{L}^{>0}(\PBA)$ and $\mbb{L}^{>\lambda}(\PWA)$ are
incomparable, i.e., neither contains the other one.
One direction directly follows by Proposition~\ref{prp:th_pwa_nonreg_strong}, i.e., there are
languages recognized by PWA$^{>\lambda}$ that cannot be recognized with PBA$^{>0}$.

For the other direction, the following result characterizes the languages accepted
by weak automata under extremal semantics in the Borel hierarchy, from which the claim
will follow.
We do not introduce the details of this hierarchy here, but rather refer the reader not
familiar with these concepts to \cite{staiger1983finite} and
\cite{chadha2011randomization}. Notice that the sets we call $\Pi_2$ and $\Sigma_2$ (using
modern naming) are called $G_\delta$ and $F_\sigma$ there.

The result easily follows from an adaptation of \cite[Lemma 3.2]{chadha2011randomization}:
\begin{lemma}[Topological characterization]
  \label{lem:pwa_topo}
  If $\mc{A}$ is a PWA and $\lambda\in [0,1]$ a threshold, then
  $L^{\geq \lambda}(\mc{A})$ is a $\Pi_2$ set and $L^{> \lambda}(\mc{A})$ is a $\Sigma_2$ set.
\end{lemma}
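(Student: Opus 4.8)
The plan is to mimic the proof of \cite[Lemma 3.2]{chadha2011randomization}, but to exploit weakness in order to obtain a clean two-sided characterization that works for an arbitrary threshold, not only for $\lambda=1$. The key structural observation I would record first is that in a weak automaton every infinite run eventually remains inside a single SCC: the SCCs are partially ordered and finitely many, so a run can change its SCC only finitely often. Because $F$ is a union of SCCs, the trapping SCC is either entirely contained in $F$ or disjoint from it, so for \emph{every} run exactly one of the two events ``eventually always in $F$'' and ``eventually always outside $F$'' occurs, and (by weakness) the former coincides with Büchi acceptance. Consequently, writing $\mu(w):=\op{Pr}(\text{run on }w\text{ is accepting})$, I obtain the two complementary decompositions $\mu(w)=\sup_n s_n(w)$ and $1-\mu(w)=\sup_n \bar{s}_n(w)$, where $s_n(w):=\op{Pr}(q_m\in F\text{ for all }m\ge n)$ and $\bar{s}_n(w):=\op{Pr}(q_m\notin F\text{ for all }m\ge n)$; these equalities hold by continuity of the probability measure from below, since the underlying events increase in $n$.

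Next I would establish that each $s_n$ and each $\bar{s}_n$ is an \emph{upper semi-continuous} function of $w\in\Sigma^\omega$ (with the product topology). For this I write $s_n(w)=\inf_{M\ge n} c_{n,M}(w)$ with $c_{n,M}(w):=\op{Pr}(q_m\in F\text{ for }n\le m\le M)$. Each finite-horizon quantity $c_{n,M}$ depends only on the prefix $w_0\cdots w_{M-1}$, hence is locally constant and in particular continuous, and it is non-increasing in $M$; thus $s_n$, being an infimum of continuous functions, is upper semi-continuous, and the same argument applies verbatim to $\bar{s}_n$. For an upper semi-continuous $f$ the superlevel set $\{f\ge c\}$ is closed, whence $\{f>c\}=\bigcup_k\{f\ge c+\tfrac1k\}$ is $\Sigma_2$ and $\{f\le c\}=\bigcap_k\{f<c+\tfrac1k\}$ is $\Pi_2$.

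Finally I would assemble the result, using each decomposition for the bound it makes cheap. For the positive threshold, $\mu=\sup_n s_n$ gives $L^{>\lambda}(\mc{A})=\{w:\sup_n s_n(w)>\lambda\}=\bigcup_n\{s_n>\lambda\}=\bigcup_{n,k}\{s_n\ge\lambda+\tfrac1k\}$, a countable union of closed sets and therefore $\Sigma_2$ ($=F_\sigma$). For the non-strict threshold I pass to the complement via $1-\mu=\sup_n\bar{s}_n$: here $w\in L^{\ge\lambda}(\mc{A})$ iff $\sup_n\bar{s}_n(w)\le 1-\lambda$ iff $\bar{s}_n(w)\le 1-\lambda$ for all $n$, so $L^{\ge\lambda}(\mc{A})=\bigcap_n\{\bar{s}_n\le 1-\lambda\}=\bigcap_{n,k}\{\bar{s}_n<1-\lambda+\tfrac1k\}$, a countable intersection of open sets and therefore $\Pi_2$ ($=G_\delta$). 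The step I expect to require the most care is the interplay of semicontinuity with the correct use of weakness: one must verify that the two events really are complementary for \emph{every} run, so that $1-\mu=\sup_n\bar{s}_n$ carries no missing mass, and that each threshold is handed the decomposition that keeps it at level two — the naive choice (e.g. using $s_n$ for the $\ge$ case) would push the estimate to $\Pi_3$ or $\Sigma_3$.
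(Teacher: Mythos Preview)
Your argument is correct and is essentially the fleshed-out version of what the paper only sketches: both proofs hinge on the fact that in a weak automaton every run eventually settles in an SCC that is either wholly in $F$ or wholly outside it, and then express the (co-)acceptance probability as a monotone limit of cylinder probabilities to obtain the $F_\sigma$/$G_\delta$ structure.

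The one noteworthy difference is how the $\Pi_2$ half is obtained. The paper simply invokes \cite[Lemma~3.2]{chadha2011randomization}, which already shows $L^{\geq\lambda}(\mc{A})\in\Pi_2$ for \emph{arbitrary} PBA, so weakness is not needed there; weakness is used only for the $\Sigma_2$ half, by dualizing (rejection in a weak automaton is again an ``eventually always'' event). You instead prove both halves symmetrically from scratch, using weakness in both directions via the complementary decompositions $\mu=\sup_n s_n$ and $1-\mu=\sup_n\bar{s}_n$. Your route is more self-contained and makes the semicontinuity step explicit, while the paper's route yields the slightly stronger observation that the $\Pi_2$ bound for $L^{\geq\lambda}$ holds without weakness. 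Either way, the core idea and the Borel-level conclusion are the same.
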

\begin{proof}
  The first statement is implied by \cite[Lemma 3.2]{chadha2011randomization}, as
  $L^{\geq \lambda}(\mc{A})$ is a $\Pi_2$ set for any (even not weak) PBA. The second
  statement can be obtained for weak automata by a simple adaptation of this proof, by
  showing that the set of words rejected by some PWA with probability $\leq (1-\lambda)$ is
  a $\Pi_2$ set. The decomposition of paths into countable unions and intersections
  performed in the proof can be done in the same way, due to the fact that in weak
  automata a run is rejecting if it sees rejecting states infinitely often (which means
  that the run eventually stays in a rejecting SCC). But then clearly the complement of
  this set is the set of words that are accepted by $\mc{A}$ with probability $> \lambda$,
  which is exactly $L^{>\lambda}(\mc{A})$ and by definition is a $\Sigma_2$ set.
  \qed
\end{proof}

From Lemma~\ref{lem:pwa_topo} and the facts shown in \cite{chadha2011randomization} that
$\mbb{L}^{>0}(\PBA) = \op{BCl}(\mbb{L}^{=1}(\PBA))$ and $\mbb{L}^{=1}(\PBA) \subseteq \Pi_2$, we
conclude that PBA$^{>0}$ especially can recognize some languages in $\Pi_2$, whereas
PWA$^{>\lambda}$ can only recognize languages in $\Sigma_2$.

\vfill
{\small\medskip\noindent{\bf Open Access} This chapter is licensed under the terms of the Creative Commons\break Attribution 4.0 International License (\url{http://creativecommons.org/licenses/by/4.0/}), which permits use, sharing, adaptation, distribution and reproduction in any medium or format, as long as you give appropriate credit to the original author(s) and the source, provide a link to the Creative Commons license and indicate if changes were made.}
{\small \spaceskip .28em plus .1em minus .1em The images or other third party material in this chapter are included in the chapter's Creative Commons license, unless indicated otherwise in a credit line to the material.~If material is not included in the chapter's Creative Commons license and your intended\break use is not permitted by statutory regulation or exceeds the permitted use, you will need to obtain permission directly from the copyright holder.}
\medskip\noindent\includegraphics{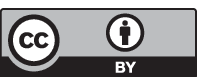}

\end{document}